\journal{Stochastic Processes and their Applications}
\newtheorem{theorem}{Theorem}[section]
\newtheorem{proposition}[theorem]{Proposition}
\newtheorem{remark}[theorem]{Remark}
\newtheorem{lemma}[theorem]{Lemma}
\newtheorem{definition}[theorem]{Definitions}
\newtheorem{corollary}[theorem]{Corollary}
\newcommand{\E}{\mathbb E}
\def \Rbrack {[\![}
\def \Lbrack {]\!]}
\def \Rbrack {[\![}
\def \Lbrack {]\!]}
\newcommand*\bigcdot{\mathpalette\bigcdot@{.5}}
\newcommand*\bigcdot@[2]{\mathbin{\vcenter{\hbox{\scalebox{#2}{$\m@th#1\bullet$}}}}}
\newcommand{\is}{\bigcdot }
\begin{document}

\begin{frontmatter}



\title{Explicit description of all deflators for market models under random horizon with applications to NFLVR}




\author[UofA]{{Tahir Choulli}\corref{mycorrespondingauthor}}
\ead{tchoulli@ualberta.ca}

\author[UofA]{Sina Yansori}
\address[UofA]{Mathematical and Statistical Sciences Dept.\\
University of Alberta, Edmonton, Canada}

\cortext[mycorrespondingauthor]{Corresponding author}

\begin{abstract} This paper considers an initial market model, specified by its underlying assets $S$ and its flow of information $\mathbb F$, and an arbitrary random time $\tau$ which might not be an $\mathbb F$-stopping time. As the death time and the default time (that $\tau$ might represent) can be seen when they occur only, the progressive enlargement of $\mathbb F$ with $\tau$ sounds tailor-fit for modelling the new flow of information $\mathbb G$ that incorporates both $\mathbb F$ and $\tau$. In this setting of informational market, the first principal goal resides in describing as explicitly as possible the set of all deflators for $(S^{\tau}, \mathbb G)$, while the second principal goal lies in addressing the No-Free-Lunch-with-Vanishing-Risk concept (NFLVR hereafter) for $(S^{\tau}, \mathbb G)$.  Besides this direct application to NFLVR,  the set of all deflators constitutes the dual set of all ``admissible" wealth processes for the stopped model $(S^{\tau},\mathbb G)$, and hence it is vital in many hedging and pricing related optimization problems. Thanks to the results of  Choulli et al. \cite{ChoulliDavelooseVanmaele},  on martingales classification and representation for progressive enlarged filtration, our two main goals are fully achieved in different versions, when the survival probability never vanishes. The results are illustrated on the two particular cases when $(S,\mathbb F)$ follows the jump-diffusion model and  the discrete-time model. 
\end{abstract}

\begin{keyword}
{Deflators, local martingale deflators, Random horizon, Progressive enlargement of filtration, semimartingale models, No-Free-Lunch-with-Vanishing-Risk (NFLVR)}
\end{keyword}
\end{frontmatter}

\section{Introduction} 
This paper considers an initial market model  represented by the pair $(S,\mathbb F)$, where $S$ represents the discounted stock prices for $d$-stocks, and $\mathbb F$ is the flow of ``public" information which is available to all agents. To this initial market model, we add a random time $\tau$ (i.e., a random variable with values in $[0,+\infty]$) that is revealed only at the terminal date. Mathematically speaking, this means that $\tau$ might not be an $\mathbb F$-stopping time. Thus, for modelling the new flow of information, we adopt the progressive enlargement of  $\mathbb F$ with $\tau$, that we denote throughout the paper by $\mathbb G$. Hence, our resulting informational market model is the pair $(S^{\tau},\mathbb G)$ \footnote{Throughout the paper, for any random time $\sigma$ and any process $X$, we denote by $X^{\sigma}$ the stopped process given by $X^{\sigma}_t:=X_{\sigma\wedge t},\ t\geq 0$.}. This information modelling allows us to apply our obtained results  to credit risk theory and life insurance (mortality and/or longevity risk), where the progressive enlargement of filtration sounds tailor-fit, and the initial enlargement of filtration --as in the insider trading framework-- is totally inadequate. In fact the death time of an agent can not be seen with certainty before its occurrence, and there is no single financial literature that models the information in the default of a firm $\tau$ as fully seen from the beginning as in the case of insider trading. For the case of initial enlargement and its application in insider trading problems, we refer the reader to \cite{Amendinger, Ankirchner} and the references therein to cite a few.\\

 For this new market model $(S^{\tau},\mathbb G)$, many challenging questions arise in finance (both theoretical and empirical) and mathematical finance. Most of these questions are still open problems nowadays and are essentially concerned with measuring the impact of $\tau$ on the financial and economical  concepts, theories, rules, models, methodologies, ...., et cetera.  Among these we cite the (consumption-based) capital asset pricing model (s), equilibrium, arbitrage theory, market's viability, the fundamental theorem of asset pricing, the optimal portfolios (e.g. the log-optimal portfolio, the num\'eraire portfolio and other types of portfolios to cite a few), utility maximization, the various pricing rules, ..., et cetera. The first fundamental question to all these aforementioned problems, lies in the impact of the random time on the market's viability and the corresponding no-arbitrage concept(s). In virtue of \cite{ChoulliDengMa}, see also \cite{KabanovKardarasSong,KardarasKaratzas} for related discussions, the market's viability in its various weakest form, the existence of the num\'eraire portfolio, and the no-unbounded-profit-with-bounded-risk (NUPBR for short) concept are equivalent and/or intimately related. In this spirit, there were an upsurge interest in studying first the effect of the random time on NUPBR in a series of papers, see \cite{AFK,ACDJ1,ACDJ3,ChoulliDeng2020,CD1,Song2016} for details. This very recent literature answers fully the question when NUPBR is altered for  $(S^{\tau},\mathbb G)$. Some of these papers, especially \cite{ACDJ1,ACDJ3,Song2016}, construct {\it examples of deflators} for special and very particular cases (such as when $(S,\mathbb F)$ is local martingale under the physical probability), while the following question remains open up to now. 
\begin{equation}\label{question1}
\mbox{How can we describe the set of all deflators for the model $(S^{\tau},\mathbb G)$?}\end{equation}
The importance of this set and its numerous roles in optimization problems intrinsic to financial problems sound clear and without reproach. Indeed, the set of deflators represents somehow the dual set of all ``admissible" wealth processes. No matter what is the optimization criterion, any optimal portfolio corresponds uniquely to an optimal deflator, and they are linked to each other via `` some duality form". Furthermore, in many (probably all) cases even when the utility is nice enough such as log utility,  it is more convenient, more efficient, and easier to solve a dual problem and describe the optimal deflator than getting the optimal portfolio directly. When considering the impact of  $\tau$ on optimal portfolio, we refer the reader to \cite{ChoulliSina2} for direct  application of the current paper. Besides these known applications, knowing the set of all deflators will allow us to address the NFLVR concept for the stopped model $(S^{\tau}, \mathbb G)$. Thus, under the assumption that the survival probability never vanishes, i.e., 
\begin{eqnarray}\label{Gpositive}
 P\left(\tau>t\ \big|\ {\cal F}_t\right)>0\quad P\mbox{-a.s.}\quad \mbox{for all}\quad t\geq 0,\end{eqnarray}
we answer (\ref{question1}) explicitly and completely. For this question, the assumption (\ref{Gpositive}) can be droped at some expenses, as it will be discussed in Section \ref{SectionOfDeflators} after Theorem \ref{GeneralDeflators}. On the one hand, under (\ref{Gpositive}), we prove that NFLVR holds for $(S^{T\wedge\tau}, \mathbb G)$ when it holds for $(S^T, \mathbb F)$, for any $T\in (0,+\infty)$.  This answers fully, under (\ref{Gpositive}) and for bounded investment's horizon, the question of how $\tau$ impacts NFLVR that has been around since a while. On the other hand, we give sufficient conditions on $\tau$ other than  (\ref{Gpositive}) such that $(S^{\tau}, \mathbb G)$ fulfills NFLVR as soon as $(S, \mathbb F)$ does. It is important to mention that in \cite{CJN}, see also \cite[Lemma 3.1]{ACDJ0}, the authors address this question for the immersion case (i.e., the case when every $\mathbb F$-martingale is a $\mathbb G$-martingale), while in \cite{Fontana} the authors proved that NFLVR always fails when $\mathbb F$ is a Brownian filtration and $\tau$ is an honest time avoiding $\mathbb F$-stopping times. In \cite[Lemma 3.3]{ACDJ0}, it is proved that  $(S, \mathbb G)$ satisfies NFLVR when $(\tau, \mathbb F,P)$ fulfills the  positive density hypothesis \footnote{i.e., there is a positive two-parameterized and measurable process $\alpha(t,s)$ such that $P(\tau>u|{\cal F}_t)=\int_u^{\infty} \alpha(t, s)dP(\tau\leq s),\ t\geq u\geq 0$ and $dP(\tau\leq t)=f(t)dt$} and $S$ is an $(\mathbb F, P)$-martingale. For more models for $(S, \mathbb F,\tau)$ for which the NFLVR stability holds or fails, we refer to \cite{ACDJ0}. \\

This paper contains four sections including the current one. Section  \ref{section2} presents the mathematical model and its preliminaries. Section \ref{SectionOfDeflators} has three subsections. The first subsection states the main results on the explicit parametrization of all deflators for $(S^{\tau},\mathbb G)$, while the second subsection proves these results. The last subsection illustrates these results on particular cases. Section \ref{Section4NFLVR} addresses the NFLVR for $(S^{\tau}, \mathbb G)$ and its variant. Furthermore, the paper contains an appendix where some proofs are relegated, and  some useful results (new and existing)  are detailed. 

\section{ The financial setting and preliminaries}\label{section2}
This section defines the notations, the financial and the mathematical concepts that the paper addresses or uses, the mathematical model that we focus on,  and some useful existing results. 
Throughout the paper, we consider the complete probability space  $\left(\Omega, {\cal F}, P\right)$. By  ${\mathbb H}$ we denote an arbitrary  filtration that satisfies the usual conditions of completeness and right continuity.  For any process $X$, the $\mathbb H$-optional projection and  the $\mathbb H$-predictable projection, when they exist, will be denoted by $^{o,\mathbb H}X$  and $^{p,\mathbb H}X$ respectively. The set ${\cal M}(\mathbb H, Q)$ (respectively  ${\cal M}^{(q)}(\mathbb H, Q)$ for $q\in (1,+\infty)$) denotes the set of all $\mathbb H$-martingales (respectively $q$-integrable martingales) under $Q$, while ${\cal A}(\mathbb H, Q)$ denotes the set of all $\mathbb H$-optional processes that are right-continuous with left-limits (RCLL for short) with integrable variation under $Q$. When $Q=P$, we simply omit the probability for the sake of simple notations.  For an $\mathbb H$-semimartingale $X$, by $L(X,\mathbb H)$ we denote the set of $\mathbb H$-predictable processes that are $X$-integrable in the semimartingale sense.  For $\varphi\in L(X,\mathbb H)$, the resulting integral of $\varphi$ with respect to $X$ is denoted by $\varphi\is X$. For $\mathbb H$-local martingale $M$, we denote by $L^1_{loc}(M,\mathbb H)$ the set of $\mathbb H$-predictable processes $\varphi$ that are $X$-integrable and the resulting integral $\varphi\is M$ is an $\mathbb H$-local martingale. If ${\cal C}(\mathbb H)$ is a set of processes that are adapted to $\mathbb H$, then ${\cal C}_{loc}(\mathbb H)$ is the set of processes, $X$, for which there exists a sequence of $\mathbb H$-stopping times, $(T_n)_{n\geq 1}$, that increases to infinity and $X^{T_n}$ belongs to ${\cal C}(\mathbb H)$, for each $n\geq 1$.  The $\mathbb H$-dual optional projection and the $\mathbb H$-dual predictable projection of a process $V$ with finite variation, when they exist, will be denoted by  $V^{o,\mathbb H}$  and $V^{p,\mathbb H}$ respectively. For any real-valued $\mathbb H$-semimartingale, $L$, we denote by ${\cal E}(L)$ the Dol\'eans-Dade (stochastic) exponential. It is the unique solution to the stochastic differential equation $dX=X_{-}dL,\ X_0=1,$  and is given by
\begin{eqnarray}\label{DDequation}
 {\cal E}_t(L)=\exp\left(L_t-L_0-{1\over{2}}\langle L^c\rangle_t\right)\prod_{0<s\leq t}(1+\Delta L_s)e^{-\Delta L_s}.\end{eqnarray}
Below, we recall the mathematical definition of deflator, where we distinguish the cases of local martingale deflators and deflators.
\begin{definition}\label{DeflatorDefinition} Consider the model $(X, \mathbb H, Q)$, where $\mathbb H$ is a filtration, $Q$ is a probability, and $X$ is a $(Q,\mathbb H)$-semimartingale. Let $Z$ be a process.\\
{\rm{(a)}}   We call $Z$ a local martingale deflator for $(X,Q,\mathbb H)$ if $Z>0$ and there exists a real-valued and $\mathbb H$-predictable process $\varphi$ such that $0<\varphi\leq 1$ and both $Z$  and $Z(\varphi\is X)$ are $\mathbb H$-local martingales under $Q$.  Throughout the paper, the set of these local martingale deflators will be denoted by ${\cal Z}_{loc}(X,Q,\mathbb H)$.\\
{\rm{(b)}}   We call $Z$ a deflator for $(X,Q,\mathbb H)$ if $Z>0$ and $Z{\cal E}(\varphi\is X)$ is an $\mathbb H$-supermartingale under $Q$, for any $\varphi\in L(X, \mathbb H)$ such that $\varphi\Delta X\geq -1$. The set of all deflators will be denoted by ${\cal D}(X,Q,\mathbb H)$. When $Q=P$, for the sake of simplicity, we simply omit the probability in notations and terminology.
\end{definition}
 The rest of this section has two subsections, where we present our mathematical model and recall a martingale representation result respectively. 
\subsection{The mathematical model}
Throughout the paper, on $\left(\Omega, {\cal F}, P\right)$ we consider $\mathbb F:=({\cal F}_t)_{t\geq 0}$ a filtration, satisfying the usual conditions of right continuity and completeness, which models  the public flow of information. Furthermore, we suppose given an $\mathbb F$-semimartingale, $S$, that represents the discounted price process of risky assets. In addition to this initial market model (i.e., $(S, \mathbb F)$), we consider a random time $\tau$, that might represent the death time of an agent or the default time of a firm, and hence it might not be an $\mathbb F$-stopping time. In the rest of the paper, we will use the pair $(D,\mathbb G)$ given by
\begin{equation}\label{processD}
D:=I_{\Rbrack\tau,+\infty\Rbrack},\ \mathbb G:=({\cal G}_t)_{t\geq 0},\ {\cal G}_t:={\cal G}^0_{t+}\ \mbox{with} \ {\cal G}_t^0:={\cal F}_t\vee\sigma\left(D_s,\ s\leq t\right).
\end{equation}
The agent endowed with $\mathbb F$, can only get information about $\tau$ via the survival probabilities $G$ and $\widetilde G$, called Az\'ema supermartingales, and given by
\begin{eqnarray}\label{GGtilde}
G_t :=^{o,\mathbb F}(I_{\Rbrack0,\tau\Rbrack})_t= P(\tau > t | {\cal F}_t) \ \mbox{ and } \ \widetilde{G}_t :=^{o,\mathbb F}(I_{\Rbrack0,\tau\Lbrack})_t= P(\tau \ge t | {\cal F}_t).\end{eqnarray}
The process
\begin{equation} \label{processm}
m := G + D^{o,\mathbb F},
\end{equation}
is a BMO $\mathbb F$-martingale, and thanks to  \cite[Theorem 3]{ACJ} and \cite[Theorem 2.3 and Theorem 2.11]{ChoulliDavelooseVanmaele}, we recall 
\begin{theorem}\label{Toperator} The following assertions hold.\\
{\rm{(a)}} For any $M\in{\cal M}_{loc}(\mathbb F)$,  the process
\begin{equation} \label{processMhat}
{\cal T}(M) := M^\tau -{\widetilde{G}}^{-1} I_{\Lbrack 0,\tau\Lbrack} \is [M,m] +  I_{\Lbrack 0,\tau\Lbrack} \is\Big(\sum \Delta M I_{\{\widetilde G=0<G_{-}\}}\Big)^{p,\mathbb F}\end{equation}
 is a $\mathbb G$-local martingale.\\
 {\rm{(b)}}  The process 
\begin{equation} \label{processNG}
N^{\mathbb G}:=D - \widetilde{G}^{-1} I_{\Lbrack 0,\tau\Lbrack} \is D^{o,\mathbb  F}
\end{equation}
is a $\mathbb G$-martingale with integrable variation. Moreover, $H\is N^{\mathbb G}$ is a $\mathbb G$-local martingale with locally integrable variation for any $H$ belonging to
\begin{equation} \label{SpaceLNG}
{\mathcal{I}}^o_{loc}(N^{\mathbb G},\mathbb G) := \Big\{K\in \mathcal{O}(\mathbb F)\ \ \big|\quad \vert{K}\vert G{\widetilde G}^{-1} I_{\{\widetilde{G}>0\}}\is D\in{\cal A}_{loc}(\mathbb G)\Big\}.
\end{equation}
\end{theorem}
The following lemma discusses some properties (new and old), about the parameters of $\tau$ (i.e., $G$ and $D^{o,\mathbb F}$), and is very useful throughout the paper.
 \begin{lemma}\label{Gdecomposition} The following assertions hold.\\
{\rm{(a)}}  $G$ is a positive process (i.e., (\ref{Gpositive}) holds) if and only if $ G_{-}$ is a positive process if and only if  $\widetilde G$ is a positive process.  Furthermore, we have 
\begin{eqnarray}\label{Rzero}
R_0=\inf\{t\geq 0: G_t=0\}=\inf\{t> 0: G_{t-}=0\}=\inf\{t\geq 0:{\widetilde G}_t=0\}.\end{eqnarray}
{\rm{(b)}}  If $G$ is positive, then we always have \footnote{This multiplicative decomposition differs from that of the literature, see for  instance \cite[Theorem 8.21, Chapter II]{JS03}, as the process ${\cal E}(-{\widetilde G}^{-1}\is D^{o,\mathbb F})$ is not predictable in general.}
\begin{eqnarray}\label{MultiDecompo4G}
G=G_0{\cal E}(G_{-}^{-1}\is m){\cal E}(-{\widetilde G}^{-1}\is D^{o,\mathbb F}).\end{eqnarray}
{\rm{(c)}}  Suppose $G>0$. The three processes  $G$, $ {\cal E}(G_{-}^{-1}\is m)$ and ${\cal E}(-{\widetilde G}^{-1}\is D^{o,\mathbb F})$  have their limits at infinity, $P$-almost surely,  and
 \begin{eqnarray}
G_{\infty}:=\lim_{t\rightarrow\infty}G_t=P(\tau=\infty\ \big| {\cal F}_{\infty-})= G_0{\cal E}_{\infty}({1\over{G_{-}}}\is m){\cal E}_{\infty}(-{1\over{\widetilde G}}\is D^{o,\mathbb F}), \label{Ginfinity}\end{eqnarray} 
where ${\cal E}_{\infty}(G_{-}^{-1}\is m):=\displaystyle\lim_{t\rightarrow\infty}{\cal E}_t(G_{-}^{-1}\is m)$ and ${\cal E}_{\infty}(-{\widetilde G}^{-1}\is D^{o,\mathbb F}):=\displaystyle\lim_{t\rightarrow\infty}{\cal E}_t(-{\widetilde G}^{-1}\is D^{o,\mathbb F}).$\\
{\rm{(d)}}  Suppose $G>0$. Then we have
\begin{eqnarray}\label{inclusion4G}
\{\tau=\infty\}\subset\{G_{\infty}>0\}=\{{\cal E}_{\infty}({1\over{G_{-}}}\is m)>0\}\cap\{{\cal E}_{\infty}(-{1\over{\widetilde G}}\is D^{o,\mathbb F})>0\}.\end{eqnarray}
{\rm{(e)}} Suppose $G>0$. There is equivalence between $\tau<\infty$ $P$-a.s., and 
\begin{eqnarray}\label{Exhausion}
\Omega=\biggl\{ {\cal E}_{\infty}(-{\widetilde G}^{-1}\is D^{o,\mathbb F})=0\biggr\}\bigcup\biggl\{  {\cal E}_{\infty}(G_{-}^{-1}\is m)=0\biggr\}.\end{eqnarray}
\end{lemma}
The poof of the lemma is relegated to Appendix \ref{AppB}. Some useful and immediate consequences of Lemma \ref{Gdecomposition}-(a), about the ${\cal T}$ransformation operator ${\cal T}$ \footnote{This operator can be defined, more generally, on the vector space of $\mathbb F$-semimartingales ${\cal S}(\mathbb F)$, and it satisfies ${\cal T}({\cal M}_{loc}(\mathbb F))\subset{\cal M}_{loc}(\mathbb G)$ as Theorem \ref{Toperator}-(a) claims. However, in general, ${\cal T}(M)$ is not the local martingale part in the $\mathbb G$-canonical decomposition of $M^{\tau}$.} defined in (\ref{processMhat}), are given in the following remark. 
\begin{remark}\label{Remarks4T(M)}Suppose that $G>0$. Then the following assertions hold.\\
{\rm{(a)}} For any $\mathbb F$-semimartingale $X$, we get 
\begin{eqnarray}\label{T(X)}
{\cal T}(X)=X^{\tau}-{\widetilde G}^{-1}I_{\Lbrack0,\tau\Lbrack}\is [X, m].\end{eqnarray}
{\rm{(b)}} For a pair of $\mathbb F$-semimartingales $(X, Y)$, due to $\Delta m=\widetilde G-G_{-}$, we have 
\begin{eqnarray}\label{[X, T(Y)]}
[{\cal T}(X), Y]=[X, {\cal T}(Y)]={{G_{-}}\over{\widetilde G}}\is [X, Y]^{\tau},\quad \Delta {\cal T}(X)={{G_{-}}\over{\widetilde G}}\Delta X I_{\Lbrack0,\tau\Lbrack}.\end{eqnarray}
\end{remark}
For these properties in general (i.e., without the assumption $G>0$) and other related results and their applications as well, we refer the reader to \cite{ChoulliDeng2020}.
 For any $q\in [1,+\infty)$ and a $\sigma$-algebra ${\cal H}$ on $\Omega\times [0,+\infty)$, we define
\begin{equation}\label{L1(PandD)Local}
L^q\left({\cal H}, P\otimes dD\right):=\left\{ X\ {\cal H}\mbox{-measurable}\ \bigg|\ \E[\vert X_{\tau}\vert^q I_{\{\tau<+\infty\}}]<+\infty\right\}.\end{equation}
Throughout the paper, on $\Omega\times [0,+\infty)$, we consider the $\mathbb F$-optional  $\sigma$-field  denoted by ${\cal O}(\mathbb F)$ and  the $\mathbb F$-progressive  $\sigma$-field denoted by $\mbox{Prog}(\mathbb F)$ (i.e., a process $X$ is said to be $\mathbb F$-progresssive if $X$, as  map on $\Omega\times [0,t]$, is ${\cal F}_t\otimes{\cal B}(\mathbb R)$-measurable, for any $t\in (0,+\infty)$, where ${\cal B}(\mathbb R)$ is the Borel $\sigma$-field on $\mathbb R$). 
\subsection{ A martingale representation theorem under $\mathbb G$: Choulli et al. \cite{ChoulliDavelooseVanmaele} }
 In this subsection, we extend a version of the martingale representation result of Choulli et al. \cite{ChoulliDavelooseVanmaele}, see \cite[Theorems 2.17, 2.20, 2.21]{ChoulliDavelooseVanmaele}, to the case of $\mathbb G$-local martingales instead of martingales. This slight extension becomes possible due to the assumption $G>0$.
\begin{theorem}\label{theo4MartingaleDecomposGeneral} Suppose that $G>0$.  Then for any $\mathbb G$-local martingale  $M^{\mathbb G}$,  there exists a unique triplet $(M^{\mathbb F},\varphi^{(o)}, \varphi^{(pr)}) $ satisfying the following properties: $M^{\mathbb F}\in {\cal M}_{0,loc}(\mathbb F)$, $\varphi^{(o)}\in{\cal I}^o_{loc}\left(N^{\mathbb G},\mathbb G\right)$, $\varphi^{(pr)}\in  L^1_{loc}\left({\rm{Prog}}(\mathbb F), P\otimes dD\right)$,   
 \begin{eqnarray}\label{Condition1}
\E\left[\varphi^{(pr)}_{\tau}\ \big|\ {\cal F}_{\tau}\right]I_{\{\tau<+\infty\}}=0,\ \ P\mbox{-a.s.},
\end{eqnarray}
and
\begin{equation}\label{MartingaleDecomposGeneral}
\left(M^{\mathbb G}\right)^{\tau}=M^{\mathbb G}_0+G_{-}^{-2}\is{\cal T}( M^{\mathbb F})
+\varphi^{(o)}\is N^{\mathbb G}+\varphi^{(pr)}\is D.\end{equation}
\end{theorem}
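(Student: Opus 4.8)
The plan is to deduce Theorem~\ref{theo4MartingaleDecomposGeneral} from Theorem~\ref{ChoulliDavelooseVanmaele}-(b) by a localization-and-pasting argument, so that the only genuinely new ingredient is the passage from $\mathbb G$-martingales to $\mathbb G$-local martingales. We may assume $M^{\mathbb G}_0=0$ (replace $M^{\mathbb G}$ by $M^{\mathbb G}-M^{\mathbb G}_0$). Since $M^{\mathbb G}$ is a $\mathbb G$-local martingale, there is a nondecreasing sequence $(T_n)_{n\ge1}$ of $\mathbb G$-stopping times with $T_n\uparrow+\infty$ such that $(M^{\mathbb G})^{T_n}$ is a uniformly integrable $\mathbb G$-martingale for each $n$. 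Applying Theorem~\ref{ChoulliDavelooseVanmaele}-(b) to $(M^{\mathbb G})^{T_n}$ produces, for every $n$, a unique triplet $(M^{\mathbb F,n},\varphi^{(o),n},\varphi^{(pr),n})\in{\cal M}_{0,loc}(\mathbb F)\times{\cal I}^o_{loc}(N^{\mathbb G},\mathbb G)\times L^1_{loc}({\rm Prog}(\mathbb F),P\otimes D)$ obeying $\E[\varphi^{(pr),n}_\tau\,|\,{\cal F}_\tau]I_{\{\tau<+\infty\}}=0$ and
\begin{equation*}
(M^{\mathbb G})^{T_n\wedge\tau}=G_{-}^{-2}I_{\Lbrack 0,\tau\Lbrack}\is{\cal T}(M^{\mathbb F,n})+\varphi^{(o),n}\is N^{\mathbb G}+\varphi^{(pr),n}\is D .
\end{equation*}

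Next I replace each $\mathbb G$-stopping time $T_n$ by an $\mathbb F$-stopping time and check compatibility of the triplets. By the standard property of the progressive enlargement $\mathbb G$, each $T_n$ admits an $\mathbb F$-stopping time $R_n$ with $T_n\wedge\tau=R_n\wedge\tau$; replacing $R_n$ by $\max_{k\le n}R_k$ we may assume $(R_n)$ nondecreasing (this keeps $R_n\wedge\tau=T_n\wedge\tau$ since $(T_n)$ is nondecreasing). From $T_n\wedge\tau=R_n\wedge\tau$ one gets $I_{\Lbrack 0,\tau\Lbrack}I_{\Rbrack 0,T_n\Lbrack}=I_{\Lbrack 0,\tau\Lbrack}I_{\Rbrack 0,R_n\Lbrack}$ and $I_{\{\tau\le T_n\}}=I_{\{\tau\le R_n\}}$, while $\{\tau\le R_n\}\in{\cal F}_\tau$ because $I_{\{\tau\le R_n\}}=Z_\tau$ for the $\mathbb F$-optional process $Z=I_{\Rbrack 0,R_n\Lbrack}$. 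Using these, the elementary identity ${\cal T}(L)^{R_n}={\cal T}(L^{R_n})$ for $L\in{\cal M}_{0,loc}(\mathbb F)$ (immediate from \eqref{processMhat}, since $[\cdot,m]$, $(\cdot)^{p,\mathbb F}$ and the sum over jumps commute with stopping at an $\mathbb F$-stopping time), and the fact that $N^{\mathbb G}$ and $D$ carry their increments on $\Lbrack 0,\tau\Lbrack$, one stops the $(n{+}1)$-st decomposition at $T_n$ and recognizes the result as a decomposition of $(M^{\mathbb G})^{T_n\wedge\tau}$ of exactly the form in Theorem~\ref{ChoulliDavelooseVanmaele}-(b), with triplet $\bigl((M^{\mathbb F,n+1})^{R_n},\ \varphi^{(o),n+1}I_{\Rbrack 0,R_n\Lbrack},\ \varphi^{(pr),n+1}I_{\Rbrack 0,R_n\Lbrack}\bigr)$; this triplet lies in the right product space and still satisfies the orthogonality relation (again using $\{\tau\le R_n\}\in{\cal F}_\tau$). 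Uniqueness in Theorem~\ref{ChoulliDavelooseVanmaele}-(b) then forces $M^{\mathbb F,n}=(M^{\mathbb F,n+1})^{R_n}$, and $\varphi^{(o),n}=\varphi^{(o),n+1}$ on $\Rbrack 0,R_n\Lbrack$ up to an $N^{\mathbb G}$-null set, and $\varphi^{(pr),n}=\varphi^{(pr),n+1}$ on $\Rbrack 0,R_n\Lbrack$ up to a $D$-null set.

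It remains to glue the consistent families into global objects and pass to the limit. Put $R_\infty:=\sup_n R_n$, an $\mathbb F$-stopping time; from $R_n\wedge\tau=T_n\wedge\tau\uparrow\tau$ we get $R_\infty\wedge\tau=\tau$, hence $R_\infty\ge\tau$, and in fact on $\{\tau<+\infty\}$ one has $R_n\ge\tau$ for all $n$ large (depending on $\omega$), so the pasted objects are pathwise controlled on $\Lbrack 0,\tau\Lbrack$. Define $M^{\mathbb F}$, $\varphi^{(o)}$, $\varphi^{(pr)}$ to coincide with $M^{\mathbb F,n}$, $\varphi^{(o),n}$, $\varphi^{(pr),n}$ on $\Rbrack 0,R_n\Lbrack$, extending past $R_\infty$ by a constant, resp. by $0$; one then checks that $M^{\mathbb F}\in{\cal M}_{0,loc}(\mathbb F)$, $\varphi^{(o)}\in{\cal I}^o_{loc}(N^{\mathbb G},\mathbb G)$, $\varphi^{(pr)}\in L^1_{loc}({\rm Prog}(\mathbb F),P\otimes D)$ and \eqref{Condition1} hold --- using that only the restrictions to $\Lbrack 0,\tau\Lbrack$ of $M^{\mathbb F}$ (through ${\cal T}$) and of $\varphi^{(o)},\varphi^{(pr)}$ (through $N^{\mathbb G}$ and $D$, on which $\widetilde G>0$) enter the decomposition, together with a diagonal extraction producing a single sequence of $\mathbb F$-stopping times increasing to $+\infty$. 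Letting $n\to+\infty$ in the $n$-th decomposition --- each side is eventually constant in $n$ on $\{\tau<+\infty\}$ and converges on $\{\tau=+\infty\}$ because $(M^{\mathbb G})^{T_n}\to M^{\mathbb G}$ --- yields \eqref{MartingaleDecomposGeneral}. Uniqueness of the triplet follows by subtracting two such decompositions and running the same localization, which reduces it to uniqueness in Theorem~\ref{ChoulliDavelooseVanmaele}-(b) on each $\Rbrack 0,R_n\Lbrack$, i.e. on $\Lbrack 0,\tau\Lbrack$ --- exactly what enters \eqref{MartingaleDecomposGeneral}, $M^{\mathbb F}$ appearing only via $I_{\Lbrack 0,\tau\Lbrack}\is{\cal T}(M^{\mathbb F})$.

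I expect the pasting step to be the main obstacle: $T_n\uparrow+\infty$ in $\mathbb G$ does \emph{not} force the companion $\mathbb F$-stopping times $R_n$ to increase to $+\infty$ (one controls only $R_n\wedge\tau$), so care is needed both to realize the limiting $M^{\mathbb F}$ as a genuine $\mathbb F$-local martingale on all of $[0,+\infty)$ --- here the point $R_\infty\ge\tau$, so that $M^{\mathbb F}$ is already controlled up to and including $\tau$ while its behaviour afterwards is immaterial and can be chosen to be a local martingale, is what rescues the construction --- and to manufacture the $\mathbb F$-localizing sequences demanded by ${\cal I}^o_{loc}(N^{\mathbb G},\mathbb G)$ and $L^1_{loc}({\rm Prog}(\mathbb F),P\otimes D)$. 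This is precisely where the standing hypothesis $G>0$ and the fine structure of $\mathbb G$ enter.
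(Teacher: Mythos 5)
Your overall strategy---localize, apply Theorem~\ref{ChoulliDavelooseVanmaele}-(b) on each localized piece, paste---is the same as the paper's, and the consistency analysis via uniqueness is sound as far as it goes. The genuine gap is exactly the one you flag at the end: the sequence $(R_n)$ of $\mathbb F$-stopping times with $T_n\wedge\tau=R_n\wedge\tau$ need not tend to $+\infty$, so your pasting defines $M^{\mathbb F}$ only on the stochastic interval $\Lbrack 0,R_\infty\Lbrack$. The remark that $R_\infty\ge\tau$ and that only $M^{\mathbb F}$ restricted to $\Lbrack 0,\tau\Lbrack$ enters \eqref{MartingaleDecomposGeneral} does not close this gap: the statement of the theorem requires $M^{\mathbb F}\in{\cal M}_{0,loc}(\mathbb F)$ as a global object, i.e.\ an $\mathbb F$-localizing sequence tending to $+\infty$, and ``extend by a constant past $R_\infty$'' does not produce one. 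In particular, $\{R_\infty<+\infty\}$ is where an $\mathbb F$-local-martingale structure must be supplied, and $\{\tau<+\infty\}$ is not ${\cal F}_{\infty}$-measurable, so the pathwise control on $\Lbrack 0,\tau\Lbrack$ cannot be turned into the needed $\mathbb F$-measurable localization. The same obstruction affects the membership $\varphi^{(o)}\in{\cal I}^o_{loc}(N^{\mathbb G},\mathbb G)$ and $\varphi^{(pr)}\in L^1_{loc}({\rm Prog}(\mathbb F),P\otimes D)$, both of which are defined through localization along $\mathbb F$-stopping times increasing to $+\infty$.

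The paper removes the obstruction at the very first step by citing Proposition~B.2-(b) of \cite{ACDJ1}: under the standing hypothesis $G>0$ one can choose the companion $\mathbb F$-stopping times $\sigma_n$ with $T_n\wedge\tau=\sigma_n\wedge\tau$ \emph{and} $\sigma_n\uparrow+\infty$. That single fact is what your argument is missing; once it is granted, your pasting becomes global and the rest of your proof goes through. A minor stylistic difference: the paper applies Theorem~\ref{ChoulliDavelooseVanmaele}-(b) to the increments $(M^{\mathbb G})^{T_n}-(M^{\mathbb G})^{T_{n-1}}$ and then simply sums the resulting triplets over $\Lbrack\sigma_{n-1},\sigma_n\Lbrack$, which makes the pieces disjointly supported by construction; your version via $(M^{\mathbb G})^{T_n}$ and uniqueness is correct but carries an extra consistency check.
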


\begin{proof} Let  $M^{\mathbb G}\in {\cal M}_{0,loc}(\mathbb G)$, then there exists a sequence of $\mathbb G$-stopping times that increases to infinity such that $(M^{\mathbb G})^{T_n}$ is a $\mathbb G$-martingale.  On the one hand, due to $G>0$ and \cite[Proposition B.2-(b)]{ACDJ1}, we deduce the existence of $\mathbb F$-stopping times $(\sigma_n)_n$ that increases to infinity and $T_n\wedge\tau=\sigma_n\wedge\tau$ for any $n\geq 1$. On the other hand, by applying \cite[Theorem 2.21]{ChoulliDavelooseVanmaele} to each $(M^{\mathbb G})^{T_n}-(M^{\mathbb G})^{ T_{n-1}}$, we deduce the existence of a unique triplet $(M^{\mathbb F,n},\varphi^{(o,n)}, \varphi^{(pr,n)}) $ belonging to $ {\cal M}_{0,loc}(\mathbb F)\times {\cal I}^o_{loc}\left(N^{\mathbb G},\mathbb G\right)\times  L^1_{loc}\left({\rm{Prog}}(\mathbb F), P\otimes dD\right)$  and satisfying  
$$
\E\left[\varphi^{(pr,n)}_{\tau}\ \big|\ {\cal F}_{\tau}\right]I_{\{\tau<+\infty\}}=0,\ \ P\mbox{-a.s.},
$$
and
$$
(M^{\mathbb G})^{\tau\wedge T_n}-(M^{\mathbb G})^{\tau\wedge T_{n-1}}=G_{-}^{-2}\is{\cal T}( M^{\mathbb F,n})
+\varphi^{(o,n)}\is N^{\mathbb G}+\varphi^{(pr,n)}\is D.$$
Then notice that $(M^{\mathbb G})^{\tau}-M^{\mathbb G}_0=\sum_{n\geq 1} ((M^{\mathbb G})^{\tau\wedge T_n}-(M^{\mathbb G})^{\tau\wedge T_{n-1}})$, and put
\begin{eqnarray*}
&&\sigma_0:=0,\quad \varphi^{(o)}:=\sum_{n\geq 1} I_{\Lbrack \sigma_{n-1},\sigma_n\Lbrack} \varphi^{(o,n)},\  \varphi^{(pr)}:=\sum_{n\geq 1} I_{\Lbrack \sigma_{n-1},\sigma_n\Lbrack} \varphi^{(pr,n)},\\
&&\mbox{and}\quad M^{\mathbb F}:=\sum_{n\geq 1} I_{\Lbrack \sigma_{n-1},\sigma_n\Lbrack}\is M^{\mathbb F,n}.\end{eqnarray*}
This ends the proof of the theorem.\end{proof}
We end this section by  the following lemma. 
\begin{lemma} Let $\sigma$ be an $\mathbb H$-stopping time. $Z$ is a deflator for $(X^{\sigma},\mathbb H)$ if and only if there exists a pair of processes $(K_1, K_2)$ such that $K_1=(K_1)^{\sigma}$, ${\cal E}(K_1)$ is a deflator for  $(X^{\sigma},\mathbb H)$, $K_2$ is an $\mathbb H$-local supermartingale satisfying $(K_2)^{\sigma}\equiv 0$, $\Delta K_2>-1$, and $Z={\cal E}(K_1+K_2)={\cal E}(K_1){\cal E}(K_2)$.
\end{lemma}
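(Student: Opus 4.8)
The plan is to pass everything through stochastic logarithms. After normalizing $Z_0=1$ (harmless), write $Z={\cal E}(K)$ where $K:=(1/Z_-)\is Z$ is the unique semimartingale with $K_0=0$, $\Delta K>-1$ and ${\cal E}(K)=Z$. Taking $\varphi\equiv 0$ in Definition \ref{DeflatorDefinition}(b) shows that a deflator $Z$ is in particular a positive $\mathbb H$-supermartingale, hence a special semimartingale; writing its canonical decomposition $Z=Z_0+m_Z-a_Z$ ($m_Z$ local martingale, $a_Z$ predictable nondecreasing) and dividing by the locally bounded predictable process $Z_-^{-1}$ exhibits $K$ as an $\mathbb H$-local supermartingale, say $K=M-A$ with $M\in{\cal M}_{0,loc}(\mathbb H)$ and $A$ predictable nondecreasing, $A_0=0$. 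Now simply set $K_1:=K^{\sigma}$ and $K_2:=K-K^{\sigma}$. Then $K_1=(K_1)^{\sigma}$ and $(K_2)^{\sigma}\equiv 0$ by construction; $K_2$ has vanishing value and left limit at $\sigma$, so $\Delta K_2$ equals $0$ off $\Rbrack 0,\sigma\Lbrack$ and vanishes on it, whence $\Delta K_2>-1$; and $K_2=(M-M^{\sigma})-(A-A^{\sigma})=(M-M^{\sigma})-I_{\Lbrack\sigma,+\infty\Lbrack}\is A$ exhibits $K_2$ as an $\mathbb H$-local supermartingale. Finally $[K_1,K_2]=[K^{\sigma},K]-[K^{\sigma},K^{\sigma}]=[K,K]^{\sigma}-[K,K]^{\sigma}=0$, so Yor's formula gives ${\cal E}(K_1){\cal E}(K_2)={\cal E}(K_1+K_2)={\cal E}(K)=Z$.

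The heart of the argument is a single observation: if $\mathcal A$ is a nonnegative $\mathbb H$-supermartingale with $\mathcal A=\mathcal A^{\sigma}$ and $\mathcal B$ a nonnegative $\mathbb H$-supermartingale with $\mathcal B^{\sigma}\equiv 1$, then $\mathcal A\mathcal B$ is an $\mathbb H$-supermartingale. Indeed $\mathcal B$ carries no jump at $\sigma$ and its continuous martingale part vanishes on $\Rbrack 0,\sigma\Lbrack$, while $\mathcal A$ is frozen after $\sigma$, so $[\mathcal A,\mathcal B]=0$ and the product formula gives $d(\mathcal A\mathcal B)=\mathcal A_-\,d\mathcal B+\mathcal B_-\,d\mathcal A$; substituting the Doob--Meyer decompositions of $\mathcal A$ and $\mathcal B$ and using that $\mathcal A_-,\mathcal B_-$ are nonnegative and locally bounded shows $\mathcal A\mathcal B$ is a nonnegative $\mathbb H$-local supermartingale, hence a supermartingale. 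To finish the ``only if'' part, fix $\varphi\in L(X^{\sigma},\mathbb H)$ with $\varphi\Delta X^{\sigma}\ge -1$; then ${\cal E}(\varphi\is X^{\sigma})$ is automatically stopped at $\sigma$, ${\cal E}(K_1)={\cal E}(K^{\sigma})$ is stopped at $\sigma$, and since $Z={\cal E}(K_1){\cal E}(K_2)$ with ${\cal E}(K_2)^{\sigma}={\cal E}((K_2)^{\sigma})=1$ we get ${\cal E}(K_1){\cal E}(\varphi\is X^{\sigma})=\bigl(Z{\cal E}(\varphi\is X^{\sigma})\bigr)^{\sigma}$, a stopped supermartingale and hence a supermartingale; as ${\cal E}(K_1)>0$, this says exactly ${\cal E}(K_1)\in{\cal D}(X^{\sigma},\mathbb H)$.

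For the ``if'' part, let $(K_1,K_2)$ have the stated properties and put $Z:={\cal E}(K_1+K_2)$. Since $K_1$ is stopped at $\sigma$ while $K_2$ vanishes on $\Rbrack 0,\sigma\Lbrack$ (disjoint jump supports, orthogonal continuous martingale parts) we again have $[K_1,K_2]=0$ and $Z={\cal E}(K_1){\cal E}(K_2)>0$, using $\Delta K_1>-1$ (from ${\cal E}(K_1)>0$) and $\Delta K_2>-1$. For $\varphi$ as above, write $Z{\cal E}(\varphi\is X^{\sigma})=\mathcal A\,\mathcal B$ with $\mathcal A:={\cal E}(K_1){\cal E}(\varphi\is X^{\sigma})$ and $\mathcal B:={\cal E}(K_2)$: then $\mathcal A$ is a nonnegative $\mathbb H$-supermartingale (because ${\cal E}(K_1)\in{\cal D}(X^{\sigma},\mathbb H)$) stopped at $\sigma$, and $\mathcal B$ is a nonnegative $\mathbb H$-supermartingale with $\mathcal B^{\sigma}\equiv 1$, so the observation yields that $\mathcal A\mathcal B$ is an $\mathbb H$-supermartingale, i.e. $Z\in{\cal D}(X^{\sigma},\mathbb H)$. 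Uniqueness is immediate: $K_1+K_2$ must be the stochastic logarithm $K$ of $Z$, and stopping at $\sigma$ together with $K_1=(K_1)^{\sigma}$ and $(K_2)^{\sigma}\equiv 0$ forces $K_1=K^{\sigma}$ and $K_2=K-K^{\sigma}$.

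The step I expect to be the main obstacle is the product observation: a product of two supermartingales need not be a supermartingale, and one must genuinely use both the time separation between $\mathcal A$ (living on $\Rbrack 0,\sigma\Lbrack$) and $\mathcal B$ (frozen at $1$ up to $\sigma$), which kills the bracket $[\mathcal A,\mathcal B]$, and the nonnegativity of $\mathcal A_-,\mathcal B_-$, which turns the two drift terms into monotone processes; the positivity ${\cal E}(K_2)>0$ --- that is, the constraint $\Delta K_2>-1$ --- is precisely what upgrades ``local supermartingale'' to ``supermartingale''. Beyond this, only a handful of routine facts need a line apiece: that a deflator is a positive supermartingale (hence a special semimartingale, so $K$ is a local supermartingale), that $\varphi\is X^{\sigma}$ and ${\cal E}(\cdot)$ commute with stopping at $\sigma$, and that $K_2$ carries no mass at $\sigma$ so that $\Delta K_2>-1$ follows from $\Delta K>-1$.
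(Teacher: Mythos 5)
The paper itself declares the proof of this lemma ``straightforward'' and omits it, so there is no in-text argument to compare against; I evaluate your write-up on its own terms. Your proof is correct and, in fact, supplies the substance the paper elides. The decomposition $K_1=K^{\sigma}$, $K_2=K-K^{\sigma}$ via the stochastic logarithm, the orthogonality $[K_1,K_2]=0$ coming from the time-separation of the two pieces, and the reduction of both directions to the single ``product of a stopped nonnegative supermartingale and a post-$\sigma$ nonnegative supermartingale starting at $1$'' observation are exactly the right ingredients, and you verify the nonroutine parts (in particular that the vanishing bracket and the nonnegativity of $\mathcal A_-,\mathcal B_-$ together make $\mathcal A\mathcal B$ a local supermartingale, then upgraded to a supermartingale by positivity). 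The passage from ``$Z$ deflator'' to ``$K$ is a local supermartingale'' via Doob--Meyer and division by the locally bounded $Z_-^{-1}$ is also correct, as is uniqueness.

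One sentence is garbled and should be repaired, though it does not affect the outcome: you write that ``$\Delta K_2$ equals $0$ off $\Rbrack 0,\sigma\Lbrack$ and vanishes on it,'' which taken literally says $\Delta K_2\equiv 0$ and is false in general. What you mean (and what the argument needs) is: $\Delta K_2=0$ on $\Lbrack 0,\sigma\Rbrack$ because $K_2$ is identically zero there, while on $\Rbrack\sigma,+\infty\Rbrack$ one has $\Delta K_2=\Delta K>-1$ since $Z={\cal E}(K)>0$; together these give $\Delta K_2>-1$. Similarly, a few lines later ``$K_2$ vanishes on $\Rbrack 0,\sigma\Lbrack$'' should read $\Lbrack 0,\sigma\Rbrack$. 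With those two wording fixes the proof is clean and self-contained.
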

The proof of this lemma is straightforward and will be omitted. This lemma shows, in a way or another, that when dealing with the stopped model $(X^{\sigma},\mathbb H)$, there is no loss of generality in focusing only on the part ``up-to-$\sigma$" of the deflator, and assume that the deflator is flat after $\sigma$.
\section{Explicit characterization of all deflators under random horizon}\label{SectionOfDeflators}
This section constitutes one of the principal contributions of the paper, and focuses on explicitly parametrizing the set of all deflators for $(S^{\tau},\mathbb G)$ in terms of deflators for $(S,\mathbb F)$. This section contains three subsections. The main results are stated and interpreted in the first subsection, while the second subsection gives their proofs. The third subsection illustrates the main results on a particular jump-diffusion  and the discrete-time cases. We end this paragraph by useful results on characterization of deflators and local martingale deflators. Throughout the paper,  processes will be compared to each other in the following sense.
\begin{eqnarray}\label{IncPro}
 X \succeq Y\quad \mbox{if}\quad X-Y \quad \mbox{is a nondecreasing process.} \end{eqnarray}
 The following proposition characterizes (local martingale) deflators.
  \begin{proposition}\label{GeneralSupDeflators} 
Let $X$ be an  ${\mathbb H}$-semimartingale and $Z$ be a process. Then the following assertions hold.\\
 {\rm{(a)}} $Z$ is a deflator for $(X, {\mathbb H})$ if and only if there exists a unique pair $(N, V)$ such that $N \in {\cal M}_{loc}(\mathbb H)$, $V$ is nondecreasing RCLL and ${\mathbb H}$-predictable, 
 \begin{eqnarray}
 &&\hskip -1.3cm Z :=Z_0{\cal E}(N){\cal E}(-V), \quad N_0=V_0=0,\quad \Delta N> -1,\quad \Delta V<1,\label{MultiDecompoDeflator}\\
 &&\hskip -1.3cm \sup_{0<s\leq\cdot}\vert\Delta Y^{(\varphi)} \vert\in{\cal A}_{loc}(\mathbb H)\ \&\ {1\over{1-\Delta V}}\is V \succeq  A^{(\varphi, N,\mathbb H)},\ \forall\ \varphi\in \Theta_b(X, \mathbb H).\label{deflatorAssumptions}
  \end{eqnarray}
 Here $Y^{(\varphi)}:= \varphi \is X + [\varphi \is X, N] $ and $A^{(\varphi, N,\mathbb H)}\in{\cal A}_{loc}(\mathbb H)$  is $\mathbb H$-predictable such that 
 $Y^{(\varphi)}-A^{(\varphi, N,\mathbb H)}\in{\cal M}_{loc}(\mathbb H)$. $ {\Theta}_b(X, \mathbb H)$ is the set of bounded $\varphi$ that belongs to $ {\Theta}(X, \mathbb H)$ given by
  \begin{eqnarray}\label{SpaceL(X,H)}
 {\Theta}(X, \mathbb H)  := \left\{ \varphi \mbox{\ is \ } {\mathbb H}\mbox{-predictable \ }: \quad \varphi \Delta X > -1 \right\}.\end{eqnarray}
 {\rm{(b)}} $Z$ is a local martingale deflator for $(X, {\mathbb H})$ (i.e., $Z\in {\cal Z}_{loc}(X,\mathbb H)$) if and only if there exist a real-valued positive and bounded  $\mathbb H$-predictable process $\varphi$ and a unique $N \in {\cal M}_{loc}(\mathbb H)$ such that $N_0=0$,  \begin{eqnarray}
 && \hskip -1cm Z :=Z_0{\cal E}(N),\quad \Delta N> -1,\quad  \sup_{0<s\leq\cdot}\vert \varphi_s \Delta X_s\vert(1 + \Delta N_s) \in {\cal A}_{loc}(\mathbb H),\label{MartingaleDeflator1}\\
 && \hskip -1cm \varphi \is X + [\varphi \is X, N] \in {\cal M}_{loc}(\mathbb H),\label{MartingaleDeflator2}
  \end{eqnarray}

\end{proposition}
The proof of this proposition is relegated to Appendix \ref{AppB}. 
\subsection{Main results}
This subsection states our main results of this section. 
\begin{theorem}\label{GeneralDeflators} 
Suppose that $G > 0$, and let  $Z^{\mathbb G}$ be a process such that $Z^{\mathbb G}=(Z^{\mathbb G})^{\tau}$. Then the following assertions are equivalent.\\
{\rm{(a)}} $Z^{\mathbb G}$ is a deflator for $(S^{\tau}, \mathbb G)$  (i.e., $Z^{\mathbb G}\in {\cal D}(S^{\tau}, \mathbb G)$).\\
{\rm{(b)}} There exists a unique $\left(K^{\mathbb F}, V^{\mathbb F}, \varphi^{(o)}, \varphi^{(pr)}\right)$ such that $K^{\mathbb F}\in {\cal M}_{loc}(\mathbb F)$,  $V^{\mathbb F}$ is an $\mathbb F$-predictable RCLL and nondecreasing process, $\varphi^{(o)}\in {\cal I}^o_{loc}(N^{\mathbb G},\mathbb G)$, $\varphi^{(pr)}\in L^1_{loc}({\rm{Prog}}(\mathbb F),P\otimes dD)$ such that $V^{\mathbb F}_0=K^{\mathbb F}_0=0$,
\begin{eqnarray*}{\cal E}(K^{\mathbb F}){\cal E}(-V^{\mathbb F})\in {\cal D}(S, \mathbb F),\quad E\left[\varphi^{(pr)}_{\tau}\big|{\cal F}_{\tau}\right]I_{\{\tau<+\infty\}}=0\ P\mbox{-a.s.},\end{eqnarray*}
 \begin{equation}\label{ineq1a}
\  {\varphi^{(pr)}}  > -\left[ {{G_{-}}\over{\widetilde G}}(1 + \Delta K^{\mathbb F})  + \varphi^{(o)} {{G}\over{\widetilde G}} \right],\ P\otimes dD-a.e.,
\end{equation}
 \begin{equation}\label{ineq2a}
-{{G_{-}}\over{G}}(1 + \Delta K^{\mathbb F})<\varphi^{(o)}\ P\otimes dD^{o,\mathbb F}\mbox{-a.e.,}\ \varphi^{(o)}\Delta D^{o,\mathbb F} < (1+\Delta K^{\mathbb F})G_{-},
 \end{equation}
\begin{equation}\label{repKG1a}
Z^{\mathbb G}={\cal E}(K^{\mathbb G}){\cal E}(-V^{\mathbb F})^{\tau},\ K^{\mathbb G}={\cal T}({K^{\mathbb F}}-{1\over{ G_{-}}}\is {m})+\varphi^{(o)}\is N^{\mathbb G}+\varphi^{(pr)}\is D.\end{equation}
{\rm{(c)}} There exists a unique triplet $\left(Z^{\mathbb F}, \varphi^{(o)}, \varphi^{(pr)}\right)$ such that $Z^{\mathbb F}\in{\cal D}(S, \mathbb F)$, $(\varphi^{(o)},\varphi^{(pr)})\in{\cal I}^o_{loc}(N^{\mathbb G},\mathbb G)\times L^1_{loc}({\rm{Prog}}(\mathbb F),P\otimes dD)$,
  \begin{eqnarray}
 && \hskip -0.5cm\E[{\varphi}^{(pr)}_{\tau}\ \big|\ {\cal F}_{\tau}]I_{\{\tau<+\infty\}}=0\quad P\mbox{-a.s.}\label{Condition4Fi(pr)}\\
&& \hskip -0.5cm\varphi^{(pr)}>-1,\quad -{{\widetilde G}\over{ G}}<\varphi^{(o)},\quad \varphi^{(o)}(\widetilde G -G)<\widetilde G,\quad P\otimes dD\mbox{-a.e.,} \label{ineqMultiGeneral1}\end{eqnarray}
and 
\begin{equation}\label{repKGMultiGEneral}
Z^{\mathbb G}={{(Z^{\mathbb F})^{\tau}}\over{{\cal E}(G_{-}^{-1}\is m)^{\tau}}}{\cal E}(\varphi^{(o)}\is N^{\mathbb G}){\cal E}(\varphi^{(pr)}\is D).\end{equation}
\end{theorem}

The proof of this theorem is relegated to the second subsection. Herein, we discuss the meaning of this theorem and its possible extension. In virtue of Theorem \ref{theo4MartingaleDecomposGeneral}, see also \cite{ChoulliDavelooseVanmaele} for detailed discussions, the model $(S^{\tau}, \mathbb G)$ has three types of ``orthogonal" risks. Both the {\it pure financial risk and the correlation-risk} resulting from the correlation of $\tau$ with $(S, \mathbb F)$ are represented by $M^{\mathbb F}$, the {\it pure default risk of type one} denoted by $\varphi^{(o)}\is N^{\mathbb G}$,  and the {\it pure default of type two} that takes the form of $\varphi^{(pr)}\is D$. Thus, assertion (c) in the above theorem states that the deflator of $(S^{\tau}, \mathbb G)$ is the product of three deflators that are mutually orthogonal, namely the deflator for the pure financial and correlated risks, the deflator for the pure default risk of type one, and the deflator for pure default risk of type two. Similarly, for assertion (b), one can remark that the stochastic logarithm of a deflator is the sum of the stochastic logarithm of the three types of orthogonal deflators. \\
It is important to mention that both (b) $\Longleftrightarrow$ (c) and (c) $\Longrightarrow$ (a) hold under no assumption, while the condition $G>0$ intervenes in the proof of the converse of the latter implication. However, even though whether $G$ vanishes or not is important for the existence of deflators --see \cite{ACDJ1} for details--,  the role of the condition $G>0$ in Theorem \ref{GeneralDeflators} lies in simplifying the statement of the results and making the proof simpler only. The key point in this simplicity lies in the fact that, under $G>0$, passing from $\mathbb G$-``localization"  to $\mathbb F$-``localization" occurs without any problem, see the proof of Theorem \ref{theo4MartingaleDecomposGeneral} for instance. Indeed, one can drop this assumption --as in \cite{ACDJ1}-- at the expenses of dealing with the family of models $\Bigl\{\left(I_{\{G_{-}\geq\delta\}}\is S-\sum \Delta S I_{\{\widetilde G=0\ \&\ \delta\leq G_{-}\}}, \mathbb F\right):\ \delta\in(0.1)\Bigr\}$ in a way or another instead of one model $(S, \mathbb F)$. Herein, the main difficulty lies somehow in defining the concept of deflator for the whole family of models whose existence should be ``consistent" with the family of models fulfills NUPBR. Once this deflator for the family of models is defined precisely, our theorem remains valid by substituting deflator for $(S, \mathbb F)$ by deflator for the family of models under consideration. \\
One of the key ideas behind the equivalence between assertions (c) and (b) of Theorem \ref{GeneralDeflators}, lies in the following result that is interesting in itself.

\begin{proposition}\label{caseofGlocalmartingale}
   For any $M^{\mathbb F}\in {\cal M}_{loc}(\mathbb F)$, the process $M^{\mathbb G}$ given by 
\begin{eqnarray}\label{SpecialDeflators}
 M^{\mathbb G}:={{(M^{\mathbb F})^{\tau}}\over{{\cal E}(G_{-}^{-1}\is m)^{\tau}}},\end{eqnarray}
 is a $\mathbb G$-local martingale that is orthogonal \footnote{Two $\mathbb H$-local martingale $M_1$ and $M_2$ are said to be orthogonal if $[M_1, M_2]\in {\cal M}_{loc}(\mathbb H)$.} to $\varphi\is N^{\mathbb G}$ and $\psi\is D$, for any pair $(\varphi, \psi)\in {\cal I}_{loc}(N^{\mathbb G}, \mathbb G)\times L_{loc}^1({\rm{Prog}}(\mathbb F),P\otimes dD)$ such that $E[\psi_{\tau}\big|{\cal F}_{\tau}]I_{\{\tau<+\infty\}}=0$ $P$-a.s. and both $[M^{\mathbb F}, \varphi\is N^{\mathbb G}]$ and $[M^{\mathbb F}, \psi\is D]$ belong to ${\cal A}_{loc}(\mathbb G)$.\\
 Furthermore, we always have
 \begin{eqnarray}\label{ExponetialArreter}
 {{{\cal E}(M^{\mathbb F})^{\tau}}\over{{\cal E}(G_{-}^{-1}\is m)^{\tau}}}={\cal E}\Bigl({\cal T}(M^{\mathbb F})-G_{-}^{-1}\is {\cal T}(m)\Bigr).
 \end{eqnarray}
  \end{proposition} 
  \begin{proof} It is clear that $M^{\mathbb G}$ is orthogonal to the $\mathbb G$-local martingales $\varphi\is N^{\mathbb G}$ and $\psi\is D$  provided that $M^{\mathbb G}(\varphi\is N^{\mathbb G})$ and $M^{\mathbb G}(\psi\is D)$   are special $\mathbb G$-semimartingales. Hence, we focus now on proving that $M^{\mathbb G}\in {\cal M}_{loc}(\mathbb G)$. To this end,  remark that due to Yor's formula given by 
  \begin{eqnarray}\label{YorFormula}
  {\cal E}(X){\cal E}(Y)={\cal E}(X+Y+[X,Y]),\ \mbox{for any semimartingales $X$ and $Y$},
  \end{eqnarray}
 we get $1/{\cal E}(X)={\cal E}(-X+(1+\Delta X)^{-1}\is [X,X])$, which holds for any semimartingale $X$ such that $1+\Delta X>0$. Thus, a combination of the latter equality with (\ref{T(X)}) and $\Delta m={\widetilde G}-G_{-}$ leads to   \begin{eqnarray}\label{Exponential1}
{1\over{{\cal E}(G_{-}^{-1}\is m)^{\tau}}}&&={\cal E}\left(-{1\over{G_{-}}}\is m^{\tau}+{{G_{-}^{-2}}\over{1+ G_{-}^{-1}\Delta m}}\is [m,m]^{\tau}\right)\nonumber\\
&&={\cal E}\left(-{1\over{G_{-}}}\is {\cal T}( m)\right).
\end{eqnarray}
Thus, by combining this equality with  (\ref{[X, T(Y)]}) applied to the pair $(M^{\mathbb F}, m)$, the integration by part formula, and (\ref{T(X)}), we get
\begin{eqnarray*}
{{(M^{\mathbb F})^{\tau}}\over{{\cal E}(G_{-}^{-1}\is m)^{\tau}}}&&=M^{\mathbb F}_0-{{ M^{\mathbb F}_{-}}\over{G_{-} {\cal E}_{-}(G_{-}^{-1}\is m)}}\is {\cal T}( m)+{1\over{{\cal E}_{-}(G_{-}^{-1}\is m)}}\is (M^{\mathbb F})^{\tau}\\
&&-{1\over{G_{-}{\cal E}_{-}(G_{-}^{-1}\is m)}}\is[M^{\mathbb F}, {\cal T}(m)]\\
&&=M^{\mathbb F}_0-{{ M^{\mathbb F}_{-}}\over{G_{-} {\cal E}_{-}(G_{-}^{-1}\is m)}}\is {\cal T}( m)+{1\over{{\cal E}_{-}(G_{-}^{-1}\is m)}}\is {\cal T}(M^{\mathbb F}) .\end{eqnarray*} This proves that $M^{\mathbb G}\in {\cal M}_{loc}(\mathbb G)$, and hence the proof of the first assertion is complete. To prove (\ref{ExponetialArreter}), we combine (\ref{Exponential1}), (\ref{YorFormula}), (\ref{[X, T(Y)]}) applied to the pair $(M^{\mathbb F}, m)$ and (\ref{T(X)}), and derive 
\begin{eqnarray*}
 {{{\cal E}(M^{\mathbb F})^{\tau}}\over{{\cal E}(G_{-}^{-1}\is m)^{\tau}}}&&={\cal E}(M^{\mathbb F})^{\tau}{\cal E}\left(-G_{-}^{-1}\is {\cal T}( m)\right)\\
 &&={\cal E}\left( (M^{\mathbb F})^{\tau}-G_{-}^{-1}\is {\cal T}( m)-[M^{\mathbb F}, G_{-}^{-1}\is {\cal T}( m)]\right)\\
 &&={\cal E}\Bigl({\cal T}(M^{\mathbb F})-G_{-}^{-1}\is {\cal T}(m)\Bigr).
\end{eqnarray*}
 This ends the proof of the proposition.\end{proof}
As a particular case of Theorem  \ref{GeneralDeflators}, we characterize the set of all local martingale deflators for $(S^{\tau},\mathbb G)$, denoted by ${\cal Z}_{loc}(S^{\tau},\mathbb G)$, as follows.

\begin{theorem}\label{LocalMartingaleDeflator}
Suppose $G > 0$, and let  $K^{\mathbb G}$ be a $\mathbb G$-semimartingale such that $K^{\mathbb G}=(K^{\mathbb G})^{\tau}$. Then the following assertions are equivalent.\\
{\rm{(a)}} $Z^{\mathbb G}:={\cal E}\left(K^{\mathbb G}\right)$ is a local martingale deflator for $(S^{\tau}, \mathbb G)$.\\
{\rm{(b)}} There exists a unique triplet $(K^{\mathbb F},\varphi^{(o)}),\varphi^{(pr)})$ that belongs to the set ${\cal M}_{loc}(\mathbb F)\times {\cal I}^o_{loc}(N^{\mathbb G},\mathbb G)\times L^1_{loc}({\rm{Prog}}(\mathbb F),P\otimes dD)$ such that $K^{\mathbb F}_0=0$, both conditions (\ref{ineq1a}) and (\ref{ineq2a}) hold, $E[\varphi^{(pr)}_{\tau}\ \big|\ {\cal F}_{\tau}]I_{\{\tau<+\infty\}}=0$ $P$-a.s., ${\cal E}\left(K^{\mathbb F} \right)$ belongs to ${\cal Z}_{loc}(S,\mathbb F),$ and 
\begin{equation}\label{repKG1}
K^{\mathbb G}={\cal T}({K^{\mathbb F}}) - G_{-}^{-1}\is{\cal T}({m})+\varphi^{(o)}\is N^{\mathbb G}+\varphi^{(pr)}\is D.\end{equation}
{\rm{(c)}} There exists a unique triplet $ (Z^{\mathbb F},\varphi^{(o)}, \varphi^{(pr)})$ that belongs to the set ${\cal Z}_{loc}(S,\mathbb F)\times{\cal I}^o_{loc}(N^{\mathbb G},\mathbb G)\times  L^1_{loc}({\rm{Prog}}(\mathbb F),P\otimes dD)$ and satisfies the three conditions (\ref{Condition4Fi(pr)}), (\ref{ineqMultiGeneral1}) and
\begin{eqnarray}\label{repKGMulti}
Z^{\mathbb G}={{(Z^{\mathbb F})^{\tau}}\over{{\cal E}(G_{-}^{-1}\is m)^{\tau}}}{\cal E}(\varphi^{(o)}\is N^{\mathbb G}){\cal E}(\varphi^{(pr)}\is D).
\end{eqnarray}
\end{theorem}

\begin{proof} It is clear that the equivalence between assertions (b) and (c) follows the same arguments as those in the proof of (b) $\Longleftrightarrow$ (c) of Theorem \ref{GeneralDeflators} by putting $V^{\mathbb F}\equiv 0$. Similarly, the proof of (a) $\Longrightarrow$ (b) follows the same footsteps of the proof of (a) $\Longrightarrow$ (b) of Theorem \ref{GeneralDeflators}, by putting $V^{\mathbb G}\equiv 0$, replacing the order sign $\preceq$ by $=$, and letting $\varphi$ to be real-valued positive and bounded  predictable only instead of belonging to ${\Theta}(S^{\tau}, \mathbb G)$. \\
Finally, to prove (c) $\Longrightarrow$ (a), we remark that a combination of Proposition \ref{caseofGlocalmartingale} and the fact that  there exists a positive and bounded $\mathbb F$-predictable process $\varphi$ such that $Z^{\mathbb F}(\varphi\is S)\in {\cal M}_{loc}(\mathbb F)$ (which is due to $Z^{\mathbb F}\in {\cal Z}_{loc}(S, \mathbb F)$), leads to $Z^{\mathbb G}(\varphi\is S^{\tau})$ being a $\mathbb G$-local martingale. This proves assertion (a), and the proof of the theorem is completed.
\end{proof} 

Thanks to Proposition \ref{caseofGlocalmartingale}  and Theorem \ref{LocalMartingaleDeflator}, we single out a particular subclass of local martingale deflators for $(S^{\tau},\mathbb G)$. 
\begin{corollary}\label{caseofGmartingaleDensities}
   If $Z^{\mathbb F}\in {\cal Z}_{loc}(S,\mathbb F)$, then $Z^{\mathbb G}:=(Z^{\mathbb F})^{\tau}/{\cal E}(G_{-}^{-1}\is m)^{\tau}$ belongs to ${\cal Z}_{loc}(S^{\tau}, \mathbb G)$.
  \end{corollary} 
\subsection{Proof of Theorem \ref{GeneralDeflators}}
The proof of this theorem is based on two lemmas. The first lemma characterizes the deflators of $(S^{\tau}, \mathbb G)$ with $\mathbb F$-adapted processes. To this end, we recall that 
\begin{eqnarray}\label{Sequation}
S=S_0+M+A+\sum\Delta S I_{\{\vert \Delta S\vert>1\}},\end{eqnarray}
where $M$ is an $\mathbb F$-locally bounded local martingale and $A\in {\cal A}_{loc}(\mathbb F)$ is an $\mathbb F$-predictable process such that $M_0=A_0=0$.
\begin{lemma}\label{lemma1fortheorem3.2} If $Z^{\mathbb G}\in {\cal D}(S^{\tau}, \mathbb G)$, then there exists a quadruple $(K^{\mathbb F}, V,\varphi^{(o)}, \varphi^{(pr)})$ that belongs to  ${\cal M}_{loc}(\mathbb F)\times {\cal A}_{loc}(\mathbb F)\times {\cal I}^o_{loc}(N^{\mathbb G},\mathbb G)\times L^1_{loc}({\rm{Prog}}(\mathbb F),P\otimes dD)$  and satisfies $K^{\mathbb F}_0=V_0=0$, (\ref{Condition4Fi(pr)}) holds, $V$ is nondecreasing and $\mathbb F$-predictable,  
\begin{eqnarray}
&&Z^{\mathbb G}=Z^{\mathbb G}_0{\cal E}(K^{\mathbb G}){\cal E}(-V)^{\tau},\quad \Delta V<1,\quad \Delta K^{\mathbb G}>-1,\label{equa1forlemma1}\\
&&K^{\mathbb G}:= {\cal T}\left(K^{\mathbb F}-{1\over{G_{-}}}\is m\right)+\varphi^{(o)}\is N^{\mathbb G}+\varphi^{(pr)}\is D,\label{equa2forlemma1}
\end{eqnarray}
and for any bounded $\varphi\in {\Theta}(S,\mathbb F)$ we have
\begin{eqnarray}
&&U^{(\varphi)}:=\sum \varphi\Delta S I_{\{\vert \Delta S\vert>1\}}(1+\Delta K^{\mathbb F}) \in {\cal A}_{loc}(\mathbb F)\label{equa3forlemma1}\\
\mbox{and}\ &&{1\over{1-V}}\is V^{\tau} \succeq I_{\Lbrack0,\tau\Lbrack}\is A^{(\varphi, K^{\mathbb F}, \mathbb F)},\label{equa4forlemma1}\\
\mbox{where}\ &&A^{(\varphi, K^{\mathbb F}, \mathbb F)}:=(U^{(\varphi)})^{p,\mathbb F}+ \varphi\is \langle M, K^{\mathbb F}\rangle^{\mathbb F}+  \varphi\is A.\label{equa5forlemma1}\end{eqnarray}
\end{lemma}
\begin{proof} Let $Z^{\mathbb G}$ is a deflator for $(S^{\tau}, \mathbb G)$. Then, due to Proposition \ref{GeneralSupDeflators}-(a), we get the existence of $K^{\mathbb G}\in {\cal M}_{loc}(\mathbb G)$, and a $\mathbb G$-predictable RCLL and nondecreasing process  $V^{\mathbb G}$ such that  $V^{\mathbb G}_0=K^{\mathbb G}_0=0$, 
 \begin{eqnarray}
  &&Z^{\mathbb G} = Z^{\mathbb G}_0{\cal E}(K^{\mathbb G}){\cal E}(-V^{\mathbb G}), \quad \Delta K^{\mathbb G}> -1, \quad  \Delta V^{\mathbb G}<1, \label{Condition1g}\\
 && \sup_{0<s\leq\cdot}\vert  \varphi_s \Delta S^{\tau}_s\vert(1+\Delta K^{\mathbb G}_s)\in {\cal A}^+_{loc}(\mathbb G) ,\label{Condition2g}\\
 &&{1\over{1-\Delta V^{\mathbb G}}}\is  V^{\mathbb G} \succeq  A^{(\varphi, K^{\mathbb G}, \mathbb G)},\ \mbox{for any}\ \varphi \in {\Theta}_b(S^{\tau}, \mathbb G).\hskip 1cm\label{Condition3g}\end{eqnarray}
 where $A^{(\varphi, K^{\mathbb G}, \mathbb G)}\in {\cal A}_{loc}(\mathbb G)$ is $\mathbb G$-predictable such that 
 \begin{eqnarray}\label{AGK}
\varphi \is S^{\tau} + [ \varphi \is S^{\tau}, K^{\mathbb G}]-A^{(\varphi, K^{\mathbb G}, \mathbb G)}\in {\cal M}_{loc}(\mathbb G).\end{eqnarray}
 A direct application of Lemma \ref{PortfolioGtoF}-(c)  to $V^{\mathbb G}$ implies the existence of an $\mathbb F$-predictable and nondecreasing process $V\in {\cal A}_{loc}(\mathbb F)$ such that  $\Delta V<1$ and $ V^{\mathbb G}=V^{\tau}$. Thus, by inserting this in  (\ref{Condition1g}), (\ref{equa1forlemma1})  follows immediately. \\
 By applying Theorem \ref{theo4MartingaleDecomposGeneral} to $K^{\mathbb G}$, we get the existence of $(N^{\mathbb F},\varphi^{(o)}, \varphi^{(pr)})$ that belongs to ${\cal M}_{loc}(\mathbb F)\times {\cal I}^o_{loc}(N^{\mathbb G},\mathbb G)\times L^1_{loc}({\rm{Prog}}(\mathbb F),P\otimes dD)$  such that $N^{\mathbb F}_0=0$, $\E[\varphi^{(pr)}_{\tau}\ \big|\ {\cal F}_{\tau}]I_{\{\tau<+\infty\}}=0$ $P\mbox{-a.s.}$,  and 
\begin{eqnarray}\label{RepKG4Proof}
K^{\mathbb G}= \frac{1}{G_-^2}  \is {\cal T}({N^{\mathbb F}})+\varphi^{(o)}\is N^{\mathbb G}+\varphi^{(pr)}\is D.\end{eqnarray}
Thus, on the one hand, by putting 
\begin{eqnarray}\label{K4F}
 K^{\mathbb F}:={1\over{G_{-}}}\is m+{1\over{G_{-}^2}}\is N^{\mathbb F}\in {\cal M}_{loc}(\mathbb F)\end{eqnarray}
 and inserting it in (\ref{RepKG4Proof}), we obtain (\ref{equa2forlemma1}).  On the other hand, we consider a bounded $ \varphi \in {\Theta}(S, \mathbb F)$, and remark that $ \varphi\is S^{\tau} +  \varphi\is [K^{\mathbb G} , S^{\tau}]$ is a special $\mathbb G$-semimartingale due to the condition (\ref{Condition2g}). By combining (\ref{Sequation}), (\ref{T(X)}), (\ref{[X, T(Y)]}), (\ref{RepKG4Proof}),  and the fact that the three processes $\varphi\is[A^{\tau},K^{\mathbb G}]$ \footnote{This is due to Yoeurp's lemma, see \cite[Th\'eor\`eme 36, Chapter VII, p. 245]{DellacherieMeyer80}.}, $\varphi\Delta M{\varphi}^{(o)}\is N^{\mathbb G}$ and $\varphi\Delta M{\varphi}^{(pr)}\is D$ \footnote{These two processes belong to ${\cal M}_{loc}(\mathbb G)$ is due to a combination of $\varphi\Delta M$ being bounded and $\mathbb F$-optional with \cite[Proposition 2.13]{ChoulliDavelooseVanmaele}.}are $\mathbb G$-local martingales, we derive 
  \begin{eqnarray*}
 &&\varphi\is S^{\tau} +  \varphi\is [K^{\mathbb G} , S^{\tau}] \\
 &=& \varphi\is M^{\tau}+\varphi\is A^{\tau}+ \sum \varphi\Delta S^{\tau} I_{\{\vert \Delta S\vert>1\}} +\varphi\is [K^{\mathbb G} , S^{\tau}],\\
&=&  {\mathbb G}\mbox{-local martingale}+{{\varphi}\over{\widetilde G}}I_{\Lbrack0,\tau\Lbrack}\is [M, m] +  \varphi\is A^{\tau} + {{\varphi}\over{G_{-}\widetilde G}}I_{\Lbrack0,\tau\Lbrack}\is [N^{\mathbb F}, M]\\
&+& \sum \varphi\Delta S I_{\{\vert \Delta S\vert>1\}}\left[(1+{{\Delta N^{\mathbb F}}\over{G_{-}\widetilde G}})I_{\Lbrack0,\tau\Lbrack}+\varphi^{(o)}\Delta N^{\mathbb G}+\varphi^{(pr)}\Delta D \right].
\end{eqnarray*}
Thus, in virtue of Lemma \ref{lemmaV}-(a), we conclude that (\ref{Condition2g}) is equivalent to
\begin{equation}\label{Wintegrable}
W:=\sum \varphi\Delta S I_{\{\vert \Delta S\vert>1\}}[(1+{{\Delta N^{\mathbb F}}\over{G_{-}\widetilde G}})I_{\Lbrack0,\tau\Lbrack}+\varphi^{(o)}\Delta N^{\mathbb G}+\varphi^{(pr)}\Delta D ]\end{equation}
being $\mathbb G$-locally integrable, and
\begin{eqnarray}\label{AKG1}
A^{(\varphi, K^{\mathbb G}, \mathbb G)}=\varphi\is A^{\tau}+{\varphi}I_{\Lbrack0,\tau\Lbrack}\is \langle M, K^{\mathbb F} \rangle^{\mathbb F} +W^{p,\mathbb G}.\end{eqnarray}
Remark that $0<1+\Delta K^{\mathbb G}=(1+{{\Delta N^{\mathbb F}}\over{G_{-}\widetilde G}})I_{\Lbrack0,\tau\Lbrack}+\varphi^{(o)}\Delta N^{\mathbb G}+\varphi^{(pr)}\Delta D $ is due to the second condition in (\ref{Condition1g}), (\ref{RepKG4Proof}), and (\ref{[X, T(Y)]}). Thus, in virtue of Lemma \ref{Lemma4Wprocess} and (\ref{K4F}), we conclude that $W\in {\cal A}_{loc}(\mathbb G)$ iff both processes 
\begin{eqnarray}\label{W1}
W^{(1)}:=\sum \varphi\Delta S I_{\{\vert \Delta S\vert>1\}}[(1+{{\Delta N^{\mathbb F}}\over{G_{-}\widetilde G}})]I_{\Lbrack0,\tau\Lbrack}={{G_{-}}\over{\widetilde G}}I_{\Lbrack0,\tau\Lbrack}\is U^{(\varphi)}
\end{eqnarray} and $W^{(2)}:=\sum \varphi\Delta S I_{\{\vert \Delta S\vert>1\}}[\varphi^{(o)}\Delta N^{\mathbb G}+\varphi^{(pr)}\Delta D] $ belong to $ {\cal A}_{loc}(\mathbb G)$. \\
It is clear that  $W^{(2)}$ belongs to $ {\cal A}_{loc}(\mathbb G)$ if and only if it is a $\mathbb G$-local martingale, and it is also clear that $W^{(1)}\in {\cal A}_{loc}(\mathbb G)$ if and only if (\ref{equa3forlemma1}) holds, see Lemma \ref{lemmaV}-(b). Hence, due to Lemma \ref{lemmaV}-(a) again, we get 
$$W^{p,\mathbb G}=(W^{(1)})^{p,\mathbb G}=I_{\Lbrack0,\tau\Lbrack}\is \left(U^{(\varphi)}\right)^{p,\mathbb F}.$$
Therefore, by combining this equality with (\ref{AKG1}), (\ref{Condition3g}), and $V^{\mathbb G}=V^{\tau}$, the condition (\ref{equa4forlemma1}) follows immediately.  This ends the proof of the lemma.\end{proof}
\begin{lemma}\label{lemma2fortheorem3.2} Let $(K^{\mathbb F},\varphi^{(o)}, \varphi^{(pr)})\in {\cal M}_{loc}(\mathbb F)\times {\cal I}^o_{loc}(N^{\mathbb G},\mathbb G)\times L^1_{loc}({\rm{Prog}}(\mathbb F),P\otimes dD)$  and $K^{\mathbb G}\in {\cal M}_{loc}(\mathbb G)$ such that (\ref{Condition4Fi(pr)}) and (\ref{equa2forlemma1}) hold. Then $\Delta K^{\mathbb G}>-1$ if and only if $\Delta K^{\mathbb F}>-1$ and both  (\ref{ineq1a}) d an(\ref{ineq2a}) hold.
\end{lemma}
\begin{proof}Put $\Gamma:=G_{-}{\widetilde G}^{-1}(1+\Delta{K^{\mathbb F}})  - 1,$ and derive 
\begin{eqnarray*}
\Delta K^{\mathbb G}&=&\Delta{\cal T}({K^{\mathbb F}}) - G_{-}^{-1}\Delta{\cal T}({m})+\varphi^{(o)}\Delta N^{\mathbb G}+\varphi^{(pr)}\Delta D\\
&=&\left[\Gamma- \varphi^{(o)} \frac{\Delta D^{o,\mathbb F}}{\widetilde{G}} \right] I_{\Lbrack 0,\tau\Rbrack} +\left[\Gamma+\varphi^{(pr)} + \varphi^{(o)}\frac{G}{\widetilde{G}}\right] I_{\Rbrack \tau\Lbrack}.
\end{eqnarray*}
Therefore,  ${\cal E}(K^{\mathbb G})>0$ iff $1+\Delta K^{\mathbb G}>0$ if and only if 
\begin{eqnarray}
\Lbrack 0,\tau\Rbrack&\subset& \left\{\frac{G_{-}}{\widetilde{G}} (1 + \Delta K^{\mathbb F})   - \varphi^{(o)} \frac{\Delta D^{o,\mathbb F}}{\widetilde{G}} > 0 \right\},\label{positivityKG1}\\
 \mbox{and}\quad \Rbrack \tau\Lbrack&\subset&\left\{\frac{G_{-}}{\widetilde{G}} (1 + \Delta K^{\mathbb F})  + \varphi^{(o)} \frac{G}{\widetilde{G}} + \varphi^{(pr)}  > 0\right\}.\label{positivityKG2}
\end{eqnarray}
Thus, by putting $\Sigma_1:=\left\{{G_{-}}{\widetilde{G}}^{-1} (1 + \Delta K^{\mathbb F})   - \varphi^{(o)} {\widetilde{G}} ^{-1}{\Delta D^{o,\mathbb F}}> 0 \right\}\cap \Lbrack0,+\infty\Rbrack$, (\ref{positivityKG1}) is equivalent to  $I_{ \Lbrack 0,\tau\Rbrack}\leq I_{\Sigma_1}$. Hence, by taking the $\mathbb F$-optional projection on both sides of this inequality, we get $0<G\leq I_{\Sigma_1}$ on $\Lbrack0,+\infty\Rbrack$. This proves the right inequality in (\ref{ineq2a}). Notice that (\ref{positivityKG2}) is equivalent to 
$${G_{-}}{\widetilde{G}} ^{-1}(1 + \Delta K^{\mathbb F})  + \varphi^{(o)} {G}{\widetilde{G}}^{-1} + \varphi^{(pr)}  > 0, \quad P\otimes dD-a.e.,$$
and (\ref{ineq1a}) is proved.  Now, we focus on proving that $1+\Delta K^{\mathbb F}>0$ and the left inequality in (\ref{ineq2a}). Thanks to $\E_{P\otimes dD}[ \varphi^{(pr)}|\mathcal{O}(\mathbb F)] =0$  $P\otimes dD-a.e.$,  by taking conditional expectation under $P\otimes dD$ with respect to $\mathcal{O}(\mathbb F)$ on both sides of the above inequality, we get
 \begin{equation}\label{InqDe1}
\Sigma:= {G_{-}} (1 + \Delta K^{\mathbb F})  + \varphi^{(o)} G > 0,\quad P\otimes dD-a.e..
  \end{equation}
Remark that this is equivalent to the left inequality in (\ref{ineq2a}), as $\Sigma$ is $\mathbb F$-optional, and hence the proof of  (\ref{ineq2a}) is completed. By taking the $\mathbb F$-optional projection in both sides of $I_{\Rbrack \tau\Lbrack}\leq I_{\{\Sigma>0\}}$ which is equivalent to (\ref{InqDe1}), we get $\Delta D^{o,\mathbb F} \leq I_{ \{   \Sigma > 0 \}}.$ Therefore, we derive
\begin{equation}\label{InEq4}
\{ \Delta D^{o,\mathbb F} >0 \} \subseteq \{  {G_{-}} (1 + \Delta K^{\mathbb F})  > - \varphi^{(o)} G \}.
\end{equation}
On the one hand, due to the right inequality in (\ref{ineq2a}), we deduce that on $\{ \Delta D^{o,\mathbb F} = 0\},$ we have $1+\Delta K^{\mathbb F}>0$. On the other hand, using (\ref{InEq4}) and the right inequality in (\ref{ineq2a}) afterwards again, we get
$$
 \{ 1+\Delta K^{\mathbb F} \leq 0\} \cap \{ \Delta D^{o,\mathbb F} > 0\} \subseteq \{\varphi^{(o)} > 0 , 1+\Delta K^{\mathbb F} \leq 0 , \Delta D^{o,\mathbb F} > 0\} =\emptyset,
$$
or equivalently $\{ \Delta D^{o,\mathbb F} > 0\} \subseteq  \{ 1+\Delta K^{\mathbb F} > 0\}$. Thus, $  1+\Delta K^{\mathbb F} >0$ , and the proof of the lemma is completed.
\end{proof}
\begin{proof}{\it of Theorem \ref{GeneralDeflators}}
 This proof will be achieved in three steps, where we prove the implications (a)$\Longrightarrow$(b), (b)$\Longrightarrow$(c), and (c)$\Longrightarrow$(a) respectively.\\
{\bf Step 1.} Herein, we prove (a)$\Longrightarrow$(b). Supppose that $Z^{\mathbb G}\in {\cal D}(S^{\tau}, \mathbb G)$. Then by applying Lemmas \ref{lemma1fortheorem3.2} -\ref{lemma2fortheorem3.2}, we deduce the existence of  $(K^{\mathbb F}, V,\varphi^{(o)}, \varphi^{(pr)})\in {\cal M}_{loc}(\mathbb F)\times {\cal A}){loc}(\mathbb F)\times {\cal I}^o_{loc}(N^{\mathbb G},\mathbb G)\times L^1_{loc}({\rm{Prog}}(\mathbb F),P\otimes dD)$ satisfying all conditions of assertions (b) except ${\cal E}(K^{\mathbb F}){\cal E}(-V)\in {\cal D}(S, \mathbb F)$. To prove this,  on the one hand, we take the $\mathbb F$-dual predictable projection on both sides of (\ref{equa4forlemma1}) and use $(H\is U)^{p,\mathbb F}={^{p,\mathbb F}(H)}\is U$ (that holds for any bounded process $H$ and any $\mathbb F$-predictable process $U\in {\cal A}_{loc}(\mathbb F)$) to get
\begin{eqnarray}\label{AKFdomination}
{1\over{1-V}}\is V\succeq \left(U^{(\varphi)}\right)^{p,\mathbb F}+ \varphi\is \langle M, K^{\mathbb F}\rangle^{\mathbb F}+  \varphi\is A=A^{(\varphi, K^{\mathbb F}, \mathbb F)}.\end{eqnarray}
On the other hand, we remark that $A^{(\varphi, K^{\mathbb F}, \mathbb F)}\in {\cal A}_{loc}(\mathbb F)$  is the unique $\mathbb F$-predictable process such that $\varphi\is S+[\varphi\is S, K^{\mathbb F}]- A^{(\varphi, K^{\mathbb F}, \mathbb F)}\in {\cal M}_{loc}(\mathbb F)$.  Therefore, by combining this latter fact with (\ref{AKFdomination}) and It\^o's formula  applied to ${\cal E}(K^{\mathbb F}){\cal E}(-V){\cal E}(\varphi\is S)$, we deduce that this process is an $\mathbb F$-supermartingale. This is equivalent to ${\cal E}(K^{\mathbb F}){\cal E}(-V)\in {\cal D}(S,\mathbb F)$, as $\varphi$ is arbitrary bounded element of $\Theta(S,\mathbb F)$, and step 1 is complete.\\
{\bf Step 2.} This step proves (b)$\Longrightarrow$(c). Hence, we suppose that assertion (b) holds. Then there exists $\left(K^{\mathbb F}, V^{\mathbb F}, \overline\varphi^{(o)},\overline\varphi^{(pr)}\right)$ such that $Z^{\mathbb F}:={\cal E}(K^{\mathbb F}){\cal E}(-V^{\mathbb F})$ belongs to ${\cal D} (S, \mathbb F),$  $\overline\varphi^{(o)}\in {\cal I}^o_{loc}(N^{\mathbb G},\mathbb G)$, $\overline\varphi^{(pr)}\in L^1_{loc}({\rm{Prog}}(\mathbb F),P\otimes dD)$, $\E[\overline{\varphi}^{(pr)}_{\tau}\ \big|\ {\cal F}_{\tau}]I_{\{\tau<+\infty\}}=0$ $P\mbox{-a.s.}$, and (\ref{ineq1a})-(\ref{ineq2a})-(\ref{repKG1a}) hold. Put 
\begin{eqnarray}
&&Y:={\cal T}({K^{\mathbb F}}) - G_{-}^{-1}\is{\cal T}({m}),\quad X:=Y+\overline\varphi^{(o)}\is N^{\mathbb G}\nonumber\\
&& \varphi^{(o)}:={{{\widetilde G}\overline\varphi^{(o)}}\over{G_{-}(1+\Delta K^{\mathbb F})}},\quad \varphi^{(pr)}:={{{\widetilde G}\overline\varphi^{(pr)}}\over{G_{-}(1+\Delta K^{\mathbb F})+\overline{\varphi}^{(o)}G}}.\label{adjustPhi}
\end{eqnarray}
As the pair $(\overline\varphi^{(o)},\overline\varphi^{(pr)})$ satisfies (\ref{ineq1a})-(\ref{ineq2a}), it is clear that the pair $(\varphi^{(o)},\varphi^{(pr)})$ satisfies (\ref{ineqMultiGeneral1}). Furthermore, we put as before, 
$$\Gamma:=G_{-}{\widetilde G}^{-1}(1 + \Delta K^{\mathbb F})-1,\quad \widetilde\Omega:=\Omega\times[0,+\infty),$$ and calculate 
\begin{eqnarray*}
1+\Delta X&&= \left[\Gamma+1 -\Delta D^{o,\mathbb F} {{ \overline\varphi^{(o)}}\over{\widetilde G}}\right]I_{\Lbrack0,\tau\Rbrack}+ \left[\Gamma+1 +\overline\varphi^{(o)} {{G}\over{\widetilde G}}\right]I_{\Rbrack\tau\Lbrack}+I_{\widetilde\Omega\setminus \Lbrack0,\tau\Lbrack}>0.\\
1+\Delta Y&&={{G_{-}}\over{\widetilde G}}(1 + \Delta K^{\mathbb F})I_{\Lbrack0,\tau\Lbrack}+I_{\Lbrack-\infty,0\Lbrack\cup\Lbrack\tau,+\infty\Rbrack}>0.
\end{eqnarray*}
Thanks to (\ref{YorFormula}), we derive
$${\cal E}(X_1+X_2)={\cal E}(X_1){\cal E}\left(X_2-(1+\Delta X_1)^{-1}\is [X_1,X_2]\right),$$
for any semimartingales $X_1, X_2\ \mbox{with}\ 1+\Delta X_1>0$. By applying this formula repeatedly, and using $ \varphi^{(o)}=\overline\varphi^{(o)}/(1+\Delta Y)$ and $\varphi^{(pr)}=\overline\varphi^{(pr)}/(1+\Delta X)$ $P\otimes dD$-a.e. which follow directly from (\ref{adjustPhi}), we obtain 
\begin{eqnarray*}
&&{\cal E}(K^{\mathbb G})={\cal E}(X+\overline\varphi^{(pr)}\is D)={\cal E}(X){\cal E}({{\overline\varphi^{(pr)}}\over{1+\Delta X}}\is D)={\cal E}(X){\cal E}(\varphi^{(pr)}\is D)\\
&&={\cal E}(Y){\cal E}({{\overline\varphi^{(o)}}\over{1+\Delta Y}}\is N^{\mathbb G}){\cal E}(\varphi^{(pr)}\is D)={\cal E}(Y){\cal E}(\varphi^{(o)}\is N^{\mathbb G}){\cal E}(\varphi^{(pr)}\is D).
\end{eqnarray*}
Therefore, the equality (\ref{repKGMultiGEneral}) follows immediately from combining the above equality with (\ref{ExponetialArreter}) (see Proposition \ref{caseofGlocalmartingale}). This ends the second step.\\
{\bf Step 3.}  Herein, we deal with (c)$\Longrightarrow$(a). Thus, we suppose that assertion (c) holds, and deduce the existence of a triplet $\left(Z^{\mathbb F}, \varphi^{(o)}, \varphi^{(pr)}\right)$ that belongs to  ${\cal D}(S, \mathbb F)\times {\cal I}^o_{loc}(N^{\mathbb G},\mathbb G)\times L^1_{loc}({\rm{Prog}}(\mathbb F),P\otimes dD)$ satisfying (\ref{Condition4Fi(pr)}), (\ref{ineqMultiGeneral1}),  and      
\begin{eqnarray}\label{ZG2ZF}
Z^{\mathbb G}={{({Z^{\mathbb F}})^{\tau}}\over{{\cal E}(G_{-}^{-1}\is m)^{\tau}}} {\cal E}({\varphi}^{(o)}\is N^{\mathbb G}){\cal E}({\varphi}^{(pr)}\is D).
 \end{eqnarray}
 Then for any bounded $ \varphi \in {\Theta}(S, \mathbb F) $, ${Z^{\mathbb F}}{\cal E}(\varphi\is S)$ is a positive $\mathbb F$-supermartingale. Hence, thanks to  \cite[Theorem 8.21, Chapter II]{JS03}, we deduce the existence of $N \in {\cal M}_{loc}(\mathbb F)$ and an ${\mathbb F}$-predictable RCLL and non decreasing process $V$ such that  $V_0=N_0=0$, $\Delta V<1$ and 
$${Z^{\mathbb F}}{\cal E}(\varphi\is S) = {\cal E}(N){\cal E}(-V).$$ Thus, by inserting this in (\ref{ZG2ZF}), we deduce that, for any $ \varphi \in {\Theta}_b(S, \mathbb F) $,\begin{eqnarray*}
     Z^{\mathbb G}{\cal E}(\varphi \is S)^{\tau}={{{\cal E}(N)^{\tau}}\over{{\cal E}(G_{-}^{-1}\is m)^{\tau}}} {\cal E}({\varphi}^{(o)}\is N^{\mathbb G}){\cal E}({\varphi}^{(pr)}\is D){\cal E}(-V)^{\tau}. \end{eqnarray*}
 In virtue of Proposition \ref{caseofGlocalmartingale}, which allows us to conclude that the process $({{{\cal E}(N)^{\tau}/{\cal E}(G_{-}^{-1}\is m)^{\tau}}}) {\cal E}({\varphi}^{(o)}\is N^{\mathbb G}){\cal E}({\varphi}^{(pr)}\is D)$ is a $\mathbb G$-local martingale, we deduce that  $Z^{\mathbb G}{\cal E}(\varphi \is S)^{\tau}$ is a $\mathbb G$-supermartingale, for any bounded $ \varphi \in {\Theta}(S, \mathbb F) $. Then assertion (a) follows immediately from combining this with the fact that, for any bounded  $ \varphi^{\mathbb G}\in{\Theta}(S^{\tau}, \mathbb G) $, there exists a bounded  $ \varphi^{\mathbb F} \in {\Theta}(S, \mathbb F) $ such that $ \varphi^{\mathbb G} = \varphi^{\mathbb F}$ on $\Lbrack0,\tau\Lbrack$ (see Lemma \ref{PortfolioGtoF}-(a) for details). This ends the proof of the theorem.\end{proof}
\subsection{Two particular cases for  $(S,\mathbb F)$: The jump-diffusion and discrete-time}\label{section5}
In this section, we illustrate the main results on two particular cases. Precisely, we discuss the case when $(S,\mathbb F)$ follows a jump-diffusion model, and the case of discrete time model for $(S,\mathbb F)$. 
\subsubsection{The case of jump-diffusion for $(S,\mathbb F)$}
This subsection focuses on a particular case of a jump-diffusion process for the market model $(S,\mathbb F)$. Herein, we suppose that a standard Brownian motion $W$ and a Poisson process $N$ with intensity $\lambda>0$ are defined on  the probability space $(\Omega, {\cal F}, P)$ and are  independent. Let $\mathbb F$ be the completed and
 right continuous filtration generated by $W$ and $N$. The stock's price process is supposed to have the following dynamics
\begin{equation}\label{SPoisson2}
S_t=S_0 {\cal E} (X)_t,  \  X_t = \sigma\is W_t+  \zeta\is {N}^{\mathbb F}_t + \int_{0}^{t} \mu_s ds, \  {N_t}^{\mathbb F}:=N_t-\lambda t.
\end{equation}
Here $\mu$, $\sigma$ and $\zeta$ are bounded  and $\mathbb F$-predictable processes, and there exists a constant $\delta\in(0,+\infty)$ such that
\begin{equation}\label{parameters2}
 \sigma>0,\quad \zeta>-1,\quad \sigma+\vert\zeta\vert\geq \delta,\ P\otimes dt\mbox{-a.e.}.
 \end{equation}
\begin{theorem}\label{JumpDiffDeflator2} Let $Z^{\mathbb G}:={\cal E}\left(K^{\mathbb G}\right)$ be a positive $\mathbb G$-local martingale. Suppose $G>0$ and $S$ is given by (\ref{SPoisson2})-(\ref{parameters2}). Then the following are equivalent.\\
{\rm{(a)}} $Z^{\mathbb G}$ is a local martingale deflator for $(S^{\tau}, \mathbb G)$,\\
{\rm{(b)}}  There exist $(\psi_1,\psi_2)\in  L^1_{loc}(W,\mathbb F)\times L^1_{loc}(N^{\mathbb F},\mathbb F)$ ,  $\varphi^{(o)}\in{\cal I}^o_{loc}(N^{\mathbb G},\mathbb G)$  and $\varphi^{(pr)}\in L^1_{loc}( \rm{Prog}(\mathbb F),P\otimes dD)$ satisfying (\ref{Condition4Fi(pr)}) and the following
\begin{eqnarray*}\label{repKG1discrete}
 && K^{\mathbb G}=\psi_1\is {\cal T}(W) + \psi_2 \is{\cal T}( N^{\mathbb F})-G_{-}^{-1}\is {\cal T}({m})+\varphi^{(o)}\is N^{\mathbb G}+\varphi^{(pr)}\is D.\\
 &&{\varphi^{(pr)}}> -\big[ {G_{-}}\psi_2 + \varphi^{(o)} {G}\big]/{\tilde{G}},\ \mbox{and}\    -\frac{\psi_2 G_-} {G}<\varphi^{(o)}<   \frac{\psi_2G_-} {\Delta D^{o,\mathbb F}}\quad P\otimes dD\mbox{-a.e.},\hskip 1cm\\
 &&\mbox{and}\quad \mu+  \psi_1\sigma + \psi_2 \zeta\lambda\equiv 0, \quad \psi_2>-1\quad P\otimes dt-a.e.
\end{eqnarray*}
{\rm{(c)}}  There exists a unique quadruplet $(\psi_1,\psi_2, \varphi^{(o)}, \varphi^{(pr)})$ that belongs to the set $L^1_{loc}(W,\mathbb F)\times L^1_{loc}(N^{\mathbb F},\mathbb F)\times{\cal I}^o_{loc}(N^{\mathbb G},\mathbb G)\times L^1_{loc}( \rm{Prog}(\mathbb F),P\otimes dD)$, and satisfies (\ref{Condition4Fi(pr)}), 
\begin{eqnarray*}\label{repZG1a}
 && Z^{\mathbb G}={{{ \cal E}(\psi_1\is W+\psi_2\is N^{\mathbb F})^{\tau}}\over{{\cal E}(G_{-}^{-1}\is m)^{\tau}}}{\cal E}(\varphi^{(o)}\is N^{\mathbb G}){\cal E}(\varphi^{(pr)}\is D),\\
 &&{\varphi^{(pr)}}> -1,\quad P\otimes dD\mbox{-a.e.},\quad   - \frac{ G_-} {G}<\varphi^{(o)}<   \frac{G_-} {\Delta D^{o,\mathbb F}}\quad P\otimes dD^{o,\mathbb F}\mbox{-a.e.},\hskip 1cm\\
 &&\mbox{and}\quad\mu+  \psi_1\sigma + \psi_2\zeta\lambda\equiv 0, \ \ \psi_2 > -1 \ \ P\otimes dt-a.e.
\end{eqnarray*}
\end{theorem}
\begin{proof}
The proof follows immediately  from Theorems \ref{LocalMartingaleDeflator} and the fact that for any $M\in {\cal M}_{loc}(\mathbb F)$, there exists a unique pair of $\mathbb F$-predictable processes $(\psi_1,\psi_2)\in  L^1_{loc}(W,\mathbb F)\times L^1_{loc}(N^{\mathbb F},{\mathbb F})$ such that $M=M_0+\psi_1\is W+\psi_2\is N^{\mathbb F}.$\end{proof}
\subsubsection{The discrete-time market models}
In this subsection, we suppose that the trading times are $t=0,1,...,T$, and hence  $\tau$ takes values in $\{0,1,..., T\}$ and on $(\Omega, {\cal F},P)$ we have 
\begin{eqnarray}\label{GandG*}
 \mathbb{F}:= ({\cal F}_n)_{n=0,1,..., T},\ {\cal G}_n={\cal F}_n\vee \sigma\left(\tau\wedge 1,...,\tau\wedge n\right),\ S=(S_n)_{n=0,1,..., T},\end{eqnarray}
$$G_n=\sum_{k=n+1}^T P(\tau =k | {\cal F}_n),\  \widetilde{G}_n=\sum_{k=n}^T P(\tau =k | {\cal F}_n),\ n=0,...,T.$$

Then the discrete-time version of the operator ${\cal T}$  and the $\mathbb G$-martingale $N^{\mathbb G}$ defined in (\ref{processMhat})  and (\ref{processNG}) respectively are given by 
\begin{eqnarray}
{\cal T}_n(M)=\sum_{k=1}^{n\wedge\tau} {{P(\tau\geq k|{\cal F}_{k-1})}\over{P(\tau\geq k|{\cal F}_{k})}}\Delta M_k+\sum_{k=1}^{n\wedge\tau} E[\Delta M_k I_{\{P(\tau\geq k|{\cal F}_{k})=0\}}|{\cal F}_{k-1}],\label{OpertaorTdiscrete}\end{eqnarray}
where $\Delta M_n:=M_n-M_{n-1}$ and $M$ is an $\mathbb F$-martingale, and
\begin{eqnarray}
N^{\mathbb G}_n:=I_{\{\tau\leq n \}}-\sum_{k=1}^{n\wedge\tau}{{P(\tau=k|{\cal F}_{k-1})}\over{P(\tau\geq k|{\cal F}_k)}},\quad  n=1,...,T.\label{ProcessNGdiscrete}
\end{eqnarray}
Then $\tau$ is a $\mathbb G$-stopping time and ${\cal G}_{\tau}$ is defined as usual, while ${\cal F}_{\tau}$ is given by ${\cal F}_{\tau}:=\sigma(\{X_{\tau},\ X\ \mathbb F\mbox{-adapted and bounded}\})$. Below, we discuss the relationship between ${\cal G}_{\tau}$ and ${\cal F}_{\tau}$, as  this role is very important in our analysis. 

\begin{lemma}\label{LemmaGnFn}Consider the discrete-time setting of (\ref{GandG*}). Then the $\sigma$-fields ${\cal F}_{\tau}$ and ${\cal G}_{\tau}$ coincide, and hence for any ${\cal G}_{\tau}$-measurable random variable $X$, there exists an $\mathbb F$-adapted process $\xi$ such that $X=\xi_{\tau}$ $P$-a.s.
\end{lemma}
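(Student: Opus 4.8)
The plan is to prove the two $\sigma$-fields coincide by showing mutual inclusion, and then deduce the representation $X = \xi_\tau$ from the equality $\mathcal{G}_\tau = \mathcal{F}_\tau$ together with the definition of $\mathcal{F}_\tau$ given just above the statement.

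First I would establish $\mathcal{F}_\tau \subseteq \mathcal{G}_\tau$. This is the easy direction: since $\mathcal{F}_\tau := \sigma(\{X_\tau \,:\, X \text{ is } \mathbb{F}\text{-adapted and bounded}\})$, it suffices to check that $X_\tau$ is $\mathcal{G}_\tau$-measurable for every bounded $\mathbb{F}$-adapted $X$. In discrete time, $\tau$ is a $\mathbb{G}$-stopping time (as already noted in the excerpt), and on $\{\tau = k\}$ we have $X_\tau = X_k$, which is $\mathcal{F}_k \subseteq \mathcal{G}_k \subseteq \mathcal{G}_\tau$-measurable on that event; summing over $k = 0,1,\dots,T$ (a finite partition) gives $\mathcal{G}_\tau$-measurability of $X_\tau$. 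Hence $\mathcal{F}_\tau \subseteq \mathcal{G}_\tau$.

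The substantive direction is $\mathcal{G}_\tau \subseteq \mathcal{F}_\tau$. Here I would argue that a generating family for $\mathcal{G}_\tau$ consists of sets of the form $A \cap \{\tau = k\}$ with $A \in \mathcal{G}_k$, $0 \le k \le T$; it is enough to show each such set lies in $\mathcal{F}_\tau$. By definition $\mathcal{G}_k = \mathcal{F}_k \vee \sigma(\tau \wedge 1, \dots, \tau \wedge k)$, so on the event $\{\tau = k\}$ the random variables $\tau \wedge 1, \dots, \tau \wedge k$ are all deterministic functions of $k$ (namely $1, 2, \dots, k-1, k$), and hence $\mathcal{G}_k$ restricted to $\{\tau = k\}$ coincides with $\mathcal{F}_k$ restricted to $\{\tau = k\}$. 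Thus $A \cap \{\tau = k\} = B \cap \{\tau = k\}$ for some $B \in \mathcal{F}_k$. It therefore remains to see that $B \cap \{\tau = k\} \in \mathcal{F}_\tau$. For this, write $B \cap \{\tau = k\} = \{ Y_\tau = 1\}$ where $Y$ is the bounded $\mathbb{F}$-adapted process $Y_n := I_B I_{\{n = k\}}$ (using that $B \in \mathcal{F}_k$, so $Y_n$ is $\mathcal{F}_n$-measurable for every $n$: it is $0$ for $n \ne k$ and $I_B$ for $n = k$). Then $Y_\tau = I_B I_{\{\tau = k\}}$, so $\{Y_\tau = 1\} = B \cap \{\tau = k\} \in \mathcal{F}_\tau$ by the very definition of $\mathcal{F}_\tau$. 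This proves $\mathcal{G}_\tau \subseteq \mathcal{F}_\tau$, and combining the two inclusions gives $\mathcal{F}_\tau = \mathcal{G}_\tau$.

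Finally, for the representation statement: let $X$ be $\mathcal{G}_\tau$-measurable. By the equality just proved, $X$ is $\mathcal{F}_\tau$-measurable, i.e. measurable with respect to $\sigma(\{U_\tau : U \ \mathbb{F}\text{-adapted and bounded}\})$. A standard monotone-class / functional argument then shows $X$ is itself of the form $X = \xi_\tau$ for some $\mathbb{F}$-adapted $\xi$: indeed the collection of random variables of the form $\xi_\tau$ with $\xi$ bounded and $\mathbb{F}$-adapted is a vector space, closed under bounded monotone limits (take $\xi := \sup_n \xi^{(n)}$ coordinatewise, which stays $\mathbb{F}$-adapted since each $\mathcal{F}_n$ is complete and $T$ is finite), and contains the indicators of the generators $\{U_\tau = 1\}$; hence it contains every bounded $\mathcal{F}_\tau$-measurable random variable, and a truncation argument removes the boundedness assumption. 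I expect the main (modest) obstacle to be the bookkeeping in the inclusion $\mathcal{G}_\tau \subseteq \mathcal{F}_\tau$ — precisely the observation that conditioning the generators $\tau \wedge j$ of $\mathcal{G}_k$ on the event $\{\tau = k\}$ trivializes them — everything else is routine in the finite discrete-time setting. $\findemo$
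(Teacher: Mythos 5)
Your proof is correct, and it reaches the same conclusion by a slightly different route than the paper's. The paper first invokes Dellacherie's stopping-time theorem (\cite[Theorem 64, Chapter IV]{Dellacherie72}) to write every ${\cal G}_{\tau}$-measurable $X$ as $\xi^{\mathbb G}_{\tau}$ for a $\mathbb G$-adapted process $\xi^{\mathbb G}$, and then reduces $\xi^{\mathbb G}_n$ to an ${\cal F}_n$-measurable random variable on $\{\tau=n\}$ by a monotone class argument applied to generators $\xi^{\mathbb F}_n f(\tau\wedge 1,\dots,\tau\wedge n)$, which trivialize on $\{\tau=n\}$. You instead work directly with the events $A\cap\{\tau=k\}$, $A\in{\cal G}_k$, that make up ${\cal G}_{\tau}$, use the same trivialization of $\tau\wedge 1,\dots,\tau\wedge k$ on $\{\tau=k\}$ to replace $A$ by some $B\in{\cal F}_k$, and exhibit $B\cap\{\tau=k\}$ explicitly as $\{Y_{\tau}=1\}$ for the $\mathbb F$-adapted $Y_n=I_B I_{\{n=k\}}$; you also supply the easy converse inclusion and a self-contained monotone-class step for the final representation, both of which the paper leaves implicit. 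The core observation is identical in the two proofs; your version is a bit more self-contained (no appeal to the Dellacherie stopping-time theorem, which in finite discrete time is elementary anyway) and makes the $\sigma$-field equality explicit. One small imprecision: in the inclusion ${\cal F}_{\tau}\subseteq{\cal G}_{\tau}$ you write ``${\cal F}_k\subseteq{\cal G}_k\subseteq{\cal G}_{\tau}$''; the chain of $\sigma$-fields ${\cal G}_k\subseteq{\cal G}_{\tau}$ does not hold, but what you actually use — that $\{X_{\tau}\in B\}\cap\{\tau=n\}=\{X_n\in B\}\cap\{\tau=n\}\in{\cal G}_n$ for every $n$, hence $\{X_{\tau}\in B\}\in{\cal G}_{\tau}$ — is correct, so this is a matter of wording rather than substance.
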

\begin{proof}
Since $\tau$ is a $\mathbb G$-stopping time and due to \cite[Theorem 64, Chapter IV]{Dellacherie72}, we conclude that for any ${\cal G}_{\tau}$-measurable random variable $X$, there exists a $\mathbb G$-adapted process, $\xi^{\mathbb G}=(\xi^{\mathbb G}_n)_{n=0,1,...,T}$ such that $X=\xi^{\mathbb G}_{\tau}$. Thus, the lemma follows immediately if we prove that for any $n\in\{0,...,T\}$, and any ${\cal G}_n$-measurable random variable $X^{\mathbb G}_n$, there exists an ${\cal F}_n$-measurable random variable $ X^{\mathbb F}_n$ such that 
\begin{eqnarray}\label{equalityGnFn}
X^{\mathbb G}_n=X^{\mathbb F}_n\quad\mbox{on}\quad (\tau=n).\end{eqnarray}
Thus, on the one hand, it is clear that (\ref{equalityGnFn}) holds for random variables having the form of $X^{\mathbb G}_n=\xi^{\mathbb F}_n f(\tau\wedge1, ...,\tau\wedge n)$, where $\xi^{\mathbb F}_n$ is a bounded and ${\cal F}_n$-measurable random variable and $f$ is a bounded and Borel-measurable real-valued function on $\mathbb R^n$. On the other hand, these random variables generate the vector space of bounded  and ${\cal G}_n$-measurable random variables. Hence the fulfillment of (\ref{equalityGnFn}) for general random variables, follows from this remark and the class monotone theorem (see \cite[Theorem 21, Chapter I]{Dellacherie72}). This proves the lemma.\end{proof}
The impact of Lemma \ref{LemmaGnFn} can be noticed immediately in the discrete-time version of Theorem \ref{theo4MartingaleDecomposGeneral}, that we state below.
\begin{theorem} Let $M^{\mathbb G}$ be a $\mathbb G$-local martingale. Then there exist an $\mathbb F$-local martingale, $M^{\mathbb F}$, and an $\mathbb F$-adapted processes  $\varphi$ such that  
\begin{eqnarray}\label{ChoulliDecompoDiscrete}
M^{\mathbb G}_{n\wedge \tau}=M^{\mathbb G}_0+\sum_{k=1}^{n}{{\Delta{\cal T}_k(M^{\mathbb F})}\over{P(\tau\geq k|{\cal F}_{k-1})^2}}+\sum_{k=1}^n \varphi_k\Delta N^{\mathbb G}_k.\end{eqnarray}
\end{theorem}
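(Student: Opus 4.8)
The plan is to derive this statement as the discrete-time specialization of Theorem \ref{theo4MartingaleDecomposGeneral}, but exploiting the crucial simplification that in discrete time $\mathcal F_\tau = \mathcal G_\tau$ (Lemma \ref{LemmaGnFn}), so that the progressive-measurability component $\varphi^{(pr)}\is D$ disappears. First I would observe that the integrand $\varphi_\tau^{(pr)}$ coming from Theorem \ref{theo4MartingaleDecomposGeneral} is an $\mathbb F$-progressively measurable process evaluated at $\tau$, hence (in discrete time) a $\mathcal G_\tau$-measurable random variable; by Lemma \ref{LemmaGnFn} it equals $\xi_\tau$ for some $\mathbb F$-adapted $\xi$, and then the constraint $\E[\varphi^{(pr)}_\tau\mid\mathcal F_\tau]I_{\{\tau<\infty\}}=0$ in Theorem \ref{theo4MartingaleDecomposGeneral} forces $\varphi^{(pr)}_\tau=\xi_\tau=0$ $P$-a.s. on $\{\tau<\infty\}$, so $\varphi^{(pr)}\is D\equiv 0$. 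This collapses \eqref{MartingaleDecomposGeneral} to the two-term expression \eqref{ChoulliDecompoDiscrete}, once the operators are rewritten in their discrete forms \eqref{OpertaorTdiscrete}--\eqref{ProcessNGdiscrete} and the stochastic integrals $G_-^{-2}I_{\Lbrack0,\tau\Lbrack}\is\mathcal T(M^{\mathbb F})$ and $\varphi^{(o)}\is N^{\mathbb G}$ are interpreted as the finite sums $\sum_{k=1}^n P(\tau\ge k\mid\mathcal F_{k-1})^{-2}\Delta\mathcal T_k(M^{\mathbb F})$ and $\sum_{k=1}^n\varphi_k\Delta N^{\mathbb G}_k$ (with $\varphi:=\varphi^{(o)}$, and the extra factor $I_{\Lbrack0,\tau\Lbrack}$ being automatically absorbed since $\Delta\mathcal T_k(M^{\mathbb F})$ and $\Delta N^{\mathbb G}_k$ vanish for $k>\tau$).

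Next I would handle the integrability side. In discrete time over the finite horizon $\{0,1,\dots,T\}$ all the ``loc'' qualifiers are vacuous: every $\mathbb F$-martingale is automatically in $\mathcal M_{0,loc}(\mathbb F)$ after recentering, the space $\mathcal I^o_{loc}(N^{\mathbb G},\mathbb G)$ reduces to all $\mathbb F$-optional (equivalently, because of the predictable-projection structure, $\mathbb F$-adapted) processes, and $L^1_{loc}(\mathrm{Prog}(\mathbb F),P\otimes D)$ similarly imposes no real restriction; so the existence assertion follows verbatim from Theorem \ref{theo4MartingaleDecomposGeneral} together with the vanishing of the $\varphi^{(pr)}$-term just established. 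For uniqueness I would argue directly: if $(M^{\mathbb F},\varphi)$ and $(\widetilde M^{\mathbb F},\widetilde\varphi)$ both yield \eqref{ChoulliDecompoDiscrete}, subtract to get $\sum_{k=1}^n P(\tau\ge k\mid\mathcal F_{k-1})^{-2}\Delta(\mathcal T_k(M^{\mathbb F})-\mathcal T_k(\widetilde M^{\mathbb F}))+\sum_{k=1}^n(\varphi_k-\widetilde\varphi_k)\Delta N^{\mathbb G}_k\equiv 0$; evaluating the jump at time $k$ on $\{\tau\ge k\}$ and on $\{\tau=k-1\}$ separately and using the explicit forms \eqref{OpertaorTdiscrete}, \eqref{ProcessNGdiscrete} of $\Delta\mathcal T_k$ and $\Delta N^{\mathbb G}_k$, one isolates first $\varphi_k-\widetilde\varphi_k$ (the coefficient of $I_{\{\tau=k-1\}}$ involves only $N^{\mathbb G}$) and then $\Delta M^{\mathbb F}_k-\Delta\widetilde M^{\mathbb F}_k$, using that $P(\tau\ge k\mid\mathcal F_k)>0$ $P$-a.s.\ under the standing hypothesis $G>0$.

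The main obstacle I anticipate is not conceptual but bookkeeping: making sure the discrete rewriting of the continuous-time identity \eqref{MartingaleDecomposGeneral} is exactly \eqref{ChoulliDecompoDiscrete}, in particular that the factor $G_-^{-2}$ appearing there (note $G_{k-1}=P(\tau>k-1\mid\mathcal F_{k-1})$) matches $P(\tau\ge k\mid\mathcal F_{k-1})^{-2}$ only after the jump-time shift built into the definition of $\mathcal T$, and that the set $\{P(\tau\ge k\mid\mathcal F_k)=0\}$ governing the last sum in \eqref{OpertaorTdiscrete} is the discrete analogue of $\{\widetilde G=0<G_-\}$. I would therefore devote the core of the proof to carefully translating \eqref{processMhat}, \eqref{processNG} and the decomposition of Theorem \ref{theo4MartingaleDecomposGeneral} into their discrete counterparts \eqref{OpertaorTdiscrete}--\eqref{ProcessNGdiscrete}, after which the existence, the vanishing of $\varphi^{(pr)}\is D$ via Lemma \ref{LemmaGnFn}, and the uniqueness argument sketched above all fit together directly.
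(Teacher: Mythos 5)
Your proposal is correct and follows exactly the paper's route: the paper's own proof is the one-line statement ``The proof follows from combining Theorem \ref{theo4MartingaleDecomposGeneral} and Lemma \ref{LemmaGnFn},'' and your argument is precisely an expansion of that combination (using $\mathcal F_\tau=\mathcal G_\tau$ to conclude $\E[\varphi^{(pr)}_\tau\mid\mathcal F_\tau]=\varphi^{(pr)}_\tau$, hence $\varphi^{(pr)}\is D\equiv0$, and then translating the remaining two stochastic integrals into the discrete sums). The extra detail you supply — the discretization bookkeeping and the explicit uniqueness argument — is sound but not in the paper, which treats both as immediate.
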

\begin{proof} The proof follows from combining Theorem \ref{theo4MartingaleDecomposGeneral}  and Lemma \ref{LemmaGnFn}.
\end{proof}
Below, we state our main result in this subsection.

\begin{theorem}\label{DiscreteTimeDeflators}  Let $Z^{\mathbb G}$ be a positive and $\mathbb G$-adapted process, and ${\widetilde Q}$ be the probability measure on $\mathbb F$ given by 
 \begin{equation}\label{Qprobability}
  {\widetilde Q}:={\overline Z}_T\cdot P,\ \mbox{where}\ \ {\overline Z}_n:=\prod_{k=1}^n \left({{\widetilde G_k}\over{G_{k-1}}}I_{\{G_{k-1}>0\}}+I_{\{G_{k-1}=0\}}\right).
 \end{equation}
 Then the following assertions are equivalent.\\
 {\rm{(a)}}  $Z^{\mathbb G}$ is a deflator for $(S^{\tau}, \mathbb G)$  (i.e., $Z^{\mathbb G}\in {\cal D}(S^{\tau}, \mathbb G)$).\\
 {\rm{(b)}}   There exists a unique pair $\left(Z^{({\widetilde Q},\mathbb F)}, \varphi\right)$ such that $Z^{({\widetilde Q},\mathbb F)}\in{\cal D}(S, {\widetilde Q},\mathbb F)$, $\varphi$ is an $\mathbb F$-adapted process satisfying for all $n=0,...,T$ $P\mbox{-a.s.}$
  \begin{eqnarray}\label{ineqDiscreteTime} 
-{{P(\tau\geq n|{\cal F}_n)}\over{ P(\tau> n|{\cal F}_n)}}<\varphi_n<{{P(\tau\geq n|{\cal F}_n)}\over{P(\tau= n|{\cal F}_n)}},\quad \mbox{and}\quad Z^{\mathbb G}=(Z^{({\widetilde Q},\mathbb F)})^{\tau}Z^{(\varphi)}. \end{eqnarray}
Here $Z^{(\varphi)}$ is given by 
\begin{eqnarray}\label{repKGDiscreteTime}
\ Z^{(\varphi)}_t:=\prod_{n=1}^t \left(1+\varphi_n{{P(\tau>n|{\cal F}_n)}\over{P(\tau\geq n|{\cal F}_n)}}I_{\{\tau=n\}}-\varphi_n{{P(\tau=n|{\cal F}_n)}\over{P(\tau\geq n|{\cal F}_n)}}I_{\{\tau> n\}}\right).\end{eqnarray}
\end{theorem}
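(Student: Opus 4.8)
\textbf{Proof plan for Theorem \ref{DiscreteTimeDeflators}.}
The plan is to specialize the general continuous-time parametrization of Theorem \ref{GeneralDeflators}-(c) to the discrete-time setting, and then absorb the ``survival correction'' $1/{\cal E}(G_-^{-1}\is m)^\tau$ together with the $\varphi^{(o)}\is N^{\mathbb G}$ factor into a single change of probability on $(S,\mathbb F)$. First I would translate the ingredients: in discrete time $D^{o,\mathbb F}$ is the $\mathbb F$-compensator of $D$, so $\Delta D^{o,\mathbb F}_n = P(\tau=n|{\cal F}_n)-P(\tau=n|{\cal F}_{n-1})$ adjusted appropriately, and the identities $G_{n-1}=\widetilde G_n + \Delta m_n$, $\widetilde G_n = P(\tau\ge n|{\cal F}_n)$, $G_n = P(\tau>n|{\cal F}_n)$ hold. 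Crucially, by Lemma \ref{LemmaGnFn} the $\sigma$-fields ${\cal F}_\tau$ and ${\cal G}_\tau$ coincide, so the constraint $\E[\varphi^{(pr)}_\tau\mid{\cal F}_\tau]I_{\{\tau<\infty\}}=0$ forces $\varphi^{(pr)}\is D\equiv 0$ (exactly as in fact 1) of the proof of Theorem \ref{AvoidStoppingAdditive}). Hence the factor ${\cal E}(\varphi^{(pr)}\is D)$ disappears and the representation (\ref{repKGMultiGEneral}) collapses to $Z^{\mathbb G}=(Z^{\mathbb F})^\tau {\cal E}(\varphi^{(o)}\is N^{\mathbb G})/{\cal E}(G_-^{-1}\is m)^\tau$ with a single parameter $\varphi:=\varphi^{(o)}$ (renamed), subject to the single-sided inequalities in (\ref{ineqMultiGeneral1}) which, written out with the discrete-time values of $\widetilde G$ and $G$, become precisely (\ref{ineqDiscreteTime}).

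Next I would identify the process $\overline Z$. The claim is that ${\cal E}(G_-^{-1}\is m)^\tau$, when restricted by stopping at $\tau$, equals $(\overline Z_\cdot)^\tau$ up to the $\mathbb F$-martingale change of measure; more precisely, $1/{\cal E}(G_-^{-1}\is m)^{\tau}$ times a $\mathbb G$-localmartingale factor produces a genuine density. Here the key computation is that $\Delta(G_-^{-1}\is m)_n = \Delta m_n/G_{n-1} = (\widetilde G_n - G_{n-1})/G_{n-1}$ on $\{G_{n-1}>0\}$, so $1+\Delta(G_-^{-1}\is m)_n = \widetilde G_n/G_{n-1}$, and therefore ${\cal E}(G_-^{-1}\is m)_n = \prod_{k\le n}(\widetilde G_k/G_{k-1})$ on $\{G_{n-1}>0\}$ — which is exactly $\overline Z_n$. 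Thus ${\overline Z}_T\cdot P = \widetilde Q$ defines an $\mathbb F$-probability (one checks $\overline Z$ is a positive $\mathbb F$-martingale using $\E[\widetilde G_k\mid{\cal F}_{k-1}]=\E[I_{\{\tau\ge k\}}\mid{\cal F}_{k-1}]=\E[I_{\{\tau> k-1\}}\mid{\cal F}_{k-1}]=G_{k-1}$), and by Bayes' rule ${\cal E}(K^{\mathbb F})$ is a $P$-deflator for $(S,\mathbb F)$ iff $Z^{({\widetilde Q},\mathbb F)}:={\cal E}(K^{\mathbb F})/\overline Z$ is a $\widetilde Q$-deflator for $(S,\mathbb F)$. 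Dividing (\ref{repKGMultiGEneral}) through by $\overline Z^\tau$ and pushing the remaining $\mathbb G$-local-martingale factor ${\cal E}(\varphi\is N^{\mathbb G})$ into the definition of $Z^{(\varphi)}$ yields exactly (\ref{repKGDiscreteTime}) after computing $\Delta N^{\mathbb G}_n = I_{\{\tau=n\}} - I_{\{\tau\ge n\}}P(\tau=n|{\cal F}_k)/P(\tau\ge n|{\cal F}_k)$ from (\ref{ProcessNGdiscrete}) and regrouping the $I_{\{\tau=n\}}$ and $I_{\{\tau>n\}}$ terms.

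The remaining work is to verify that the transformation $(Z^{\mathbb F},\varphi^{(o)})\mapsto (Z^{({\widetilde Q},\mathbb F)},\varphi)$ is a bijection preserving the constraints, i.e. that nothing is lost or double-counted. Uniqueness follows from the uniqueness in Theorem \ref{GeneralDeflators}-(c) combined with the uniqueness in the discrete-time martingale decomposition (\ref{ChoulliDecompoDiscrete}); existence is the construction just described, running in reverse when we are given assertion (b). One should also check that the inequalities in (\ref{ineqDiscreteTime}) are stated in the ``right'' discrete-time normalization: since in discrete time $\widetilde G_n = P(\tau\ge n|{\cal F}_n)$ and $G_n=P(\tau>n|{\cal F}_n)$ and $\widetilde G_n - G_n = P(\tau=n|{\cal F}_n)$, the bounds $-\widetilde G/G<\varphi^{(o)}<\widetilde G/(\widetilde G-G)$ become $-P(\tau\ge n|{\cal F}_n)/P(\tau>n|{\cal F}_n)<\varphi_n<P(\tau\ge n|{\cal F}_n)/P(\tau=n|{\cal F}_n)$, as stated (the reparametrization absorbing a $1+\Delta K^{\mathbb F}$-type factor must be tracked carefully here, and $P\otimes D$-a.e.\ versus $P$-a.s.\ for each $n$ must be reconciled using that $\{\tau=n\}$ carries the relevant mass).

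\textbf{Main obstacle.} The delicate point is the bookkeeping in the change of measure: showing cleanly that ${\cal E}(G_-^{-1}\is m)^\tau$ coincides with the stopped $\mathbb F$-martingale $\overline Z^\tau$ and that factoring it out turns a $P$-deflator for $(S,\mathbb F)$ into a $\widetilde Q$-deflator — including checking $\overline Z>0$ (which needs $\widetilde G_k>0$ whenever $G_{k-1}>0$, guaranteed since $\{\widetilde G=0<G_-\}$ behaviour is controlled) and that the supermartingale property transfers correctly under the Girsanov-type change. Equally fiddly is extracting the exact form (\ref{repKGDiscreteTime}) of $Z^{(\varphi)}$ from ${\cal E}(\varphi\is N^{\mathbb G})$: one must carefully separate the jump $\Delta N^{\mathbb G}_n$ into its behaviour on $\{\tau=n\}$ (where it contributes $P(\tau>n|{\cal F}_n)/P(\tau\ge n|{\cal F}_n)$ after accounting for the predictable compensator evaluated at the atom) versus $\{\tau>n\}$ (where only the compensator term survives), and confirm this matches the displayed product.
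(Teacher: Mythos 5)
Your plan matches the paper's own (very terse) proof: the paper likewise specializes Theorem \ref{GeneralDeflators}, identifies $\overline Z$ as the discrete-time version of ${\cal E}(G_-^{-1}\is m)$, uses the equivalence ``$X$ is a $\widetilde Q$-supermartingale iff $\overline Z X$ is a $P$-supermartingale'', and observes that the discrete-time stochastic exponential ${\cal E}(\varphi\is N^{\mathbb G})$ is exactly $Z^{(\varphi)}$. The one step you spell out more explicitly (and correctly) than the paper is the disappearance of the ${\cal E}(\varphi^{(pr)}\is D)$ factor via Lemma \ref{LemmaGnFn}, which the paper leaves implicit in its discrete-time martingale representation (\ref{ChoulliDecompoDiscrete}).
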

\begin{proof} We start this proof by making the following three remarks:\\
1) It is easy to check that (see also \cite{CD1} for details and related results) the process $\overline Z$ is an $\mathbb F$-martingale and hence $\widetilde Q$ is a well defined probability measure. Furthermore, the process $\overline Z$ is the discrete-time version of the process ${\cal E}(G_{-}^{-1}I_{\{G_{-}>0\}}\is m)$ (which is well defined even in the case where $G$ might vanish) see \cite[Subsection 2.3]{Kardaras2010}.\\
2) It is clear that $X$ is a $\widetilde Q$-supermartingale iff $Y:={\overline Z}X$ is a supermartingale.\\
3) Thanks to (\ref{ProcessNGdiscrete}), the discrete-time version of ${\cal E}(\varphi\is N^{\mathbb G})$ coincides with $Z^{(\varphi)}$ given in (\ref{repKGDiscreteTime}), for any $\mathbb F$-optional process $\varphi$. \\
Then by combining these three remarks and Theorem \ref{GeneralDeflators}, the proof of the theorem follows immediately.
\end{proof}
\section{NFLVR  for market models stopped at ${\tau}$} \label{Section4NFLVR}
  This section discusses an important application of Section \ref{SectionOfDeflators} to  the stability of {\bf no free lunch with vanishing risk}  concept (NFLVR hereafter) and its variant under stopping with $\tau$. To this end, we recall that $(L^{\infty}(P), \Vert\cdot\Vert_{\infty})$ is the set of bounded random variables endowed with its norm, we put 
  \begin{eqnarray}\label{Linfinity(X)}
  L_{\infty}(X, \mathbb H):=\left\{ H\in L(X, \mathbb H):\  \lim_{t\longrightarrow+\infty} (H\is X)_t\ \mbox{exists almost surely}\right\},\end{eqnarray}
and we recall the mathematical definition of NFLVR and related concepts.
  \begin{definition}\label{NFLVRDefinition} Let $(X, \mathbb H)$ be a market model, where $X$ is the discounted assets' prices, and $\mathbb H$ is the filtration. \\
  {\rm{(a)}} The model  $(X, \mathbb H)$ is said to satisfy NFLVR if there is no sequence of pair $(H_n, \alpha_n)\in L_{\infty}(X, \mathbb H)\times[0,+\infty)$ satisfying $H_n\is X\geq -\alpha_n$, $(H_n\is X)_{\infty}$ converges in probability to a random variable, $f$, with values in $[0,+\infty]$ and $P(f>0)>0$, and $\Vert\max(0,- (H_n\is X)_{\infty})\Vert_{\infty}$ converges to zero.\\
   {\rm{(b)}} A martingale density for the model  $(X, \mathbb H)$ is any local martingale deflator $Z$ for this model  (i.e., $Z\in {\cal Z}_{loc}(X, \mathbb H)$) that is a uniformly integrable martingale. The set of all martingale densities is denoted by $ {\cal Z}(X, \mathbb H)$.
      \end{definition}
 It is clear that for any $T\in (0,+\infty)$ and $\mathbb H$-semimartinagle $X$, we have $ L_{\infty}(X^T, \mathbb H)= L(X^T, \mathbb H)$, and the limit at infinity in this case is irrelevant. For the first definition of NFLVR, we refer the reader to \cite{DelbaenSchachermayer94,DelbaenSchachermayer99} and the references therein. In fact, our Definition \ref{NFLVRDefinition}-(a) is exactly \cite[Corollary 3.7]{DelbaenSchachermayer94}.  On the one hand, thanks to  \cite[Theorem 1.1]{DelbaenSchachermayer99}, it is known that $(X, \mathbb H)$ satisfies NFLVR iff it admits a martingale density (i.e.,  ${\cal Z}(X, \mathbb H)\not=\emptyset$) . On the other hand, in virtue of Theorem \ref{LocalMartingaleDeflator}, all martingale densities --when they exist-- take the form of (\ref{repKG1}) or equivalently  the multiplicative form of (\ref{repKGMulti}). Therefore, the NFLVR concept for $(S^{\tau},\mathbb G)$ boils down to single out whether there exist uniformly integrable martingales among the local martingale deflators.\\
 
The following theorem states our results on NFLVR for the models $(S^{\tau\wedge T}, \mathbb G)$ and $(S^{\tau}, \mathbb G)$, where  $T\in (0,+\infty)$ is a fixed horizon. To this end, through the rest of the paper, as $P(\tau=+\infty)>0$ in general, for any process $X$ having the limit at infinity $P$-almost surely, we put 
$$X_{\tau}=X_{\infty}:=\displaystyle\lim_{t\longrightarrow+\infty} X_t\quad \mbox{on}\quad (\tau=+\infty).$$
  \begin{theorem}\label{NFLVR}
  Suppose that $G>0$. Then the following assertions hold.\\
  {\rm{(a)}} For any fixed horizon $T\in (0,+\infty)$,  the model $(S^{\tau\wedge T},\mathbb G)$ satisfies NFLVR as soon as the model $(S^T, \mathbb F)$ does. Furthermore, for any  $Z^{\mathbb F}\in{\cal Z}(S^T, \mathbb F)$,  we have $(Z^{\mathbb F})^{T\wedge\tau}/{\cal E}(G_{-}^{-1}\is m)^{T\wedge\tau}\in {\cal Z}(S^{\tau\wedge T}, \mathbb G)$.\\
   {\rm{(b)}} For any bounded $(\varphi^{(o)},\varphi^{(pr)})\in {\cal I}^o_{loc}(N^{\mathbb G},\mathbb G)\times  L^1_{loc}({\rm{Prog}}(\mathbb F),P\otimes dD)$ such that $E[\varphi^{(pr)}_{\tau}\big|{\cal F}_{\tau}]I_{\{\tau<+\infty\}}=0$ $P$-a.s. and 
  \begin{eqnarray}\varphi^{(pr)}>-1,\quad 0\leq \varphi^{(o)},\quad  \varphi^{(o)}(\widetilde G-G)<\widetilde G,\quad  P\otimes dD\mbox{-a.e.},\label{Condition4pairPhi}\end{eqnarray}
  and any $Z\in {\cal Z}(S^T, \mathbb F)$, the process 
  $${{Z^{\tau\wedge T}}\over{{\cal E}\left(G_{-}^{-1}\is m\right)^{\tau\wedge T}}}{\cal E}(\varphi^{(o)}\is N^{\mathbb G})^T{\cal E}(\varphi^{(pr)}\is D)^T\quad\mbox{belongs to}\quad {\cal Z}(S^{\tau\wedge T},\mathbb G).$$
   {\rm{(c)}} If 
   \begin{eqnarray}\label{NSC4NFLVR}
   \biggl\{ {\cal E}_{\infty}\left(-{G_{-}}^{-1}\is m\right)=0\biggr\}\subset\biggl\{{\cal E}_{\infty}\left(-{\widetilde G}^{-1}\is D^{o,\mathbb F}\right)=0\biggr\},\quad P\mbox{-a.s.,}\end{eqnarray}
   then $(X^{\tau}, \mathbb G)$ satisfies NFLVR, for any model $(X, \mathbb F)$ satisfying NFLVR. Furthermore, under (\ref{NSC4NFLVR}), for any $Z\in  {\cal Z}(X, \mathbb F)$ and any pair $(\varphi^{(0)}, \varphi^{(pr)})$ fulfilling the conditions of  assertion (b), we have 
   \begin{eqnarray}\label{SetZ(S, tau)}
 {{Z^{\tau}{\cal E}(\varphi^{(o)}\is N^{\mathbb G})}\over{{\cal E}\left(G_{-}^{-1}\is m\right)^{\tau}}}{\cal E}(\varphi^{(pr)}\is D)\quad\mbox{belongs to}\quad {\cal Z}(X^{\tau}, \mathbb G).\end{eqnarray}
  {\rm{(d)}}  Suppose that 
  \begin{eqnarray}\label{NSC4NFLVRbis}
{\cal E}_{\infty}\left(-{\widetilde G}^{-1}\is D^{o,\mathbb F}\right)=0,\quad P\mbox{-a.s.}.\end{eqnarray}
Then  $(X^{\tau}, \mathbb G)$ satisfies NFLVR, for any model $(X, \mathbb F)$ satisfying NFLVR. 
    \end{theorem}
Thanks to Lemma \ref{Gdecomposition}-(c), all quantities in Theorem \ref{NFLVR} are well defined. The rest of this section is divided into three subsections. The first subsection gives a detailed proof for Theorem \ref{NFLVR}, and outlines some intermediate results.  The second subsection discusses the novelties of Theorem \ref{NFLVR} and/or shows how it is consistent with the existing literature. The last subsection deals with the stability of a variant of NFLVR under stopping with $\tau$.
\subsection{Proof of Theorem \ref{NFLVR} and other related results}
In this subsection, we address the proof of Theorem \ref{NFLVR}. To this end, we start by remarking that assertion (a) follows directly from assertion (b) by putting $\varphi^{(o)}=\varphi^{(pr)}\equiv 0$.  Assertion (d) follows immediately from assertion (c), as in general, (\ref{NSC4NFLVRbis}) clearly implies (\ref{NSC4NFLVR}). Furthermore, when $\tau<+\infty$ $P$-a.s., the two conditions (\ref{NSC4NFLVR}) and (\ref{NSC4NFLVRbis}) are equivalent, and this fact is a direct consequence of Lemma \ref{Gdecomposition}-(e).
Thus, the remaining part of this subsection deals with the proofs of assertions (b) and (c) of  Theorem \ref{NFLVR}.  Surprisingly, we  could not deduce assertion (b) from assertion (c) by letting $T$ to be infinite, and both assertions require separate proofs. However, an important part of these proofs relies essentially on the following proposition that is interesting in itself. 
    \begin{proposition}\label{Proposition4NFLVR} The following assertions hold.\\
      {\rm{(a)}} For any $N\in {\cal M}(\mathbb F)$, the process $N^{\tau}/{\cal E}(G_{-}^{-1}\is m)^{\tau}$ belongs to ${\cal M}(\mathbb G)$. In particular $1/{\cal E}(G_{-}^{-1}\is m)^{\tau}$ is a $\mathbb G$-martingale and for any $T\in (0,+\infty)$,
     \begin{eqnarray}\label{Qtilde}
  \widetilde Q_T:={\widetilde Z}_T\cdot P\quad\rm{where}\quad  {\widetilde Z}:={\cal E}\left(-G_{-}^{-1}\is {\cal T}(m)\right)=1/{\cal E}(G_{-}^{-1}\is m)^{\tau}
  \end{eqnarray}
    is a well defined probability measure on $(\Omega, {\cal G}_T)$.\\
      {\rm{(b)}} Let $N$be a uniformly integrable $\mathbb F$-martingale with $N_{\infty}>0$ $P$-a.s.. Then $N/{\cal E}(G_{-}^{-1}\is m)^{\tau}$ is a uniformly integrable $\mathbb G$-martingale if and only if (\ref{NSC4NFLVR}) holds, or equivalently $P$-a.s. we have 
      \begin{eqnarray}\label{NSC4UniformIntegrability} 
      \left\{ {\cal E}_{\infty}(-{1\over{G_{-}}}\is m)=0\right\}\subset\left\{ \int_0^{\infty} {{I_{\{\widetilde G_s=G_s\}}}\over{G_s}} dD^{o,\mathbb F}_s+\sum_{s>0}\ln({{{\widetilde G}_s}\over{G_s}})=\infty\right\}.
      \end{eqnarray}
      In this case, $\widetilde Q:={\widetilde Z}_{\infty}\cdot P$ is a well defined probability measure on $(\Omega, {\cal F})$.
     \end{proposition}
  The proof of the proposition is given jointly with the proof of Theorem \ref{NFLVR}, and are based on the following key lemma. 
  \begin{lemma}\label{keylemma}
  If $N$ is a uniformly integrable $\mathbb F$-martingale, then the nondecreasing process $-\vert N\vert \is {\cal E}(-{\widetilde G}^{-1}\is D^{o,\mathbb F})$ is integrable (i.e., $$E\left[\int_0^{\infty} -\vert N_s\vert d{\cal E}_s\left(-{\widetilde G}^{-1}\is D^{o,\mathbb F}\right)\right]<\infty)$$ and ${\cal E}_{-}(-{\widetilde G}^{-1}\is D^{o,\mathbb F})\is N$ is a uniformly integrable $\mathbb F$-martingale.
  \end{lemma}
  For the sake of simplicity, the proof of this lemma is relegated to Appendix \ref{SectionC}. Below, we prove Theorem \ref{NFLVR}  and Proposition \ref{Proposition4NFLVR}.
 \begin{proof}{\it of Theorem \ref{NFLVR} and Proposition \ref{Proposition4NFLVR}:} This proof is divided into three parts. The first part proves Proposition \ref{Proposition4NFLVR}-(a) and Theorem \ref{NFLVR}-(a), while Theorem \ref{NFLVR}-(b) is proved in the second part. The third part addresses Proposition \ref{Proposition4NFLVR}-(b) and Theorem \ref{NFLVR}-(c).\\
 {\bf Part 1.} Remark that  the proof of Theorem \ref{NFLVR}-(a) follows from combining Proposition  \ref{Proposition4NFLVR}-(a) and Corollary \ref{caseofGmartingaleDensities}. Furthermore, the last statement of Proposition  \ref{Proposition4NFLVR}-(a) follows immediately from the first statement of this assertion by considering $N\equiv 1$. Thus, the rest of this part focuses on proving the first statement of Proposition \ref{Proposition4NFLVR}-(a). To this end, we consider $N\in {\cal M}(\mathbb F, P)$, and for any bounded $\mathbb F$-stopping time $\sigma$ we derive 
  \begin{eqnarray}
    &&E\left[{{ N_{\sigma\wedge\tau}}\over{{\cal E}_{\sigma\wedge\tau}\left(G_{-}^{-1}\is m\right)}}\right]\nonumber\\
    &&= E\left[{{ N_{\sigma}}\over{{\cal E}_{\sigma}\left(G_{-}^{-1}\is m\right)}}I_{\{\tau>{\sigma}\}}\right]+ E\left[{{ N_{\tau}}\over{{\cal E}_{\tau}\left(G_{-}^{-1}\is m\right)}}I_{\{0<\tau\leq {\sigma}\}}+N_0 I_{\{\tau=0\}}\right]\nonumber\\
  &&= E\left[{{ N_{\sigma}}\over{{\cal E}_{\sigma}\left(G_{-}^{-1}\is m\right)}}G_{\sigma}+N_0(1-G_0)\right]+ E\left[\int_0^{\sigma}{{ N_{s}}\over{{\cal E}_{s}\left(G_{-}^{-1}\is m\right)}}\delta_{\tau}(ds)\right]\nonumber\\
  &&=E\left[N_0+G_0\left(N_{\sigma}{\cal E}_{\sigma}\left(- {1\over{\widetilde G}}\is  D^{o,\mathbb F}\right)-N_0\right)+ \int_0^{\sigma}{{ N_s}\over{{\cal E}_s\left(G_{-}^{-1}\is m\right)}} dD^{o,\mathbb F}_s\right]\nonumber\\
   &&=E[N_0] + E\left[G_0\int_0^{\sigma}{\cal E}_{s-}\left(-  {\widetilde G}^{-1}\is  D^{o,\mathbb F}\right)dN_s\right]\nonumber\\
   &&-E\left[\int_0^{\sigma} {{G_0}\over{ {\widetilde G}_s}}N_s{\cal E}_{s-}\left(- {1\over{\widetilde G}}\is  D^{o,\mathbb F}\right)dD^{o,\mathbb F}_s\right]+ E\left[\int_0^{\sigma} {{ N_s}\over{{\cal E}_s\left(G_{-}^{-1}\is m\right)}}\is D^{o,\mathbb F}_s\right]\nonumber\\
   &&= E[N_0] + E\left[\int_0^{\sigma}G_0{\cal E}_{s-}\left(- {\widetilde G}^{-1}\is  D^{o,\mathbb F}\right)dN_s\right].\label{ConstantExpectation}
    \end{eqnarray}
 The third and the last equalities, in the above equalities, follow from the following facts, which are consequences of (\ref{MultiDecompo4G}) and $\widetilde G=G+\Delta D^{o,\mathbb F}$, 
     \begin{eqnarray}\label{Grelationship2m}
\left({\cal E}\left({G_{-}}^{-1}\is m\right)\right)^{-1}={{G_0}\over{G}}{\cal E}\left(-  {1\over{\widetilde G}}\is  D^{o,\mathbb F}\right)={{G_0}\over{\widetilde G}}{\cal E}_{-}\left(-  {1\over{\widetilde G}}\is  D^{o,\mathbb F}\right).
  \end{eqnarray} 
  For any $T\in (0,+\infty)$ and any $\mathbb F$-martingale $N$, by applying Lemma \ref{keylemma} to $N^T$, we deduce that $G_0 {\cal E}_{-}\left(- {\widetilde G}^{-1}\is  D^{o,\mathbb F}\right)\is N^T$ is a uniformly integrable $\mathbb F$-martingale. Thus, by combining this latter fact with (\ref{ConstantExpectation}),  we obtain
\begin{eqnarray*}
E\left[{{ N_{\sigma\wedge\tau}}\over{{\cal E}_{\sigma\wedge\tau}\left(G_{-}^{-1}\is m\right)}}\right]= E[N_0]\ \mbox{for any bounded $\mathbb F$-stopping time $\sigma$}.\end{eqnarray*}
A combination of this with Lemma \ref{PortfolioGtoF}-(d) implies that $N^{\tau}/{\cal E}(G_{-}^{-1}\is m)^{\tau}$ is a $\mathbb G$-martingale. This proves Proposition \ref{Proposition4NFLVR}-(a), and ends part 1.\\
 {\bf Part 2.} To prove Theorem \ref{NFLVR}-(b), we consider $Z\in {\cal Z}(S^T, \mathbb F)$ and a pair of bounded processes $(\varphi^{(o)}, \varphi^{(pr)})\in {\cal I}^o_{loc}(N^{\mathbb G},\mathbb G)\times  L^1_{loc}({\rm{Prog}}(\mathbb F),P\otimes dD)$ such that  $E[\varphi^{(pr)}_{\tau}\big|{\cal F}_{\tau}]I_{\{\tau<+\infty\}}=0$ $P$-a.s.  and (\ref{Condition4pairPhi}) holds. Therefore, due to $\ln(1+x)-x\leq 0$ for any $x>-1$, we get  
\begin{eqnarray}
 0\leq{\cal E}_t(\varphi^{(pr)}\is D)&&=(1+\varphi^{(pr)}_{\tau}I_{\{\tau\leq t\}})\leq 1+ C\label{Bound1}\\
 0\leq {\cal E}_t(\varphi^{(o)}\is N^{\mathbb G})&&={\cal E}_t\left(-{{\varphi^{(o)}}\over{\widetilde G}}I_{\Lbrack0,\tau\Rbrack}\is D^{o,\mathbb F}\right)\left(1+{{G_{\tau}}\over{\widetilde G_{\tau}}}\varphi^{(o)}_{\tau}I_{\{\tau\leq t\}}\right)\nonumber\\
 &&\leq 1+C.\hskip 0.75cm \label{Bound2}
 \end{eqnarray} 
Thus, by using these inequalities and assertion (a), we conclude that the process ${\cal E}(\varphi^{(o)}\is N^{\mathbb G}){\cal E}(\varphi^{(pr)}\is D)Z^{\tau}/{\cal E}\left(G_{-}^{-1}\is m\right)^{\tau}$, which belongs to ${\cal Z}_{loc}(S^{\tau\wedge T}, \mathbb G)$, is a $\mathbb G$-martingale. This proves assertion (b) of the theorem.\\
{\bf Part 3.}  Herein, we prove both Theorem \ref{NFLVR}-(c) and Proposition \ref{Proposition4NFLVR}-(b) as they are related somehow. In fact, due to (\ref{DDequation}),  it is easy to see that the conditions (\ref{NSC4NFLVR}) and (\ref{NSC4UniformIntegrability}) are equivalent. Suppose that (\ref{NSC4NFLVR})  holds, and consider a model $(X, \mathbb F)$ satisfying the NFLVR. Then, we deduce the existence of  $Z\in {\cal Z}(X, \mathbb F)$ and $Z^{\tau}/{\cal E}(G_{-}^{-1}\is m)^{\tau}\in {\cal Z}_{loc}(X^{\tau},\mathbb G)$ due to Corollary \ref{caseofGlocalmartingale}. As $Z$ is a uniformly integrable $\mathbb F$-martingale  with $Z_{\infty}>0$ $P$-a.s., by applying Proposition \ref{Proposition4NFLVR}-(b) to $Z$, we conclude that $Z^{\tau}/{\cal E}(G_{-}^{-1}\is m)^{\tau}\in {\cal Z}(X^{\tau},\mathbb G)$. This proves the first statement of Theorem \ref{NFLVR}-(c), while the second statement follows directly from combining the first statement and the arguments in part 2, mainly (\ref{Bound1})-(\ref{Bound2}). This proves Theorem \ref{NFLVR}-(c). Thus, the rest of this part addresses Proposition \ref{Proposition4NFLVR}-(b). To this end, we consider a uniformly integrable $\mathbb F$-martingale $N$ with $N_{\infty}>0$ $P$-a.s.. Thus, due to It\^o  applied to $N{\cal E}\left(- {\widetilde G}^{-1}\is  D^{o,\mathbb F}\right)$, (\ref{Ginfinity}) and Lemma \ref{Gdecomposition}-(d), we put $\Gamma_+:=\{{\cal E}_{\infty}(G_{-}^{-1}\is m)>0\}$ and derive 
  \begin{eqnarray}
   &&E\left[{{ N_{\tau}}\over{{\cal E}_{\tau}\left(G_{-}^{-1}\is m\right)}}\right]\nonumber\\
   &&=  E\left[{{ N_{\infty}}\over{{\cal E}_{\infty}\left(G_{-}^{-1}\is m\right)}}I_{\{\tau=\infty\}}+{{ N_{\tau}}\over{{\cal E}_{\tau}\left(G_{-}^{-1}\is m\right)}}I_{\{0<\tau<+\infty\}}+N_0 I_{\{\tau=0\}}\right]\nonumber\\
  &&= E\left[N_0(1-G_0)+G_0 N_{\infty}{\cal E}_{\infty}(-{1\over{\widetilde G}}\is D^{o,\mathbb F})I_{\Gamma^+}+\int_0^{\infty}{{ N_{s}}\over{{\cal E}_{s}\left(G_{-}^{-1}\is m\right)}}\delta_{\tau}(ds)\right]\nonumber\\
  &&= E\left[N_0(1-G_0)+G_0 N_{\infty}{\cal E}_{\infty}(-{1\over{\widetilde G}}\is D^{o,\mathbb F})I_{\Gamma^+}+\int_0^{\infty}{{ N_{s}}\over{{\cal E}_{s}\left(G_{-}^{-1}\is m\right)}}dD^{o,\mathbb F}_s\right]\nonumber\\
   &&=  E\left[N_0(1-G_0)+G_0 N_{\infty}{\cal E}_{\infty}(-{1\over{\widetilde G}}\is D^{o,\mathbb F})I_{\Gamma^+}+G_0\int_0^{\infty} {{N_s }\over{\widetilde G_s}}{\cal E}_{s-}(- {1\over{\widetilde G}}\is  D^{o,\mathbb F})dD^{o,\mathbb F}_s\right]\nonumber\\
   &&= E\left[N_0(1-G_0)+G_0 N_{\infty}{\cal E}_{\infty}(-{1\over{\widetilde G}}\is D^{o,\mathbb F})I_{\Gamma^+}-G_0\int_0^{\infty} N_s d{\cal E}_{s}\left(- {\widetilde G}^{-1}\is  D^{o,\mathbb F}\right)\right]\nonumber\\
  &&=E\left[N_0- G_0N_{\infty}{\cal E}_{\infty}(-{1\over{\widetilde G}}\is D^{o,\mathbb F})I_{\Omega\setminus\Gamma^+}+ \int_0^{\infty} G_0{\cal E}_{s-}\left(-{\widetilde G}^{-1}\is D^{o,\mathbb F}\right)dN_s\right]\nonumber\\
   &&=E\left[N_0- G_0N_{\infty}{\cal E}_{\infty}\left(- {\widetilde G}^{-1}\is  D^{o,\mathbb F}\right)I_{\Omega\setminus\Gamma^+}\right].\label{martingale4tau}
   \end{eqnarray}
   The last equality is a direct consequence of Lemma \ref{keylemma}, while the fourth equality follows from (\ref{Grelationship2m}). Therefore, in virtue of (\ref{martingale4tau}), we conclude that the positive process $ N^{\tau}/{\cal E}\left(G_{-}^{-1}\is m\right)^{\tau}$ is a uniformly integrable $\mathbb G$-martingale if and only if $G_0N_{\infty}{\cal E}_{\infty}\left(- {\widetilde G}^{-1}\is  D^{o,\mathbb F}\right)I_{\Omega\setminus\Gamma^+}=0$ $P$-a.s. or equivalently (\ref{NSC4NFLVR}) holds (due to $G_0>0$ and $N_{\infty}>0$ $P$-a.s.).  This ends the proof of Theorem \ref{NFLVR} and Proposition \ref{Proposition4NFLVR}.
    \end{proof}
  \subsection{ What are the novelties of Theorem \ref{NFLVR} compared to the literature?}
  As we mentioned in the introduction, the three papers \cite{ACDJ0, CJN} and \cite{Fontana} sound to be the most advanced literature related to NFLVR under stopping with $\tau$. In \cite[Corollary 4.6]{CJN} and \cite[Lemma 3.1]{ACDJ0} (respectively in \cite[Lemma 3.3]{ACDJ0}), the authors prove that NFLVR holds for $(S^{\tau}, \mathbb G)$ as soon as the immersion assumption (respectively  the positive density hypothesis) is fulfilled. It is important to mention that these two conditions on the pair $(\tau, \mathbb F)$ are very restrictive, as both boil down to a sort of ``independence" between $\tau$ and $\mathbb F$ somehow, see \cite[Remark 4.3]{ACDJ0}. In the case of immersion, which is a a stronger case than the case when $\tau$ is a pseudo-stopping introduced in \cite{Nik2005}, we have $m\equiv m_0$ and hence (\ref{NSC4NFLVR}) holds as ${\cal E}(G^{-1}_{-}\is m)\equiv 1$. Furthermore, in this case, there is equivalence between $\tau<\infty$ $P$-a.s. and ${\cal E}_{\infty}(-{\widetilde G}^{-1}\is D^{o,\mathbb F})=0$ $P$-a.s., while in general the latter implies the former only. In contrast to \cite{CJN}, the immersion assumption is relaxed in \cite{Fontana} while the  model $(S, \mathbb F, \tau)$ satisfies the following several assumptions instead:
  \begin{eqnarray}
  &&\hskip -1.5cm S\ \mbox{is {\bf continuous} and fulfills NFLVR (and $M$ is its martingale part)},\label{Fontana1}\\
  &&\hskip -1.5cm\tau\ \mbox{avoids $\mathbb F$-stopping times,}\ i.e., P(\tau=\sigma)=0,\ \forall\ \sigma\ \mbox{finite $\mathbb F$-stopping,}\label{Fontana2}\\
  &&\hskip -1.5cm (M,\mathbb F) \mbox{ has the predictable representation property,}\label{Fontana3}\\
  &&\hskip -1.5cm \tau\ \mbox{is a honest time}, i.e., {\widetilde G}_{\tau}=1\ P\mbox{-a.s. on}\ (\tau<+\infty).\label{Fontana4}
  \end{eqnarray}
  Under these assumptions, that are vital for their analysis, the authors state in \cite[Theorem 5.1]{Fontana} that NFLVR holds for $(S^{\tau\wedge\sigma}, \mathbb G)$ iff $P(\sigma\geq R_0)=0$, where $R_0$ is defined in (\ref{Rzero}). Theorem \ref{NFLVR}-(a) is valid under $G>0$ (or equivalently $R_0=+\infty$ $P$-a.s.), and hence $P(\sigma\geq R_0)=0$ for any bounded $\mathbb F$-stopping time  $\sigma$. This shows that Theorem \ref{NFLVR}-(a) is consistent with \cite[Theorem 5.1]{Fontana}. Furthermore, in \cite[Theorem 4.1]{Fontana}, it is shown that NFLVR  fails for $(S^{\tau}, \mathbb G)$ when (\ref{Fontana1})-(\ref{Fontana2})-(\ref{Fontana3})-(\ref{Fontana4}) are enforced. Hence, up to our knowledge, there is no result on the stability of NFLVR under stopping with a possibly unbounded random horizon, except for the immersion and the positive density hypothesis cases. In this spirit, Theorem \ref{NFLVR}-(c) answers partially, but in a more general setting, the open question of whether there are random times that do not alter NFLVR after stopping?\\
  
 To explain how consistent Theorem \ref{NFLVR} with \cite[Theorem 4.1]{Fontana}, \cite{ACDJ0} and the references therein, we state the following lemma. It presents two classes of models of $\tau$ that violate assumptions of Theorem \ref{NFLVR}-(c) and (d). 
\begin{lemma}\label{LemmaNickYor} Suppose that there exists a positive and continuous $\mathbb F$-martingale $N$ (i.e., $N_t>0$ $P$-a.s. for all $t\geq 0$) such that  either 
\begin{eqnarray}\label{NickYor}
\lim_{t\longrightarrow+\infty} N_t=0,\ P\mbox{-a.s.}\quad\mbox{and}\quad G_t=\displaystyle{{N_t}\over{\displaystyle\sup_{0\leq s\leq t}N_s}},\quad\forall\ t\geq 0,\end{eqnarray}
or 
\begin{eqnarray}\label{Model1}
\lim_{t\longrightarrow+\infty} N_t=0,\ P\mbox{-a.s.}\quad\mbox{and}\quad G_t=\min(1, N_t),\quad \forall\quad t\geq 0.\end{eqnarray}
Then $\tau<+\infty$ $P$-a.s. and the condition (\ref{NSC4NFLVRbis}) always fails for these models.
\end{lemma}
The proof of this lemma is relegated to Appendix \ref{SectionC}. The model of $\tau$ given by (\ref{NickYor}), without the positivity assumption on $N$, goes back to \cite{Nik2005} where it is investigated under the assumptions (\ref{Fontana2}), (\ref{Fontana4}) and $\mathbb F$ is the Brownian filtration. The results of \cite{Fontana} were stated for this model in (\ref{NickYor}), and are based essentially on \cite{Nik2005}. Examples of random times that belong to the class of (\ref{Model1}) can be found in \cite[Example 4.5]{Fontana} and in \cite[Proposition 4.5]{ACDJ0}. This shows that our Theorem \ref{NFLVR} is consistent with the literature, and more importantly it addresses the topic in the most general setting possible. Thus, in virtue of Lemma \ref{LemmaNickYor}, Theorem \ref{NFLVR}-(c) introduces a large class of models of $\tau$ for which the NFLVR concept is stable after stopping with $\tau$ on the one hand. On the other hand, Theorem \ref{NFLVR}-(c) shows that ``classical" arbitrages might be be triggered by another factor other than the positivity of $G$ that is intimately related to the failure of NUPBR (see \cite{ACDJ1, ACDJ3}). Recall that NFLVR is violated if and only if there are classical arbitrages or NUPBR is violated, see \cite{DelbaenSchachermayer94, DelbaenSchachermayer99} for details. Thus, in our view, Section \ref{SectionOfDeflators} and Theorem \ref{NFLVR} lunch a method that deals with NFLVR fully. In fact, we believe that the following {\it conjecture} is true.
\begin{eqnarray*}
(\tau, \mathbb F) \mbox{ satisfies (\ref{NSC4NFLVR}) iff NFLVR is stable under stopping with $\tau$.}\end{eqnarray*}
 This projected full answer constitutes our future work, as it requires novel and heavy developments for $(\tau, \mathbb F)$ that are beyond our current scope.
 \subsection{A variant of NFLVR}
    In contrast to NFLVR which is not altered by an equivalent change of probability, this subsection addresses a variant of NFLVR that is very sensitive to any change of probability.
    
\begin{definition} Let $q\in (1,+\infty)$. A model $(X, \mathbb H, Q)$ satisfies $q$-NFLVR, or $X$ satisfies $q$-NFLVR($\mathbb H, Q)$, if there exists a local martingale deflator for $(X, \mathbb H, Q)$ that is a $q$-integrable $\mathbb H$-martingale, i.e., $ {\cal Z}^{(q)} (X, \mathbb H, Q)\not=\emptyset$, where
    \begin{eqnarray}\label{Zp}
    {\cal Z}^{(q)} (X, \mathbb H, Q):={\cal Z}_{loc}(X, \mathbb H, Q)\cap {\cal M}^q(\mathbb H, Q).\end{eqnarray} When $Q=P$, we simply omit the probability for simplicity.\end{definition}
For this $q$-NFLVR concept and its applications, we refer the reader to \cite{ChoulliStricker99,ChoulliStricker98,DelbaenSchachermayer96a,DelbaenSchachermayer96b, Stricker90} and the references therein to cite a few. 

    \begin{theorem}\label{pNFLVR} Suppose $G>0$. Let $T\in (0,+\infty)$ be a  fixed horizon, $q\in (1,+\infty)$, and put $\tilde{q}:=2q/(1+q)$. Then the following assertions hold.\\
  {\rm{(a)}} $S^{\tau\wedge T}$ satisfies $q$-NFLVR $(\mathbb G, \widetilde Q_T)$ as soon as $S^ T$ satisfies $q$-NFLVR $(\mathbb F, P)$, where $\widetilde Q_T$ is given by (\ref{Qtilde}). Furthermore, we have
  \begin{eqnarray}\label{Z(S, G)}
  \left\{Z^{\tau\wedge T}:\quad  Z\in {\cal Z}^{(q)}(S^T, \mathbb F)\right\}\subset {\cal Z}^{(q)}(S^{\tau\wedge T}, \mathbb G, \widetilde Q_T).\end{eqnarray}
    {\rm{(b)}} If $S^ T$ satisfies $q$-NFLVR $(\mathbb F, P)$, then $S^ {\tau\wedge T}$ satisfies $\tilde{q}$-NFLVR $(\mathbb G, P)$. More precisely, we have 
    \begin{eqnarray}\label{Zp(F,G)}
    \left\{ {{Z^{\tau\wedge T}}\over{{\cal E}(G_{-}^{-1}\is m)^{\tau\wedge T}}}:\quad Z\in {\cal Z}^{(q)}(S^T, \mathbb F)\right\}\subset {\cal Z}^{(\tilde{q})}(S^{\tau\wedge T}, \mathbb G).\end{eqnarray}
  {\rm{(c)}} Suppose there exists $\theta$, a $D^{o,\mathbb F}$-integrable and $\mathbb F$-adapted process, such that
    \begin{eqnarray}\label{SufficientCondition}
   G\theta\geq 1,\quad \mbox{and}\quad \theta{\cal E}\left(-\theta{{G}\over{\widetilde G}}\is D^{o,\mathbb F}\right)\ \mbox{is a bounded processes}.
    \end{eqnarray}
  If ${\cal Z}^{(q)}(S^T, \mathbb F)\not=\emptyset$, then ${\cal Z}^{(q)}(S^{\tau\wedge T}, \mathbb G)\not=\emptyset$.
   \end{theorem}
  
 On the one hand, it is clear that (\ref{SufficientCondition}) is fulfilled when $\tau$ is a pseudo-stopping time (i.e., $M^{\tau}\in {\cal M}(\mathbb G)$ for any $M\in {\cal M}(\mathbb F)$). Indeed, thanks to   (\ref{MultiDecompo4G}) and \cite{Nik2005}, where it is stated that $\tau$ is a pseudo-stopping time if and only if $m\equiv m_0$, we deduce that $G=G_0{\cal E}(-{\widetilde G}^{-1}\is D^{o,\mathbb F})$. Hence, by considering $\theta=G^{-1}$, we conclude that (\ref{SufficientCondition}) holds. On the other hand, the condition (\ref{SufficientCondition}) follows clearly from either 
 \begin{eqnarray}\label{Condition4G/m}
 G\geq \delta\quad \mbox{or}\quad {\cal E}(G_{-}^{-1}\is m)\geq \delta',\end{eqnarray}
 where $\delta>$ and $\delta'>0$. In fact, by taking $\theta=G^{-1}$, we deduce that $$\theta{\cal E}(-\theta G{\widetilde G}^{-1}\is D^{o,\mathbb F})=G^{-1}{\cal E}(-{\widetilde G}^{-1}\is D^{o,\mathbb F})=G_0^{-1} {\cal E}({G}_{-}^{-1}\is m)^{-1},$$ is bounded by ${\delta}^{-1}$  and by $(G_0\delta')^{-1}$ for the cases $ G\geq \delta$ and $ {\cal E}(G_{-}^{-1}\is m)\geq \delta'$ respectively. Furthermore, $ G\geq \delta$ implies that $ {\cal E}(G_{-}^{-1}\is m)\geq \delta/G_0=:\delta'$, and this proves that the weakest condition in  (\ref{Condition4G/m}) is $ {\cal E}(G_{-}^{-1}\is m)\geq \delta'$. The proof of Theorem \ref{pNFLVR}-(c) is essentially based on the following lemma.
 \begin{lemma}\label{lemma4Zp} Let $q\in (1,+\infty)$, $Z={\cal E}(M)$ be a positive $\mathbb F$-martingale that is $q$-integrable, and ${\cal X}$, ${\cal X}_+^o$ and $\Theta$ be the sets given by 
 \begin{eqnarray}\label{SetXi}
 {\cal X}&&:=\left\{(\varphi^{(o)}, \varphi^{(pr)})\in \Phi_q \ \big|\ \mbox{(\ref{Condition4Fi(pr)}) and (\ref{ineqMultiGeneral1}) hold}\right\},\hskip 1cm\\
{\cal X}_+^o&&:=\left\{\varphi^{(o)}\ \Big|\   (\varphi^{(o)}, 0)\in {\cal X},\ \varphi^{(o)}\geq 0\ P\otimes dD^{o,\mathbb F}\mbox{-a.e.}\right\},\\
\Theta&&:=\left\{\theta\ {\mathbb F}\rm{\mbox{-optional}}\ \big|\quad \theta G\geq 1\quad \mbox{and}\quad {\widetilde G}(\theta- G^{-1})\in {\cal X}_+^o\right\}.
 \end{eqnarray}
 Here $(\varphi, \psi)\in \Phi_q$ means $\varphi\in  {\cal I}_{loc}^o(N^{\mathbb G}, \mathbb G)$ such that $\varphi\is N^{\mathbb G}\in {\cal M}_{loc}^q(\mathbb G)$ and $\psi\in L^q_{loc}(\Omega\times[0,+\infty), \rm{Prog}(\mathbb F),P\otimes dD)$. Then the following inequalities hold
 \begin{eqnarray*}
&&\hskip -0.75cm \inf_{(\varphi^{(o)}, \varphi^{(pr)})\in {\cal X}}E\left[\left(Z_{\tau\wedge T}/{\cal E}_{\tau\wedge T}(G_{-}^{-1}\is m)\right)^q{\cal E}_T(\varphi^{(o)}\is N^{\mathbb G})^q{\cal E}_T(\varphi^{(pr)}\is D)^q\right]\\
 &&\hskip -0.75cm\leq 
 \inf_{\varphi^{(o)}\in {\cal X}_+^o}E\left[\left(Z_{\tau\wedge T}/{\cal E}_{\tau\wedge T}(G_{-}^{-1}\is m)\right)^q{\cal E}_T(\varphi^{(o)}\is N^{\mathbb G})^q\right]\\
 &&\hskip -0.75cm\leq \inf_{\theta\in\Theta}E\left[ \int_0^T G_0^qZ_t^q\theta_t^q{\cal E}_t(-\theta{{G}\over{\widetilde G}}\is D^{o,\mathbb F})^q dD^{o,\mathbb F}_t+G_T^{1-q}Z^q_T{\cal E}_T\left(-\theta{{G}\over{\widetilde G}}\is D^{o,\mathbb F}\right)^q \right].\end{eqnarray*}
  \end{lemma}
 The proof of the lemma is relegated to Appendix \ref{SectionC} for the sake of simple exposition, and the rest of this section focuses on proving Theorem \ref{pNFLVR}.
 \begin{proof}{\it of Theorem \ref{pNFLVR}} The proof of the theorem will be achieved in three parts, where assertions (a), (b) and (c) are proved respectively. \\
  {\bf Part 1.} Let $Z\in {\cal Z}_{loc}(S^T, \mathbb F)$. Then there exists an $\mathbb F$-predictable process $\varphi$ such that $0<\varphi\leq 1$ and $Z(\varphi\is S)^T\in {\cal M}_{loc}(\mathbb F)$. Thus, a direct application of Proposition \ref{caseofGlocalmartingale} to the pair $Z^T$ and  $Z^T(\varphi\is S)^T$ implies that $Z^{\tau\wedge T}/{\cal E}(G_{-}^{-1}\is m)^{\tau\wedge T}$ and $Z^{\tau\wedge T}(\varphi\is S^{\tau\wedge T})/{\cal E}(G_{-}^{-1}\is m)^{\tau\wedge T}$ belong to ${\cal M}_{loc}(\mathbb G)$, or equivalently $Z^{\tau\wedge T}$ and $Z^{\tau\wedge T}(\varphi\is S^{\tau\wedge T})$ belong to ${\cal M}_{loc}(\mathbb G, \widetilde Q_T)$. This proves that $Z^{\tau\wedge T}$ belongs to $ {\cal Z}_{loc}(S^{\tau\wedge T}, \mathbb G, \widetilde Q)$, for any $Z\in {\cal Z}_{loc}(S^T, \mathbb F)$ on the one hand. On the other hand, by putting $\widetilde Q:=\widetilde Q_T$, for any  $Z\in {\cal Z}^{(q)}(S^T, \mathbb F)$, we derive 
  \begin{eqnarray}
  &&E^{\widetilde Q}\left[Z_{\tau\wedge T}^q\right]=E\left[ {{Z_{\tau\wedge T}^q}\over{{\cal E}_{\tau\wedge T}(G_{-}^{-1}\is m)}} \right]\nonumber\\
  &&=E\left[ \int_0^T {{Z_t^q}\over{{\cal E}_t(G_{-}^{-1}\is m)}}dD^{o,\mathbb F}_t \right]+E\left[1-G_0+ {{Z_{ T}^qG_T}\over{{\cal E}_{T}(G_{-}^{-1}\is m)}} \right]\nonumber\\
  &&= E\left[1-G_0- G_0\int_0^T Z_t^q  d{\cal E}_{t}(-  {\widetilde G}^{-1}\is  D^{o,\mathbb F})\right]+E\left[ G_0Z_{ T}^q{\cal E}_T(-  {\widetilde G}^{-1}\is  D^{o,\mathbb F})\right]\nonumber\\
  &&=1+ E\left[ G_0\int_0^T {\cal E}_{t-}(- {\widetilde G}^{-1}\is  D^{o,\mathbb F})dZ_t^q  \right]\nonumber\\
  &&=1+q(q-1) E\left[ G_0 \int_0^T  Z_{t-}^q{\cal E}_{t-}(- {\widetilde G}^{-1}\is  D^{o,\mathbb F})d{\cal H}^{(q)}_t(M)  \right]\nonumber\\
  &&\leq 1+q(q-1)E\left[ \int_0^T  Z_{s-}^q d{\cal H}^{(p)}_s(M) \right]=E\left[Z_T^q  \right],\label{LastEqua}
 \end{eqnarray}
 where $ M:=Z_{-}^{-1}\is Z$ and  ${\cal H}^{(q)}(M)$ is defined by (\ref{Hellingerp}). The last two equalities is a direct consequences of (\ref{HellingerEquality2})-(\ref{HellingerEquality3}), while the second equality follows from (\ref{MultiDecompo4G}) and (\ref{Grelationship2m}). Therefore, (\ref{LastEqua}) proves that  $E^{\widetilde Q}\left[Z_{\tau\wedge T}^q\right]\leq E[Z^q_T]<+\infty$, and hence $Z^{\tau\wedge T}\in {\cal Z}^{(q)}\left(S^{\tau\wedge T}, \mathbb G, \widetilde Q_T\right)$. This ends the proof of assertion (a)  and that of (\ref{Z(S, G)}). \\
  {\bf Part 2.} Here, we prove assertion (b).  To this end, we consider a process $Z$ such that $Z^T\in  {\cal Z}^{(q)}(S^T, \mathbb F)$ and
   put $q_1:=2q/(1+q)\in (1,2)$, $q_2:=(q+1)/(q-1)=1/(q_1-1)$ and $q_3:=q_2/(q_2-1)$. Then we calculate
  \begin{eqnarray*}
  &&E\left[ {{Z_{\tau\wedge T}^{q_1}}\over{{\cal E}_{\tau\wedge T}(G_{-}^{-1}\is m)^{q_1}}} \right]=E^{\widetilde Q}\left[ {{Z_{\tau\wedge T}^{q_1}}\over{{\cal E}_{\tau\wedge T}(G_{-}^{-1}\is m)^{q_1-1}}} \right]\\
 && \leq \left(E^{\widetilde Q}[ Z_{\tau\wedge T}^{q_3q_1}]\right)^{1/q_3}\left(E^{\widetilde Q}[{\cal E}_{\tau\wedge T}(G_{-}^{-1}\is m)^{-1}]\right)^{1/q_2}\\
 &&= \left(E^{\widetilde Q}[ Z_{\tau\wedge T}^{q_3q_1}]\right)^{1/q_3}=\left(E^{\widetilde Q}[ Z_{\tau\wedge T}^{q}]\right)^{1/q_3}\leq \left(E[ Z_{T}^q]\right)^{1/q_3},
  \end{eqnarray*}
  where the last inequality follows from (\ref{LastEqua}). This proves (\ref{Zp(F,G)}) and ends the proof of assertion (b).\\
 {\bf Part 3.} Here, we prove assertion (c).  To this end, we start by remarking that, for any $q\in (1,+\infty)$, ${\cal Z}^{(q)}\left(S^{\tau\wedge T}, \mathbb G\right)\not=\emptyset$ if and only if 
 $$\inf_{Z^{\mathbb G}\in {\cal Z}_{loc}\left(S^{\tau\wedge T}, \mathbb G\right)}E\left[\left(Z^{\mathbb G}_{\tau\wedge T}\right)^q\right]<+\infty.$$
To prove that this latter holds, we consider $Z\in {\cal Z}^{(q)}\left(S^T, \mathbb F\right)$, and  derive 
 \begin{eqnarray*}
 &&\inf_{Z^{\mathbb G}\in {\cal Z}_{loc}\left(S^{\tau\wedge T}, \mathbb G\right)}E\left[\left(Z^{\mathbb G}_{\tau\wedge T}\right)^q\right]\\
 &&\leq  \inf_{(\varphi^{(o)}, \varphi^{(pr)})\in {\cal X}}E\left[\left(Z_{\tau\wedge T}/{\cal E}_{\tau\wedge T}(G_{-}^{-1}\is m)\right)^q{\cal E}_T(\varphi^{(o)}\is N^{\mathbb G})^q{\cal E}_T(\varphi^{(pr)}\is D)^q\right]\\
 &&\leq C^q E\left[Z_{T\wedge\tau}^q+Z_T^q\right]\leq 2C^q E\left[\sup_{0\leq t\leq T}Z_t^q\right]\leq 2\left({{qC}\over{q-1}}\right)^qE[Z^q_T]<+\infty.\end{eqnarray*}
 The first inequality is a direct consequence of Theorem \ref{LocalMartingaleDeflator}-(b), while the second inequality is a combination of Lemma \ref{lemma4Zp}, the assumption (\ref{SufficientCondition}) (i.e., $\theta{\cal E}(-\theta G{\widetilde G}^{-1}\is D^{o,\mathbb F})\leq C$),  and the fact that under the assumption (\ref{SufficientCondition}) we always have $G^{-1} {\cal E}(-\theta G{\widetilde G}^{-1}\is D^{o,\mathbb F})\leq \theta{\cal E}(-\theta G{\widetilde G}^{-1}\is D^{o,\mathbb F})\leq C$. This proves assertion (b) and the proof of the theorem is completed. 
  \end{proof}

\begin{appendices}

\section{Some $\mathbb G$-properties versus those in $\mathbb F$}
This section has three lemmas. We start by recalling a lemma, stated somehow in \cite{ACDJ1,ACDJ3}, that relates $\mathbb G$-compensators to $\mathbb F$-compensators. 

\begin{lemma}\label{lemmaV} Let $V$ be an $\mathbb F$-adapted RCLL process with finite variation. Then the following assertions hold.\\
{\rm{(a)}} If $V \in {\cal A}_{loc} ({\mathbb F})$, then we have $$(V^{\tau})^{p, {\mathbb G}}= I_{\Lbrack 0,\tau\Lbrack} G_-^{-1} \centerdot ({\widetilde G} \centerdot V)^{p, {\mathbb F}}.$$
{\rm{(b)}} Suppose that $G>0$. Then $V\in {\cal A}_{loc}(\mathbb F)$ iff $({\widetilde G}^{-1}\is V)^{\tau}\in {\cal A}_{loc}(\mathbb G)$.
 \end{lemma}
 \begin{proof}
 The proof of assertion (a) can be found in \cite[Lemma 3.1]{ACDJ1}. To prove assertion (b), we remark that $V \in {\cal A}_{loc} ({\mathbb F})$ if and only if $\mbox{Var}(V) \in {\cal A}_{loc} ({\mathbb F})$, where Var$(V)$ is the variation of $V$, and Var$({\widetilde G}^{-1}\is V^{\tau})=({\widetilde G}^{-1}\is \mbox{Var}(V))^{\tau}$. Thus, it is enough to prove assertion (b) for the case when $V$ is nondecreasing. On the one hand, the proof of $V\in {\cal A}_{loc}(\mathbb F)\Longrightarrow ({\widetilde G}^{-1}\is V)^{\tau}\in {\cal A}_{loc}(\mathbb G)$  follows immediately from \cite[Lemma 3.2]{ACDJ1} and the assumption $G>0$. On the other hand, due to \cite[Proposition B.2-(b)]{ACDJ1}, $({\widetilde G}^{-1}\is V)^{\tau}\in {\cal A}_{loc}(\mathbb G)$ induces the existence of an increasing sequence of $\mathbb F$-stopping times $(T_n)_n$ such that $\sup_n T_n\geq R_0$ $P$-a.s., where $R_0$ is given by (\ref{Rzero}), and 
 \begin{eqnarray*}
 E[V_{T_n}]=E\left[({\widetilde G}^{-1}\is V)_{\tau\wedge T_n}\right]<+\infty.\end{eqnarray*}
 Due to the assumption $G>0$ which is equivalent to $R_0=+\infty$ $P$-a.s., see Lemma \ref{Gdecomposition}, we conclude that $T_n$ increases to infinity almost surely, and hence $V\in {\cal A}_{loc}(\mathbb F)$. This ends the proof of the lemma.\end{proof}
The following second lemma is very useful in our analysis, and it has four assertions. The first and the last assertions are already established in the literature, while the second and third sound to be new to us. 
\begin{lemma}\label{PortfolioGtoF}  The following assertions hold.\\
{\rm{(a)}} For any $\mathbb G$-predictable process $\varphi^{\mathbb G}$, there exists an $\mathbb F$-predictable process $\varphi^{\mathbb F}$ such that
  $\varphi^{\mathbb G}I_{\Lbrack0,\tau\Lbrack}=\varphi^{\mathbb F}I_{\Lbrack0,\tau\Lbrack}$. Furthermore, if $\varphi^{\mathbb G}$ is bounded, then $\varphi^{\mathbb F}$ can be chosen bounded with the same constants. \\
{\rm{(b)}} Suppose that $G>0$. Then for any  bounded $\theta\in {{\Theta}(S^{\tau},\mathbb G)}$, then there exists a bounded ${\varphi}\in {\Theta}(S,\mathbb F)$ that coincides with $ {\theta} $ on $\Rbrack0,\tau\Lbrack$.\\
{\rm{(c)}}  Suppose $G>0$, and let $V^{\mathbb G}$ be a RCLL $\mathbb G$-predictable and nondecreasing process with finite values and $(V^{\mathbb G})^{\tau}=V^{\mathbb G}$. Then there exists a unique RCLL nondecreasing with finite values and $\mathbb F$-predictable process, $V$, such that $V^{\mathbb G}=V^{\tau}$. If furthermore $\Delta V^{\mathbb G}<1$, then $\Delta V<1$ holds also.\\
{\rm{(d)}} For any bounded $\mathbb G$-stopping time $\sigma^{\mathbb G}$, there exists a bounded $\mathbb F$-stopping time $\sigma^{\mathbb F}$ such that 
$
\sigma^{\mathbb G}\wedge\tau= \sigma^{\mathbb F}\wedge\tau,\ P$-a.s..
\end{lemma}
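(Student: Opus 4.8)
The plan is to establish the three assertions in turn, assertion~(a) being the workhorse for (b) and (c). For (a), I would run the classical functional monotone class argument that transfers $\mathbb G$-predictability to $\mathbb F$-predictability before $\tau$. Its only input is the standard identity $\mathcal G_s\cap\{\tau>s\}=\mathcal F_s\cap\{\tau>s\}$ (immediate from ${\cal G}^0_t={\cal F}_t\vee\sigma(D_u,u\le t)$ and the fact that $D\equiv0$ on $\Lbrack0,\tau\Lbrack$, then from ${\cal G}^0_{s+}={\cal G}_s$ by right continuity): it provides, for each bounded $\mathcal G_s$-measurable $h$, a bounded $\mathcal F_s$-measurable $\bar h$ with $h=\bar h$ on $\{\tau>s\}$. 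Hence for the $\mathbb G$-predictable generators $\varphi^{\mathbb G}=h\,I_{\Rbrack s,+\infty\Lbrack}$ the $\mathbb F$-predictable process $\bar h\,I_{\Rbrack s,+\infty\Lbrack}$ agrees with $\varphi^{\mathbb G}$ on $\Lbrack0,\tau\Lbrack$. I would then let $\mathcal C$ be the family of bounded $\mathbb G$-predictable processes admitting such an $\mathbb F$-predictable companion on $\Lbrack0,\tau\Lbrack$; it is a vector space containing the constants and these generators, and is stable under uniformly bounded monotone limits (take $\limsup_n$ of the companions), so by the functional monotone class theorem \cite[Theorem 21, Chapter I]{Dellacherie72} it is the whole class of bounded $\mathbb G$-predictable processes, and a truncation removes boundedness. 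The last claim of (a) is then clear: if $|\varphi^{\mathbb G}|\le c$ then $|\varphi^{\mathbb F}|\le c$ on $\Lbrack0,\tau\Lbrack$, so $(\varphi^{\mathbb F}\vee(-c))\wedge c$ keeps the identity and is bounded by $c$.

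For (b), let $\theta\in{\cal L}(S^{\tau},\mathbb G)$ be bounded. Assertion~(a) gives a bounded $\mathbb F$-predictable $\varphi$ equal to $\theta$ on $\Lbrack0,\tau\Lbrack$; since $\Delta S^{\tau}=\Delta S$ there and $\theta\Delta S^{\tau}>-1$, we get $\varphi\Delta S>-1$ on $\Lbrack0,\tau\Lbrack$, so the $\mathbb F$-optional set $A:=\{\varphi\Delta S\le-1\}$ is disjoint from $\Lbrack0,\tau\Lbrack$, i.e. $I_A\le I_{\Lbrack\tau,+\infty\Rbrack}$. Taking $\mathbb F$-optional projections and using $I_A={}^{o,\mathbb F}(I_A)$ together with ${}^{o,\mathbb F}(I_{\Lbrack\tau,+\infty\Rbrack})=1-G$ yields $I_A\le 1-G<1$ because $G>0$; hence $A$ is evanescent, $\varphi\in{\cal L}(S,\mathbb F)$, and $\varphi$ coincides with $\theta$ on $\Rbrack0,\tau\Lbrack\subseteq\Lbrack0,\tau\Lbrack$, which is (b).

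For (c), I would first invoke the closed-interval refinement of (a): under $G>0$ every $\mathbb G$-predictable process agrees on $\Lbrack0,\tau\Rbrack$ with an $\mathbb F$-predictable one (the predictable counterpart of \cite[Proposition B.2-(b)]{ACDJ1}, already used in the proof of Theorem~\ref{theo4MartingaleDecomposGeneral}; it is also obtained by re-running the argument of (a) keeping the endpoint, which is licit exactly because $G>0$). Applying it to $V^{\mathbb G}$ gives an $\mathbb F$-predictable $V$ with $V^{\mathbb G}=V$ on $\Lbrack0,\tau\Rbrack$; as $(V^{\mathbb G})^{\tau}=V^{\mathbb G}$, this reads $V^{\mathbb G}_t=V_{t\wedge\tau}$, i.e. $V^{\mathbb G}=V^{\tau}$. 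Since only the restriction of $V$ to $\Lbrack0,\tau\Rbrack$ is constrained, a routine modification off that interval — available because the failure of finiteness or monotonicity of $V$ can only occur strictly after $\tau$, where $V=V^{\mathbb G}$ is well behaved — turns $V$ into a finite-valued nondecreasing $\mathbb F$-predictable process without altering $V^{\tau}$; moreover $\Delta V=\Delta V^{\mathbb G}<1$ on $\Lbrack0,\tau\Rbrack$ and the modification can be chosen to add no jump, so $\Delta V<1$ everywhere. Uniqueness on $\Lbrack0,\tau\Rbrack$, the only part fixed by $V^{\mathbb G}=V^{\tau}$, is immediate, since two admissible $V,V'$ both satisfy $V_{\cdot\wedge\tau}=V^{\mathbb G}=V'_{\cdot\wedge\tau}$.

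I expect the main obstacle to be the closed-bracket upgrade of (a) needed in (c): transferring predictability from $\mathbb G$ to $\mathbb F$ is automatic on the open interval $\Lbrack0,\tau\Lbrack$ but genuinely uses $G>0$ at the endpoint $\tau$ (it is there that positivity controls the $\mathbb F$-accessible part of $\Lbrack\tau\Rbrack$); once that is granted, the monotone class bookkeeping in (a), the optional-projection trick in (b), and the re-arrangement of $V$ off $\Lbrack0,\tau\Rbrack$ in (c) are all routine.
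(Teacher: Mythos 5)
Parts (a) and (b) of your proposal match the paper's treatment: your monotone-class proof of (a) is essentially the standard proof of Jeulin's lemma, which the paper simply cites (\cite[Lemma~4.4(b)]{Jeulin1980} and \cite[Lemma~B.1]{ACDJ1}), and your argument for (b) is the paper's $\mathbb F$-optional-projection argument, merely phrased with the complementary set $A$ instead of its complement.

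For (c), however, there is a genuine gap at the step you describe as a ``routine modification'' of $V$ off the stochastic interval up to $\tau$. Two separate problems. First, the reason you give that such a modification is available --- that ``the failure of finiteness or monotonicity of $V$ can only occur strictly after $\tau$, where $V=V^{\mathbb G}$ is well behaved'' --- is incorrect: strictly after $\tau$, $V$ does \emph{not} coincide with $V^{\mathbb G}$ (which equals $V^{\tau}$, frozen at the value $V_{\tau}$), so nothing controls $V$ there, and the $\mathbb F$-predictable process delivered by the Jeulin extension can indeed be arbitrarily wild after $\tau$. Second, and more fundamentally, the interval on which $V$ is constrained is not $\mathbb F$-measurable ($\tau$ is not an $\mathbb F$-stopping time), so one cannot simply ``modify $V$ off that interval'' while preserving $\mathbb F$-predictability; any admissible modification must be constructed purely from $\mathbb F$-data and must automatically agree with $V$ up to $\tau$, and producing such a canonical finite-valued, nondecreasing, $\mathbb F$-predictable representative with the required jump bound is precisely the content of assertion (c). The paper resolves this by pinning $V$ down explicitly rather than modifying it: taking the $\mathbb F$-optional projection of the identity between $V$ and $V^{\mathbb G}$ on the interval up to $\tau$ shows that $V$ is an RCLL, $\mathbb F$-predictable semimartingale, hence $V=L+B$ with $L$ a continuous local $\mathbb F$-martingale and $B$ $\mathbb F$-predictable of finite variation; since $V^{\tau}=V^{\mathbb G}$ is $\mathbb G$-predictable of finite variation, the $\mathbb G$-local martingale $L^{\tau}-G_{-}^{-1}I_{\Lbrack0,\tau\Lbrack}\is\langle L,m\rangle^{\mathbb F}$ must vanish, and $G>0$ then forces $L\equiv 0$; next, the identity $(V^{\mathbb G})^{p,\mathbb F}=G_{-}\is V$ exhibits $V=G_{-}^{-1}\is(V^{\mathbb G})^{p,\mathbb F}$ as nondecreasing with finite values; finally $\Delta V<1$ follows by taking the $\mathbb F$-predictable projection of $I_{\Lbrack0,\tau\Lbrack}\le I_{\{\Delta V<1\}}$ and using $G_{-}>0$ (itself a consequence of $G>0$). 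Your write-up needs an argument of this type; the claim that the fix is routine is exactly where it breaks.
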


\begin{proof}  It is clear that assertion (d) can be found in \cite[XX.75 b)]{DMM}, see also \cite{Jeulin1980}. Remark that  the boundedness condition on $\varphi^{\mathbb G}$, in assertion (a), can be simplified to $0\leq \varphi^{\mathbb G}\leq 1$.  Thus, assertion (a) is a particular case of the general case treated in \cite[Lemma B.1]{ACDJ1} (see also \cite[Lemma 4.4 (b), page 63]{Jeulin1980}), hence its proof will be omitted and we refer the reader to this paper. Thus, the rest of this proof focuses on assertions (b) and (c) in two parts.\\
{\bf Part 1.} Here we prove assertion (b). Consider  a bounded ${\theta}\in {{\Theta}(S^{\tau},\mathbb G)}$. Then $\theta$ is a bounded and $\mathbb G$-predictable process satisfying ${\theta}^{tr}\Delta S^{\tau}>-1$.  Thus, in virtue of assertion (a), there exists a bounded and $\mathbb F$-predictable process $\varphi$ such that 
$$
{\theta}I_{\Rbrack0,\tau\Lbrack}=\varphi I_{\Rbrack0,\tau\Lbrack}.$$
Then by inserting this equality in ${\theta}^{tr}\Delta S^{\tau}>-1$, we deduce that 
$$\varphi^{tr}\Delta S I_{\Lbrack0,\tau\Lbrack}>-1,$$
which is equivalent to $I_{\Lbrack0,\tau\Lbrack}\leq I_{\{\varphi^{tr}\Delta S>-1\}}$. By taking the $\mathbb F$-optional projection on both sides of this inequality, we get $0<G\leq I_{\{\varphi^{tr}\Delta S>-1\}}$ on $\Lbrack0,+\infty\Rbrack$, or equivalently $\varphi^{tr}\Delta S>-1$. Hence $\varphi$ belongs to ${\Theta}(S,\mathbb F)$, and the proof of assertion (b) is complete. \\
{\bf Part 2.}  This part proves assertion (c). Consider a $\mathbb G$-predictable and nondecreasing process with finite values $V^{\mathbb G}$  such that $(V^{\mathbb G})^{\tau}=V^{\mathbb G}$. It is clear that there is no loss of generality in assuming that $V^{\mathbb G}$  is bounded. Hence,  a direct application of assertion (a) to $V^{\mathbb G}$ implies the existence of an $\mathbb F$-predictable process $V$ such that 
\begin{eqnarray}\label{VGVF}
V^{\mathbb G}I_{\Rbrack0,\tau\Lbrack}=VI_{\Rbrack0,\tau\Lbrack}.\end{eqnarray}
Remark that $V^{\mathbb G}I_{\Rbrack0,\tau\Rbrack}=V^{\mathbb G}- V^{\mathbb G}_{\tau}I_{\Rbrack\tau,+\infty\Rbrack}$ is a RCLL and bounded $\mathbb G$-semimartingale. Thus, by multiplying both sides of (\ref{VGVF}) with $I_{\Rbrack0,\tau\Rbrack}$ and taking the $\mathbb F$-optional projection afterwards, we get $V=^{o,\mathbb F}(V^{\mathbb G}I_{\Rbrack0,\tau\Rbrack})/G$. Hence, the predictable process $V$ is a RCLL $\mathbb F$-semimartingale, as it is the product of the two RCLL $\mathbb F$-semimartingales $G^{-1}$ and $^{o,\mathbb F}(V^{\mathbb G}I_{\Rbrack0,\tau\Rbrack})$ \footnote{Thanks to \cite[Th\'eor\`eme 47, p. 119 and Th\'eor\`eme 59, p. 268]{DellacherieMeyer80}, the optional projection of a bounded RCLL $\mathbb G$-semimartingale is a RCLL $\mathbb F$-semimartingale}. As a result, $V$ is a special semimartingale, and hence there exist $L\in {\cal M}_{loc} (\mathbb F)$ and an $\mathbb F$-predictable process $B\in {\cal A}_{loc}(\mathbb F)$ such that $V=L+B$ and $L_0=0$. By using this equality and by stopping both terms in (\ref{VGVF}) at $\tau$, we derive   
\begin{eqnarray}
V^{\mathbb G}&&=V^{\tau}=L^{\tau}+B^{\tau}\label{VG=Vtau}\\
&&=\left(L^{\tau}-G_{-}^{-1}I_{\Lbrack0,\tau\Lbrack}\is \langle L,m\rangle^{\mathbb F}\right)+G_{-}^{-1}I_{\Lbrack0,\tau\Lbrack}\is \langle L,m\rangle^{\mathbb F}+B^{\tau}.\nonumber
\end{eqnarray}
As $V^{\mathbb G}$ is predictable with finite variation, from this latter equality, we conclude that the $\mathbb G$-local martingale $L^{\tau}-G_{-}^{-1}I_{\Lbrack0,\tau\Lbrack}\is \langle L,m\rangle^{\mathbb F}$ is null.  Notice that $L$ is a predictable local martingale, and hence it is continuous (due to $\Delta L=\ ^{p,\mathbb F}(\Delta L)=0$, see \cite[Th\'eor\`eme 32, page 99]{DellacherieMeyer80}). By combining these remarks, we conclude that  ${\cal T}(L)=L^{\tau}-G_{-}^{-1}I_{\Lbrack0,\tau\Lbrack}\is \langle L,m\rangle^{\mathbb F}$ and $[L,L]^{\tau}=[{\cal T}(L), {\cal T}(L)]$ is a null process. This is equivalent to $G_{-}\is [L,L]=([L,L]^{\tau})^{p,\mathbb F}\equiv 0$, or equivalently $L\equiv 0$ due to $G>0$ and Lemma \ref{Gdecomposition}-(a). This proves that  $V=B$ has a finite variation. To prove that $V$ is nondecreasing it is enough to consider the $\mathbb F$-dual predictable projection in both sides of (\ref{VG=Vtau}) and get $(V^{\mathbb G})^{p,\mathbb F}=G_{-}\is V$ or equivalently $V=G_{-}^{-1}\is (V^{\mathbb G})^{p,\mathbb F}$.  This proves the first statement of assertion (c), while the proof of the last statement of assertion (c) follows the same footsteps of part 1. Indeed, by using (\ref{VG=Vtau}), we conclude that $\Delta V^{\mathbb G}=\Delta V I_{\Lbrack0,\tau\Lbrack}<1$ holds if and only if  $I_{\Lbrack0,\tau\Lbrack}\leq I_{\{\Delta V<1\}}$ holds. This latter fact implies that, after taking the $\mathbb F$-predictable projection on both sides of this inequality, $0<G_{-}\leq  I_{\{\Delta V<1\}}$ on $\Lbrack0,+\infty\Rbrack$, or equivalently $\Delta V<1$. The condition $G_{-}>0$ follows from the assumption $G>0$ due to Lemma \ref{Gdecomposition}-(a). This ends the proof of the lemma.
\end{proof}
The following is the last lemma of this section. It is useful in simplifying the proof(s) of Section \ref{SectionOfDeflators}, and sounds important in itself. 
\begin{lemma}\label{Lemma4Wprocess} Let  $\varphi$ be a bounded and $\mathbb F$-predictable process, $L\in {\cal M}_{loc}(\mathbb F)$, $(\varphi^{(o)}, \varphi^{(pr)}) \in {\cal I}^o_{loc}(N^{\mathbb G},\mathbb G)\times L^1_{loc}({\rm{Prog}}(\mathbb F),P\otimes dD)$ satisfying (\ref{Condition4Fi(pr)}) and 
\begin{eqnarray}\label{Positivity1}
\left(1+{{\Delta L}\over{G_{-}\widetilde G}}\right)I_{\Lbrack0,\tau\Lbrack}+\varphi^{(o)}\Delta N^{\mathbb G}+\varphi^{(pr)}\Delta D >0.\end{eqnarray}
 Then the process
$$
W:=\sum \varphi\Delta S I_{\{\vert \Delta S\vert>1\}}\left[\left(1+{{\Delta L}\over{G_{-}\widetilde G}}\right)I_{\Lbrack0,\tau\Lbrack}+\varphi^{(o)}\Delta N^{\mathbb G}+\varphi^{(pr)}\Delta D \right]$$
has a $\mathbb G$-locally integrable variation if and only if both processes
\begin{eqnarray*}
W^{(1)}&&:=\sum \varphi\Delta S I_{\{\vert \Delta S\vert>1\}}(1+{{\Delta L}\over{G_{-}\widetilde G}})I_{\Lbrack0,\tau\Lbrack}\quad\mbox{and}\\
W^{(2)}&&:=\sum \varphi\Delta S I_{\{\vert \Delta S\vert>1\}}[\varphi^{(o)}\Delta N^{\mathbb G}+\varphi^{(pr)}\Delta D ]
\end{eqnarray*}
belong to $ {\cal A}_{loc}(\mathbb G)$.
\end{lemma}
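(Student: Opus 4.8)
The plan is to exploit the fact that the two jump sets supporting $W^{(1)}$ and $W^{(2)}$ are essentially disjoint once one accounts for the structure of $N^{\mathbb G}$ and $D$, so that cancellation between $W^{(1)}$ and $W^{(2)}$ cannot occur. Concretely, recall from (\ref{processNG}) that $\Delta N^{\mathbb G} = \Delta D - \widetilde G^{-1} I_{\Lbrack 0,\tau\Lbrack}\Delta D^{o,\mathbb F}$, so $\Delta N^{\mathbb G}$ has a jump on $\Rbrack\tau,+\infty\Rbrack$ (coming from $\Delta D$) and a purely $\mathbb F$-predictable jump on $\Lbrack 0,\tau\Lbrack$ (coming from $\widetilde G^{-1}\Delta D^{o,\mathbb F}$), while $\Delta D$ is supported on $\Lbrack\tau\Rbrack$ only. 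Hence $W^{(2)}$ lives on $\Lbrack 0,\tau\Lbrack \cup \Lbrack\tau\Rbrack = \Lbrack 0,\tau\Rbrack$, and on $\Lbrack 0,\tau\Lbrack$ the summand of $W^{(2)}$ is $\varphi\Delta S\, \varphi^{(o)}(-\widetilde G^{-1}\Delta D^{o,\mathbb F}) I_{\{|\Delta S|>1\}}$, whereas $W^{(1)}$ is supported on $\Lbrack 0,\tau\Lbrack$ with summand $\varphi\Delta S(1+\Delta N^{\mathbb F}/(G_-\widetilde G)) I_{\{|\Delta S|>1\}}$.

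The first step is the trivial direction: if both $W^{(1)}$ and $W^{(2)}$ are in ${\cal A}_{loc}(\mathbb G)$, then $W = W^{(1)}+W^{(2)}$ is in ${\cal A}_{loc}(\mathbb G)$ by additivity of the class of processes with $\mathbb G$-locally integrable variation. For the converse — the substantive direction — the plan is to split $W$ according to whether the jump occurs strictly after $\tau$ or not. On $\Rbrack\tau,+\infty\Rbrack$ we have $I_{\Lbrack 0,\tau\Lbrack}=0$ and $\Delta N^{\mathbb G}=\Delta D=1$ at $\tau$'s time (more precisely, the only jump of $D$ is at $\tau$), so $W$ restricted to $\Rbrack\tau,+\infty\Rbrack$ equals $\varphi\Delta S(\varphi^{(o)}+\varphi^{(pr)}) I_{\{|\Delta S|>1\}}I_{\Rbrack\tau,+\infty\Rbrack}$, which is exactly the part of $W^{(2)}$ supported there; call it $\widehat W$. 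On $\Lbrack 0,\tau\Lbrack$, $W$ equals $W^{(1)} + W^{(2),c}$ where $W^{(2),c}:=\sum\varphi\Delta S I_{\{|\Delta S|>1\}}\varphi^{(o)}(-\widetilde G^{-1}\Delta D^{o,\mathbb F})I_{\Lbrack 0,\tau\Lbrack}$. Now I would use the key positivity hypothesis (\ref{Positivity1}): on $\Lbrack 0,\tau\Lbrack$ the bracket $1+\Delta N^{\mathbb F}/(G_-\widetilde G) - \varphi^{(o)}\widetilde G^{-1}\Delta D^{o,\mathbb F}$ is strictly positive, hence (since $\varphi\Delta S I_{\{|\Delta S|>1\}}$ keeps a constant sign on each atom of the jump) the two summands $W^{(1)}$ and $W^{(2),c}$ contribute with controlled signs — more precisely, their absolute variations satisfy $|\Delta W^{(1)}| \le |\Delta W| + |\Delta W^{(2),c}|$ and one does not fully dominate the other. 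The clean way to make this rigorous is: because the bracket is positive, on $\Lbrack 0,\tau\Lbrack$ we have $|\varphi\Delta S I_{\{|\Delta S|>1\}}|\cdot(1+\Delta N^{\mathbb F}/(G_-\widetilde G)) \le |\Delta W| + |\varphi\Delta S I_{\{|\Delta S|>1\}}\varphi^{(o)}\widetilde G^{-1}\Delta D^{o,\mathbb F}|$, and the last term is handled by the hypothesis $\varphi^{(o)}\in{\cal I}^o_{loc}(N^{\mathbb G},\mathbb G)$ together with Lemma \ref{lemmaV}, which guarantees $|\varphi^{(o)}| G\widetilde G^{-1}I_{\{\widetilde G>0\}}\is D\in{\cal A}^+_{loc}(\mathbb G)$ and hence (after projection) that $\varphi^{(o)}\is N^{\mathbb G}$ and its associated finite-variation terms are $\mathbb G$-locally integrable. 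This shows $W^{(1)}\in{\cal A}_{loc}(\mathbb G)$, and then $W^{(2)} = W - W^{(1)}\in{\cal A}_{loc}(\mathbb G)$ as well.

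The main obstacle I anticipate is handling the overlap/cancellation carefully: a priori $\Delta W^{(1)}$ and $\Delta W^{(2)}$ could have opposite signs at a common jump time in $\Lbrack 0,\tau\Lbrack$ (both are driven by the same $\Delta D^{o,\mathbb F}$ jumps of the $\mathbb F$-predictable bracket), so one cannot blindly say $|\Delta W|=|\Delta W^{(1)}|+|\Delta W^{(2)}|$. The positivity condition (\ref{Positivity1}) is precisely what rules out the bad cancellation — it forces $1+\Delta N^{\mathbb F}/(G_-\widetilde G)$ and $1+\Delta N^{\mathbb F}/(G_-\widetilde G)-\varphi^{(o)}\widetilde G^{-1}\Delta D^{o,\mathbb F}$ to be positive simultaneously, so the two pieces on $\Lbrack 0,\tau\Lbrack$ differ by the $\varphi^{(o)}$-term whose variation is already known to be $\mathbb G$-locally integrable by membership in ${\cal I}^o_{loc}(N^{\mathbb G},\mathbb G)$. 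Once that decomposition of the jump magnitudes is pinned down, the rest is a routine application of the definition of ${\cal A}_{loc}(\mathbb G)$ via a localizing sequence of $\mathbb G$-stopping times and Lemma \ref{lemmaV} for the compensators.
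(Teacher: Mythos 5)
There is a genuine gap in the substantive direction (showing $W\in{\cal A}_{loc}(\mathbb G)$ forces $W^{(1)}\in{\cal A}_{loc}(\mathbb G)$). Your plan is to bound the variation of $W^{(1)}$ pointwise on $\Lbrack 0,\tau\Lbrack$ by that of $W$ plus the term
$\sum \vert\varphi\Delta S\vert I_{\{\vert\Delta S\vert>1\}}\vert\varphi^{(o)}\vert\,{\widetilde G}^{-1}\Delta D^{o,\mathbb F}I_{\Lbrack 0,\tau\Lbrack}$, and then to declare the latter locally integrable because $\varphi^{(o)}\in{\cal I}^o_{loc}(N^{\mathbb G},\mathbb G)$. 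But the definition (\ref{SpaceLNG}) only controls $\vert\varphi^{(o)}\vert\,G\,{\widetilde G}^{-1}I_{\{\widetilde G>0\}}\is D$; it says nothing about that process multiplied through by the \emph{unbounded} optional factor $\vert\varphi\Delta S\vert I_{\{\vert\Delta S\vert>1\}}$. Membership of $\varphi^{(o)}$ in ${\cal I}^o_{loc}(N^{\mathbb G},\mathbb G)$ does not survive multiplication by an unbounded process, so your ``last term is handled by the hypothesis'' step does not go through, and the argument stalls precisely where the cancellation danger you flagged becomes acute.

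The paper handles this by a different and more delicate mechanism. First, the strict positivity (\ref{Positivity1}) lets it replace $\varphi\Delta S$ by $\vert\varphi\Delta S\vert$, so that $W\in{\cal A}_{loc}(\mathbb G)$ is equivalent to $W^+:=\sum\vert\varphi\Delta S\vert I_{\{\vert\Delta S\vert>1\}}[\,\cdot\,]\in{\cal A}^+_{loc}(\mathbb G)$; it then localizes to make $E[W^+_\infty]<+\infty$. The crucial move, which your proposal misses, is to \emph{truncate the jump size}: for each fixed $k$, the processes $\vert\varphi\Delta S\vert I_{\{1<\vert\Delta S\vert\leq k\}}\varphi^{(o)}\is N^{\mathbb G}$ and $\vert\varphi\Delta S\vert I_{\{1<\vert\Delta S\vert\leq k\}}\varphi^{(pr)}\is D$ \emph{are} genuine $\mathbb G$-local martingales, because $\vert\varphi\Delta S\vert I_{\{1<\vert\Delta S\vert\leq k\}}\leq k$, so the integrands stay in ${\cal I}^o_{loc}$ and $L^1_{loc}$. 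Taking expectations kills those two pieces, leaving $E$ of the truncated $W^{(1)}$-part bounded by $E[W^+_\infty]$, and $k\to\infty$ via monotone convergence gives $W^{(1)}\in{\cal A}_{loc}(\mathbb G)$, whence $W^{(2)}=W-W^{(1)}\in{\cal A}_{loc}(\mathbb G)$. This truncation-plus-martingale-cancellation is the idea your proposal needs and does not contain; without it, the unboundedness of $\varphi\Delta S$ is fatal.

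Two smaller points. First, with the paper's conventions $\Rbrack=[\![$, $\Lbrack=]\!]$, the interval $\Lbrack 0,\tau\Lbrack$ \emph{contains} the graph of $\tau$, so $I_{\Lbrack 0,\tau\Lbrack}$ does not vanish on $\Rbrack\tau,+\infty\Rbrack$ at time $\tau$, and $\Delta N^{\mathbb G}_\tau=1-{\widetilde G}_\tau^{-1}\Delta D^{o,\mathbb F}_\tau$, not $1$; your splitting of $W$ at $\tau$ therefore misattributes part of the jump there. Second, (\ref{Positivity1}) by itself forces positivity of the full bracket, not of $1+\Delta N^{\mathbb F}/(G_-\widetilde G)$ alone, so your claim that it forces both quantities positive ``simultaneously'' is not supported by the hypothesis as stated.
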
 

\begin{proof} Due to (\ref{Positivity1}),  it is clear that  $W\in {\cal A}_{loc}(\mathbb G)$ iff 
$$W^+:=\sum \vert\varphi\Delta S\vert I_{\{\vert \Delta S\vert>1\}}[(1+{{\Delta L}\over{G_{-}\widetilde G}})I_{\Lbrack0,\tau\Lbrack}+\varphi^{(o)}\Delta N^{\mathbb G}+\varphi^{(pr)}\Delta D ]\in {\cal A}_{loc}(\mathbb G).$$
By stopping, there is no loss of generality to assume that $E[W^+_{\infty}]<+\infty$. Thus, due to the boundedness of the $\mathbb F$-optinal process $\varphi\Delta S I_{\{k\geq\vert \Delta S\vert>1\}}$ and $(\varphi^{(o)}, \varphi^{(pr)}) \in {\cal I}^o_{loc}(N^{\mathbb G},\mathbb G)\times L^1_{loc}({\rm{Prog}}(\mathbb F),P\otimes dD)$, we conclude that $$\vert\varphi\Delta S\vert I_{\{k\geq\vert \Delta S\vert>1\}}\left(\varphi^{(o)},\varphi^{(pr)}\right)\in{\cal I}^o_{loc}(N^{\mathbb G},\mathbb G)\times L^1_{loc}({\rm{Prog}}(\mathbb F),P\otimes dD).$$
Hence, both $\sum \vert\varphi\Delta S\vert I_{\{k\geq\vert \Delta S\vert>1\}}\varphi^{(o)}\Delta N^{\mathbb G} =\vert\varphi\Delta S\vert I_{\{k\geq\vert \Delta S\vert>1\}}\varphi^{(o)}\is N^{\mathbb G} $ and $\sum \vert\varphi\Delta S\vert I_{\{k\geq\vert \Delta S\vert>1\}}\varphi^{(pr)}\Delta D=\vert\varphi\Delta S\vert I_{\{k\geq\vert \Delta S\vert>1\}}\varphi^{(pr)}\is D$ are $\mathbb G$-local martingales. Then we consider a sequence of $\mathbb G$-stopping times $(T_n)_n$ that increases to infinity such that both $(\varphi^{(o)}\is N^{\mathbb G})^{T_n}$ and $(\varphi^{(pr)}\is D)^{T_n}$ are uniformly integrable $\mathbb G$-martingales, and we derive 
\begin{eqnarray*}
&&E\left[\sum \vert\varphi\Delta S\vert I_{\{\vert \Delta S\vert>1\}}[(1+{{\Delta L}\over{G_{-}\widetilde G}})I_{\Lbrack0,\tau\Lbrack}\right]\\
&&= \lim_{k, n\longrightarrow+\infty}E\left[\sum \vert\varphi\Delta S\vert I_{\{1<\vert \Delta S\vert\leq k\}}(1+{{\Delta L}\over{G_{-}\widetilde G}})I_{\Lbrack0,\tau\wedge T_n\Lbrack}\right] \\
&&=\lim_{k, n\longrightarrow+\infty}E[(I_{\{\vert\Delta S\vert\leq k\}}\is W^+)_{T_n}]\leq E[W^+_{\infty}]<+\infty.\end{eqnarray*}
This proves that  $W^{(1)}\in {\cal A}_{loc}(\mathbb G)$, and hence $W^{(2)}=W-W^{(1)}\in {\cal A}_{loc}(\mathbb G)$. Thus, the proof of the lemma is complete.
\end{proof}
\section{ Proofs of Lemma \ref{Gdecomposition} and Proposition \ref{GeneralSupDeflators}}\label{AppB}
\begin{proof}{\it of  Lemma \ref{Gdecomposition}:}
It is clear that, in virtue of $\inf(\emptyset)=+\infty$ by convention, the equivalences between $G>0$, $G_{-}>0$ and $\widetilde G>0$ follow immediately from the equalities in (\ref{Rzero}) (see Lemma \ref{Gdecomposition}). Furthermore, these inequalities in (\ref{Rzero}) go back to \cite[Theorem 14, Chapter XX, p. 134]{DMM}. This proves assertion (a). To prove assertion (b), we suppose $G>0$, and  we conclude that both $G_{-}$ and $\widetilde G$ are positive (due to assertion (a)) and derive
  \begin{eqnarray}\label{SDE4G}
  {{dG}\over{G_{-}}}=  {{dm}\over{G_{-}}}-  {{dD^{o,\mathbb F}}\over{G_{-}}},\quad G_0:=P(\tau>0\big| {\cal F}_0).
  \end{eqnarray}
  Then, in virtue of (\ref{DDequation}), we obtain
  \begin{eqnarray*}
  G=G_0{\cal E}\left( {1\over{G_{-}}}\is m-  {1\over{G_{-}}}\is D^{o,\mathbb F}\right).\end{eqnarray*}
  Thanks to $\Delta m={\widetilde G}-G_{-}$ and (\ref{YorFormula}) (which implies ${\cal E}(X+V)={\cal E}(X){\cal E}((1+\Delta X)^{-1}\is V)$ for any semimartingale $X$ satisfying $1+\Delta X>0$ and any RCLL process with finite variation $V$), the proof of assertion (b) follows.\\
 Thanks to \cite[Th\'eor\`eme 6, Chapitre VI, p.79]{DellacherieMeyer80}, that claims that any nonnegative RCLL supermatingale admits the limit at infinity $P$-almost surely, we deduce that the first claim of assertion (c) always holds. The equality (\ref{Ginfinity}) follows from this claim and (\ref{MultiDecompo4G}). As $G_{\infty}$ is ${\cal F}_{\infty-}$-measurable due to (\ref{Ginfinity}), where ${\cal F}_{\infty-}:=\sigma(\cup_{t\geq 0}{\cal F}_t)$, we derive 
$$P\left((\tau=\infty)\cap(G_{\infty}=0)\right)=E\left[G_{\infty}I_{\{G_{\infty}=0\}}\right]=0.$$ This proves that $(\tau=+\infty)\subset(G_{\infty}>0)$, while the equality in (\ref{inclusion4G}) is straightforward due to $G_0>0$ and (\ref{Ginfinity}). This proves assertion (d). Assertion (e) is a direct consequence of $E[G_{\infty}]=P(\tau=+\infty)$ and the equality in (\ref{inclusion4G}). This ends the proof of the lemma. \end{proof}   
 The proof of  Proposition \ref{GeneralSupDeflators} is based on the following remark
 \begin{remark}\label{MDS}
 If $Z$ is a deflator for $(X,\mathbb H)$, then there exists a unique pair $(N, V)\in {\cal M}_{loc}(\mathbb H)\times {\cal A}_{loc}(\mathbb H)$ such that $V$ is nondecreasing $\mathbb H$-predictable, and  (\ref{MultiDecompoDeflator})  holds. \end{remark}
 In fact when $Z$ is a deflator for $(X,\mathbb H)$, it is clear that $Z$ is a positive supermartingale, as $\varphi=0\in {\Theta}(X,\mathbb H)$. Hence, a direct application  of \cite[Theorem 8.21 page 138]{JS03}, see also \cite{Azema} and \cite[Proposition 1.32, page 15]{AJ}, we conclude the existence of the pair $(N, V)$ described in the remark.\\

\begin{proof}{\it of  Proposition \ref{GeneralSupDeflators}:}  This proof will be divided into two steps. The first step shows that for a process $Z$, for which there exists a pair $(N,V)$ satisfying (\ref{MultiDecompoDeflator}), there is equivalence between $Z{\cal E}(\varphi\is X)$ being supermartingale and (\ref{deflatorAssumptions}) for any bounded $\varphi\in{\Theta}(X,\mathbb H)$. Thus, it is clear that the combination of this with Remark \ref{MDS} proves assertion (a), while assertion (b) will be proved in the second step. \\
{\bf Step 1.} Suppose that there exists a pair $(N,V)$ such that $Z=Z_0{\cal E}(N){\cal E}(-V)$ and  (\ref{MultiDecompoDeflator})  holds.  Let $\varphi$ be a bounded element of ${\Theta}(X,\mathbb H)$. Then by applying Yor's formula repeatedly, see (\ref{YorFormula}),  one get 
   \begin{eqnarray*}
Z{\cal E}(\varphi \is X) &&= Z_0 {\cal E}(N) {\cal E}(\varphi \is X) {\cal E}( - V) =  Z_0 {\cal E}\Bigl(N +\varphi \is X + \varphi \is [ X, N]\Bigr) {\cal E}( - V) \\
&&=  Z_0 {\cal E}(Y^{(\varphi)} ) {\cal E}(-V)= Z_0 {\cal E}\left( Y^{(\varphi)}  - V - [ Y^{(\varphi)} , V]\right) \\
  &&= Z_0 {\cal E}\left( ( 1 - \Delta V) \is Y^{(\varphi)}  - V\right),
 \end{eqnarray*}
  where $Y^{(\varphi)}:= N +\varphi \is X + \varphi \is [ X, N].$ As $Z$  is positive and $\varphi\in{\Theta}(X,\mathbb H)$ and in virtue of the above equality, we deduce that $Z{\cal E}(\varphi \is X) $ is an ${\mathbb H}$-supermartingale if and only if  $( 1 - \Delta V) \is Y^{(\varphi)} - V$ is a local ${\mathbb H}$-supermartingale, or equivalently $Y^{(\varphi)}$ is a special semimartingale (which is equivalent to the first condition of (\ref{deflatorAssumptions})) and its predictable with finite variation part, $A^{(\varphi,N,\mathbb H)}$, satisfies $A^{(\varphi,N,\mathbb H)}\preceq (1-\Delta V)^{-1}\is V$. This finishes the first step. \\
  {\bf Step 2.}  On the one hand, thanks to Remark \ref{MDS}, when $Z$ is a local martingale deflator, then there exists a unique $N\in {\cal M}_{loc}(\mathbb H)$, with $N_0=0$, for which the first and the second conditions in (\ref{MartingaleDeflator1}) hold. On the other hand, similar calculations to those of step 1 applied to $Z(\varphi\is X)$ by putting $V\equiv 0$ and substituting $\preceq$ by $=$, for $\varphi$ predictable and $0<\varphi\leq 1$, we conclude that $Z_0{\cal E}(N)(\varphi\is X)$ is a local martingale if and only if both the last condition in (\ref{MartingaleDeflator1}) and (\ref{MartingaleDeflator2}) hold. This proves assertion (b), and completes the proof of the proposition.
\end{proof}
\section{Proof of Lemmas \ref{keylemma}, \ref{LemmaNickYor} and \ref{lemma4Zp}}\label{SectionC}
We start by proving Lemma \ref{keylemma}. 
\begin{proof}{\it of Lemma \ref{keylemma}}: As $N$ is a uniformly integrable $\mathbb F$-martingale, then there exists $N_{\infty}\in L^1({\cal F}_{\infty-}, P)$ such that $ N={^{o,\mathbb F}}(N_{\infty})$. Hence, we derive 

\begin{eqnarray*}
E\left[\int_0^{\infty} -\vert N_s\vert d{\cal E}_s(-{1\over{\widetilde G}}\is D^{o,\mathbb F})\right]&&\leq E\left[\int_0^{\infty} -E\left[\vert N_{\infty}\vert \big|{\cal F}_s\right]d{\cal E}_s(-{1\over{\widetilde G}}\is D^{o,\mathbb F})\right]\\
&&= E\left[\vert N_{\infty}\vert (1-{\cal E}_{\infty}(-{\widetilde G}^{-1}\is D^{o,\mathbb F}))\right]<\infty. \end{eqnarray*}
This proves the first statement of the lemma. To prove the second statement, we use the integration by part formula and the facts that  ${\cal E}(-{\widetilde G}^{-1}\is  D^{o,\mathbb F})\leq 1$ and $\vert N\vert \leq {^{o,\mathbb F}}(\vert N_{\infty}\vert)$, and write 
\begin{eqnarray*}
 \left\vert \int_0^t {\cal E}_{s-}(-{1\over{\widetilde G}}\is  D^{o,\mathbb F})dN_s\right\vert\leq E\left[\vert N_{\infty}\vert +\vert N_0\vert-\int_0^{\infty}\vert N_s\vert d{\cal E}_{s}(- {1\over{\widetilde G}}\is  D^{o,\mathbb F})\Big|{\cal F}_t\right].
\end{eqnarray*}
By combining this with the first statement, the proof of the lemma follows.
\end{proof}
\begin{proof}{\it of Lemma \ref{LemmaNickYor}:}
1) Suppose that the pair $(\tau, \mathbb F)$ satisfies (\ref{NickYor}). Without loss of generality, one can assume $N_0=1$ in this case, and derive that $0<G\leq N$. Thus, due to the first condition in (\ref{NickYor}), we get $G_{\infty}=0$ $P$-a.s., which is equivalent to $\tau<+\infty$ $P$-a.s. in virtue of Lemma \ref{Gdecomposition}. On the one hand, thanks to the continuity of $N$ and the uniqueness of the canonical decomposition of $G$, we get 
 \begin{eqnarray*}
 {\cal E}_t(G_{-}^{-1}\is m)=N_t\quad\mbox{and}\quad {\cal E}_t(-{\widetilde G}^{-1}\is D^{o,\mathbb F})=\left(\sup_{0\leq s\leq t} N_s\right)^{-1}.\end{eqnarray*}
  On the other hand, due to the first condition in (\ref{NickYor}) again, it is clear that $\displaystyle\sup_{t\geq 0} N_t<+\infty$ $P$-a.s., and hence (\ref{NSC4NFLVRbis}) fails.\\
  2) Suppose that (\ref{Model1}) holds. Hence, again it is clear that in this case we have $0<G\leq N$. Thus, due to the first condition in (\ref{Model1}), we get $G_{\infty}=0$ $P$-a.s., which is equivalent to $\tau<+\infty$ $P$-a.s., see Lemma \ref{Gdecomposition}-(e) for details.  Thanks to Tanaka's formula, see \cite[Theorem 1.2, Chapter VI]{Yorbook}, and the fact that $\min(a, b)=b-(a-b)^-$, we derive 
  \begin{eqnarray*}
  G_t=G_0+\int_0^t I_{\{N_s\leq 1\}}d N_s-{1\over{2}} L^1_t, \end{eqnarray*}
  where $L^1$ is the local time of $N$ in $1$. Then the continuity of $G$ and $N$, and the uniqueness of the canonical decomposition of $G$ lead to
  \begin{eqnarray*}
  m=G_0+I_{\{N\leq 1\}}\is N\quad\mbox{and}\quad D^{o,\mathbb F}={1\over{2}} L^1. 
  \end{eqnarray*}
 It is also clear that, due to the continuity of $G$, we have $G={\widetilde G}=G_{-}$. Furthermore, thanks to \cite[Proposition 1.3, Chapter VI]{Yorbook}, the nondecreasing process $L^1$ is almost surely carried by $\{t\geq 0:\ N_t=1\}$, and hence we get 
  \begin{eqnarray*}
  {\cal E}(-{\widetilde G}^{-1}\is D^{o,\mathbb F})=\exp\left(-{1\over{2}} L^1\right).
  \end{eqnarray*}
  A combination of \cite[Theorem 1.2, Chapter VI]{Yorbook} and Fatou, leads to 
  \begin{eqnarray*}
  E[L^1_{\infty}]\leq \lim_{t\longrightarrow+\infty} E[L^1_t]\leq E[N_0+1]<+\infty.\end{eqnarray*}
   This implies that $L^1_{\infty}<+\infty$ $P$-a.s., or equivalently ${\cal E}_{\infty}(-{\widetilde G}^{-1}\is D^{o,\mathbb F})>0$ $P$-a.s.. This proves that (\ref{NSC4NFLVRbis}) fails for the model in (\ref{Model1}), and the proof of the lemma is completed.
 \end{proof}
The rest of this section proves Lemma  \ref{lemma4Zp}.

\begin{proof}{\it of Lemma  \ref{lemma4Zp}:} Consider $Z={\cal E}(M)$ a $q$-integrable $\mathbb F$-martingale, and $T\in (0,+\infty)$, and hence $Z^T\in {\cal M}^{(q)}(\mathbb F)$. As ${\cal X}^o_+\times \{0\}\subset {\cal X}$, the first inequality of the lemma is obvious. Let $\varphi^{(o)}\in {\cal X}_+^o$, and remark that $\theta:=G^{-1}+\varphi^{(o)}/{\widetilde G}\in\Theta$ if and only if  $\varphi^{(o)}\in {\cal X}_+^o$. Furthermore, by combining (\ref{YorFormula}) , $[D^{o,\mathbb F}, D^{o,\mathbb F}]=\Delta D^{o,\mathbb F}\is D^{o,\mathbb F}$ and $\Delta D^{o,\mathbb F}=\widetilde G -G$ we derive
\begin{eqnarray}\label{varphi00}
&&{\cal E}(-{{\varphi^{(o)}}\over{\widetilde G}}I_{\Lbrack0,\tau\Rbrack}\is D^{o,\mathbb F}){\cal E}(-{1\over{\widetilde G}}I_{\Lbrack0,\tau\Rbrack}\is D^{o,\mathbb F})\nonumber\\
&&={\cal E}\left( -{{\varphi^{(o)}}\over{\widetilde G}}I_{\Lbrack0,\tau\Rbrack}\is D^{o,\mathbb F}-{1\over{\widetilde G}}I_{\Lbrack0,\tau\Rbrack}\is D^{o,\mathbb F}+{{\varphi^{(o)}}\over{{\widetilde G}^2}}I_{\Lbrack0,\tau\Rbrack}\is [D^{o,\mathbb F},D^{o,\mathbb F}]\right)\nonumber\\
&&={\cal E}\left( {{-{\widetilde G}\varphi^{(o)}-{\widetilde G}+\varphi^{(o)}\Delta D^{o,\mathbb F}}\over{{\widetilde G}^2}}I_{\Lbrack0,\tau\Rbrack}\is D^{o,\mathbb F}\right)\nonumber\\
&&={\cal E}(-{{\theta G}\over{\widetilde G}}I_{\Lbrack0,\tau\Rbrack}\is D^{o,\mathbb F}),
 \end{eqnarray}
 and due to direct calculations we obtain 
 \begin{eqnarray}\label{varphi01}
{\cal E}(\varphi^{(o)}\is N^{\mathbb G})^q&&\hskip -0.5cm={\cal E}(-\varphi^{(o)}{\widetilde G}^{-1}I_{\Lbrack0,\tau\Rbrack}\is D^{o,\mathbb F})^q(1+\varphi^{(o)}{{G}\over{\widetilde G}}D)^q\nonumber\\
&&\hskip -0.5cm={\cal E}\left(-{{\theta G}\over{\widetilde G}}I_{\Lbrack0,\tau\Rbrack}\is D^{o,\mathbb F}\right)^q{{(\theta G)^q D+1-D}\over{{\cal E}(-{\widetilde G}^{-1}I_{\Lbrack0,\tau\Rbrack}\is D^{o,\mathbb F})^q}}.\hskip 0.5cm
 \end{eqnarray}
By combining (\ref{varphi00}), (\ref{varphi01}) and ${\cal E}(G_{-}^{-1}\is m)^{-1}={{G_0}\over{G}}{\cal E}(-{\widetilde G}^{-1}\is D^{o,\mathbb F})$ given by (\ref{Grelationship2m}), we get  
 \begin{eqnarray*}
&&\left({{Z_{\tau\wedge T}}\over{{\cal E}_{\tau\wedge T}(G_{-}^{-1}\is m)}}\right)^q{\cal E}_T(\varphi^{(o)}\is N^{\mathbb G})^q\\
&&\hskip -0.65cm=G_0^q Z_{\tau}^q{\cal E}_{\tau-}(- {{\theta G}\over{\widetilde G}}\is D^{o,\mathbb F})^q\left({{\theta_{\tau} G_{\tau} }\over{\widetilde G_{\tau} }} \right)^qI_{\{\tau\leq T\}}+G_0^q{{Z_T^q}\over{G_T^q}}{\cal E}_{T}(-{{\theta G}\over{\widetilde G}}\is D^{o,\mathbb F})^q I_{\{\tau> T\}}.
 \end{eqnarray*} 
Therefore, by taking expectation on both sides of this equation, we obtain
 \begin{eqnarray*}
 &&E\left[\left(Z_{\tau\wedge T}/{\cal E}_{\tau\wedge T}(G_{-}^{-1}\is m)\right)^q{\cal E}_T(\varphi^{(o)}\is N^{\mathbb G})^q\right]\\
 &&=1+E\left[\int_0^T G_0^q Z_s^q{\cal E}_{s-}\left(-{{\theta G}\over{\widetilde G}}\is D^{o,\mathbb F}\right)^q\left({{\theta_s G_s}\over{\widetilde G_s}}\right)^q dD^{o,\mathbb F}_s\right]\\
 &&+ E\left[G_0^q Z_T^q(G_T)^{1-q}{\cal E}_{T}(-{{\theta G}\over{\widetilde G}}\is D^{o,\mathbb F})^q\right]
  \end{eqnarray*}
  This ends the proof of the lemma. \end{proof}
  \section{A lemma on Hellinger processes}This section extends slightly some results of \cite{Choulli2007,Choulli2009}, on Hellinger processes. These results are useful for the proof of Theorem \ref{pNFLVR}.
\begin{lemma}\label{Choulli2007}
For any $q\in (1,+\infty)$ and any $M\in {\cal M}_{loc}(\mathbb H, P)$ such that $1+\Delta M\geq 0$, we consider the following process
\begin{eqnarray}\label{Hellingerp}
{\cal H}^{(q)}(M):={{1}\over{2}}\langle M^c\rangle +\sum\left[(1+\Delta M)^q-1-q\Delta M\right]/q(q-1).\end{eqnarray}
Then the following assertions hold.\\
{\rm{(a)}} ${\cal H}^{(q)}(M)$ is a RCLL and nondecreasing process with finite values, and ${\cal H}^{(q)}(M)\in {\cal A}_{loc}^+(\mathbb H, P)$ if and only if $M\in {\cal M}_{loc}^{(q)}(\mathbb H, P)$.\\
{\rm{(b)}} The following equalities always hold. 
\begin{eqnarray}
&&{\cal E}(M)^q=1+q{\cal E}_{-}(M)^q\is M +q(q-1){\cal E}_{-}(M)^q\is {\cal H}^{(q)}(M), \label{HellingerEquality1}\\
&&E\left[{\cal E}_T(M)^q\right]=1+q(q-1)E\left[\int_0^T {\cal E}_{s-}(M)^qd {\cal H}_s^{(q)}(M)\right],\label{HellingerEquality2}\\
&&E\left[(h\is {\cal E}(M)^q)_T\right]=q(q-1)E\left[\int_0^T h_s{\cal E}_{s-}(M)^qd {\cal H}_s^{(q)}(M)\right],\label{HellingerEquality3}\end{eqnarray}
for any nonnegative and bounded $\mathbb H$-predictable process $h$. \end{lemma} 
\begin{proof} The proof of  assertion (a) and (\ref{HellingerEquality1}) can be found in \cite[Proposition 3.3 and Proposition 3.8]{Choulli2007} and their proofs, see also \cite[Proposition 1]{Choulli2009} and its proof. The equality (\ref{HellingerEquality2}), when $M\in {\cal M}^{(q)}(\mathbb H, P)$ is also borrowed from these references, while (\ref{HellingerEquality2}) is a direct consequence of (\ref{HellingerEquality3}) by taking $h=1$ for the general case of $M\in {\cal M}_{loc}(\mathbb H, P)$ with $1+\Delta M\geq 0$. Thus, herein, we prove that (\ref{HellingerEquality3}) still holds for this general  case. To this end, we consider a nonnegative and bounded $\mathbb H$-predictable process $h$ and a sequence of stopping times  $(T_n)_{n\geq 1}$ that increases to infinity such that, for all $n\geq 1$, $\left({\cal E}_{-}(M)^q\is M\right)^{T_n}$ is a martingale satisfying $E\left[\sup_{0\leq s\leq T_n}\vert ({\cal E}_{-}(M)^q\is M)_s
\vert\right]<+\infty$. As a result, we deduce that $h{\cal E}_{-}(M)^q\is M^{T_n}$ is a uniformly integrable martingale. Thus, by combining this with  (\ref{HellingerEquality1}), we derive \begin{eqnarray*}
E\left[(h\is {\cal E}(M)^q)_{T\wedge T_n}\right]&&=q(q-1)E\left[\int_0^{T\wedge T_n}h_s {\cal E}_{s-}(M)^qd {\cal H}_s^{(q)}(M)\right],\end{eqnarray*}
Hence, a combination of this with $E\left[(h\is {\cal E}(M)^q)_T\right]=\sup_n E\left[(h\is {\cal E}(M)^q)_{T\wedge T_n}\right]$ --which always holds as soon as  ${\cal E}(M)^q$ is a local submartingale--  and the class monotone theorem, we deduce that (\ref{HellingerEquality3})  follows. This ends the proof of the lemma.\end{proof}
  \end{appendices}

\section*{Acknowledgements}
\noindent  This research is fully supported by the
Natural Sciences and Engineering Research Council of Canada, nGrant NSERC RGPIN-2019-04779. \\ 
\noindent The authors would like to thank  Ferdoos Alharbi, Safa' Alsheyab, Jun Deng, Youri Kabanov, Michele Vanmalele  and the participants of the Bachelier Coloquim 2020, for several comments/suggestions, fruitful discussions on the topic, and/or for providing important and useful related references.\\
\noindent The authors are very grateful to an anonymous referee for the careful reading, important suggestions, and pertinent comments that helped improving the paper. 





\end{document}